\def\@fnsymbol#1{\ensuremath{\ifcase#1\or \dagger\or \ddagger\or
   \mathsection\or \mathparagraph\or \|\or **\or \dagger\dagger
   \or \ddagger\ddagger \else\@ctrerr\fi}}
\newcommand*\samethanks[1][\value{footnote}]{\footnotemark[#1]}
\definecolor{myblue}{rgb}{0.15, 0.1, 0.95}
\definecolor{mygreen}{rgb}{0.15, 0.65, 0.25}
\definecolor{myred}{rgb}{0.75, 0.25, 0.15}
\newcommand*\Let[2]{\State #1 $\gets$ #2}
\algrenewcommand\algorithmicrequire{\textbf{Precondition:}}
\newtheorem{theorem}{Theorem}[section]
\newtheorem{lemma}[theorem]{Lemma}
\newtheorem{corollary}[theorem]{Corollary}
\theoremstyle{definition}
\newtheorem{definition}[theorem]{Definition}
\newtheorem{example}[theorem]{Example}
\providecommand{\Prob}[2]{\ensuremath{{\Pr_{#1}\!\left[#2\right]}}}
\providecommand{\Ex}[2]{\ensuremath{{\mathbb{E}_{#1}\!\left[#2\right]}}}
\renewcommand{\bar}[1]{\mkern 2.5mu\overline{\mkern-2.5mu#1\mkern-2.5mu}\mkern 2.5mu}
\newcommand{\mathsc}[1]{{\normalfont\textsc{#1}}}
\newcommand{\lin}{\mathsc{Lindahl}}
\newcommand{\nw}{\mathsc{NW}}
\newcommand{\idep}{\mathsc{Round}}
\newcommand{\ndep}{\mathsc{NWRound}}
\newcommand{\km}[1]{\textcolor{teal}{#1}}
\newcommand{\g}{\gamma}
\newcommand{\veps}{\varepsilon/b}
\newcommand{\eps}{\varepsilon}
\newcommand{\cost}{\mathrm{Cost}}
\newcommand{\ph}{\hat{\partial}}
\newcommand{\p}{\partial}
\newcommand{\ux}{U_i(\mathbf{x})}
\newcommand{\mx}{\mathbf{x}}
\newcommand{\poly}{\mbox{poly}}
\newcommand{\destat}{\frac{\eps^6}{64n\cdot m^5}}
\newcommand{\dstate}{\frac{\eps^7 b}{312 m^6}}
\title{Approximate Core for Committee Selection via Multilinear Extension and Market Clearing}
\author{Kamesh Munagala\thanks{Department of Computer Science, Duke University, Durham, NC 27708-0129. Emails:
\texttt{kamesh@cs.duke.edu, yiheng.shen@duke.edu, knwang@cs.duke.edu, zhiyi.wang@duke.edu}.
This work is supported by NSF grants CCF-1637397 and CCF-2113798, ONR award N00014-19-1-2268, and DARPA award FA8650-18-C-7880.} \and Yiheng Shen\samethanks[1] \and Kangning Wang\samethanks[1] \and Zhiyi Wang\samethanks[1]}
\date{}
\begin{document}
\maketitle

\begin{abstract}
Motivated by civic problems such as participatory budgeting and multiwinner elections, we consider the problem of \emph{public good} allocation: Given a set of indivisible projects (or candidates) of different sizes, and voters with different monotone utility functions over subsets of these candidates, the goal is to choose a budget-constrained subset of these candidates (or a committee) that provides fair utility to the voters. The notion of fairness we adopt is that of \emph{core stability} from cooperative game theory: No subset of voters should be able to choose another \emph{blocking committee} of proportionally smaller size that provides strictly larger utility to all voters that deviate.  The core provides a strong notion of fairness, subsuming other notions that have been widely studied in computational social choice.

It is well-known that an exact core need not exist even when utility functions of the voters are additive across candidates. We therefore relax the problem to allow \emph{approximation}: Voters can only deviate to the blocking committee if after they choose any extra candidate (called an \emph{additament}), their utility still increases by an $\alpha$ factor. If no blocking committee exists under this definition, we call this an $\alpha$-core.

Our main result is that an $\alpha$-core, for $\alpha < 67.37$, always exists when utilities of the voters are arbitrary monotone submodular functions, and this can be computed in polynomial time. This result improves to $\alpha < 9.27$ for additive utilities, albeit without the polynomial time guarantee. 
Our results are a significant improvement over prior work that only shows logarithmic approximations for the case of additive utilities. We complement our results with a lower bound of $\alpha > 1.015$ for submodular utilities, and a lower bound of any function in the number of voters and candidates for general monotone utilities.

\end{abstract}

\section{Introduction}
Consider the following scenario: A city is deciding what public projects to fund using its limited budget $b$. There is a list of $m$ candidate projects (forming a set $C$), where each $j \in C$ is associated with a cost $s_j$. The city needs to select a subset $O$ of these projects whose total cost is at most the budget, that is, $\sum_{j \in O} s_j \le b$. There are $n$ residents or {\em voters} (forming a set $V$) in the city, and each of them has preferences on how the city should spend its budget. These preferences may not be perfectly compatible with each other: For instance, families with children may prefer a public school, while others may prefer a park; people living in the east may prefer projects there, and so for those living in the west. It is desirable to have a fair process to decide on the projects to pay for. In participatory budgeting, voters express their preferences through their votes and influence the decision process. The paradigm has been implemented in numerous cities across the world~\citep{cabannes2004participatory,aziz2021participatory,pbstanford,knapsackVoting}.


A similar problem is seen in multiwinner elections~\citep{AzizChapter,EndrissBook,VinceBook,CC,thiele1895om}, where voters select a committee of size $b$ from $m$ candidates. Each voter holds her own opinions on the committees, and a fair method to incorporate the preferences of all voters is called for. Mathematically, it is a special case of the participatory budgeting problem, where each candidate has the same ``cost''. 

In these settings, simple methods to aggregate the preferences may have drawbacks. In majority voting, the utilities of a coherent minority group might be entirely ignored in favor of the majority. Utilitarian ways (maximizing the sum of utilities) may overly focus on a certain group, disregarding the welfare of the vast majority. Indeed, many practically implemented voting rules for choosing parliaments and civic body members, such as the well-known Single Transferable Voting (STV)~\citep{Tideman}, attempt to address precisely this issue. We naturally ask: 
\begin{quote}
What is a fair solution for participatory budgeting and multiwinner elections, and how do we reach such a solution?
\end{quote}

\subsection{The Core and Its Multiplicative Approximation}
Recall that there are $n$ voters forming a set $V$, and $m$ candidates forming a set $C$, where candidate $j$ has size (or cost) $s_j$. We need to choose a subset $O$ of candidates with total size at most $b$ (that is, $\sum_{j \in O} s_j \le b$). Following social choice terminology, this subset of candidates is called a {\em committee}, and the problem of choosing a budget-constrained committee is called the {\em committee selection} problem. Denote the utility of voter $i$ for a committee $T$ by a {\em utility function} $u_i(T)$. We will assume this function is non-negative and monotone, with $u_i(\varnothing) = 0$.

Although there are copious notions of fairness for committee selection, the \emph{core} is a classic and influential one among them. This idea has existed for more than a century~\citep{Droop,thiele1895om,lindahl1958just}, and serves as a strong notion of proportional representation. Towards defining this concept, imagine we split the size $b$ among the voters, so that each voter has an {\em endowment} of $\frac{b}{n}$ that they can use to ``buy'' candidates. A candidate of size $s_j$ requires an endowment of $s_j$ to ``buy''. A committee $O \subseteq C$ with total size at most $b$ is said to be in the core, if no subset $S$ of voters can deviate and purchase another committee $T \subseteq C$ by pooling their endowments, so that each voter in $S$ prefers the new committee $T$ to the original one $O$. Note that the total endowment of $S$ is $|S| \frac{b}{n}$, so that this set of voters can buy a committee $T$ of size at most $|S| \frac{b}{n}$.  Formally,

\begin{restatable}[Core]{definition}{core}
\label{def:core}
A committee $O$ is in the \emph{core} if there is no $S \subseteq V$ and committee $T \subseteq C$ with $\sum_{j \in T} s_j \leq \frac{|S|}{n} \cdot b$, such that $u_i(T) > u_i(O)$ for every $i \in S$.
\end{restatable}

The core has a ``fair taxation'' interpretation~\citep{lindahl1958just,foley1970lindahl}. The quantity $\frac{b}{n}$ can be thought of as the tax contribution of a voter, and a committee in the core has the property that no sub-group of voters could have spent their share of tax money in a way that \emph{all} of them were better off. As such it subsumes notions of fairness such as Pareto-optimality, proportionality, and various forms of justified representation~\citep{JR,Sanchez,PJR2018} that have been extensively studied in multiwinner election and fairness literature.

Despite the satisfying properties of the core, its strength is also its limitation: Even in the simple setting of unit sizes, integer budget, and additive utilities (the so-called {\em approval-set} setting with general utilities), the core can be empty.  (See for example, \citep{DBLP:conf/ec/FainMS18}.)

A natural approach to circumvent this problem is to show the existence of a committee that \emph{multiplicatively} approximates the core. We define the $\alpha$-core as follows.




\begin{restatable}[$\alpha$-Core]{definition}{utilitycore}
\label{def:alphaCore}
A committee $O$ is in the \emph{$\alpha$-core} if there is no $S \subseteq V$ and $T \subseteq C$ with $\sum_{j \in T} s_j \leq \frac{|S|}{n} \cdot b$, such that $u_i(T) > \alpha \cdot u_i(O \cup \{q\})$ for every $i \in S$ and $q \in C$. We call the candidate $q$ an \emph{additament}.
\end{restatable}

Note that we introduce the additament in the definition, since no multiplicative approximation is possible without it even in the setting with unit candidate sizes and additive utilities. This follows from examples in previous work~\citep{DBLP:conf/ec/FainMS18,DBLP:journals/teac/ChengJMW20}.
The idea of a bicriteria (multiplicative and additive) approximation to utilities was first presented by \citet{DBLP:conf/ec/FainMS18}.  \citet{DBLP:journals/corr/PetersPS20} present an almost identical definition as \cref{def:alphaCore}, except that the additament $q$ must come from the set $T$. This makes their definition more restrictive and the $\alpha$-core smaller. They show that when utilities are additive, an $O\left(\log \frac{u_{\mathrm{max}}}{u_{\mathrm{min}}}\right)$-core solution not only exists, but can also be computed in polynomial time, where $u_{\max}$ and $u_{\min}$ are the largest and smallest non-zero utilities any voter has for any feasible committee.

\subsection{Our Results}
Our main result in \cref{sec:poly} is to show the existence of an $O(1)$-core when the utility functions of the voters are monotone and submodular. Submodular utilities capture the notion of ``diminishing returns'', and are therefore well-motivated in participatory budgeting and multiwinner election settings where each additional project or candidate provides diminishing marginal utility.

Our result is a significant improvement over the previous work mentioned above~\citep{DBLP:journals/corr/PetersPS20}, which only presents a logarithmic approximation for the restricted case of additive utilities. Our main result is formally the following. 

\begin{restatable}[Main Theorem]{theorem}{main}
\label{thm:main}
For monotone submodular utilities, a $67.37$-core is always non-empty. One such solution can be computed in polynomial time.
\end{restatable}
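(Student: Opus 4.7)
The strategy is to pass through a continuous relaxation, produce a ``fractional core'' solution by market clearing, and then round it to an integer committee while losing only constant factors per voter. For each voter $i$ let $F_i(\mathbf{x}) = \mathbb{E}[u_i(R(\mathbf{x}))]$ be the multilinear extension of $u_i$, where $R(\mathbf{x})$ includes each $j$ independently with probability $x_j$, and let $P = \{\mathbf{x}\in[0,1]^m : \sum_j s_j x_j \le b\}$. The fractional analogue of \cref{def:core} asks for $\mathbf{x}^* \in P$ such that no coalition $S$ and no $\mathbf{y} \in [0,1]^m$ with $\sum_j s_j y_j \le (|S|/n)b$ satisfies $F_i(\mathbf{y}) > F_i(\mathbf{x}^*)$ for every $i \in S$.

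I would first construct such an $\mathbf{x}^*$ as a \lin{} (market-clearing) equilibrium: each voter is endowed with $b/n$, voter-specific prices are assigned to each candidate, and the equilibrium allocation has every voter optimally spending her endowment while the market clears. Because $F_i$ is not concave for submodular $u_i$, I would work with its concave closure $\bar F_i$, invoke Kakutani's fixed-point theorem on the demand-supply correspondence, and use a first-welfare-theorem-style argument to conclude that $\mathbf{x}^*$ is an exact fractional core under $\bar F_i$. The correlation gap $F_i \ge (1-1/e)\,\bar F_i$ for monotone submodular functions then shows $\mathbf{x}^*$ is also an approximate fractional core under $F_i$, at a $(1-1/e)$ loss.

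Next I would round $\mathbf{x}^*$ to an integral committee $O$ via a dependent rounding scheme (swap or pipage rounding) that preserves the marginals, respects the budget, and satisfies $\mathbb{E}[u_i(O)] \ge F_i(\mathbf{x}^*)$ for every monotone submodular $u_i$. The main obstacle is converting this \emph{expected} per-voter guarantee into the \emph{pointwise} non-blocking condition required by \cref{def:alphaCore}; this is exactly what the additament in that definition is designed to buy. For any purported blocking pair $(S,T)$, submodularity yields $u_i(T) \le u_i(O) + \sum_{q \in T}\bigl(u_i(O\cup\{q\}) - u_i(O)\bigr)$, so a bad realization of the rounding can be absorbed by the additament utilities $u_i(O\cup\{q\})$, while the fractional-core guarantee on $\mathbf{x}^*$ (applied to a fractional witness derived from the indicator of $T$) controls $u_i(T)$. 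Combining these with a Markov-style tail bound on $u_i(O)$ vs.~$F_i(\mathbf{x}^*)$, and carefully balancing the constants from the correlation gap, the rounding tail, and the submodular blocking inequality, should yield $\alpha < 67.37$.

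For the polynomial-time guarantee I would replace the fixed-point-based equilibrium by an approximate equilibrium computed via a convex program over $\bar F_i$ using value-oracle access to $u_i$, and use a polynomial-time dependent rounding; the additional approximation slack is absorbed into the final constant. The hardest step is the third one: without the additament the expected-to-pointwise conversion fails outright, and keeping the final constant $O(1)$ rather than superconstant requires tight coordination between the fractional-core witness, the rounding analysis, and the submodular marginal inequality.
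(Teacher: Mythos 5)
Your fractional step is a legitimate alternative route to the paper's (the paper uses a local-search optimum of the Nash Welfare objective on the multilinear extension, which is a $2$-approximate fractional core; a Lindahl equilibrium on the concave closure plus the correlation gap would give a comparable constant-factor fractional guarantee). But there is a genuine gap in your rounding step. A single application of marginal-preserving rounding only gives $\mathbb{E}[u_i(O)] \ge F_i(\mathbf{x}^*)$ per voter; by any lower-tail bound, a \emph{constant fraction} of voters will still realize utility far below $F_i(\mathbf{x}^*)$, and those voters form a coalition of linear size with a proportionally linear deviating budget. Neither of your proposed fixes handles them: the additament adds only $\max_q u_i(O\cup\{q\})-u_i(O) \le \max_j u_i(\{j\})$, whereas $F_i(\mathbf{x}^*)$ can be $\Theta(m)$ times that, so a bad realization cannot be ``absorbed''; and the submodular inequality $u_i(T)\le u_i(O)+\sum_{q\in T}(u_i(O\cup\{q\})-u_i(O))$ has $|T|$ terms, so it only yields a factor-$|T|$ (up to $m$) bound, not a constant. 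The missing idea is \emph{iterative rounding}: recurse on the unsatisfied voters with a geometrically shrinking budget $b_{t+1}=\omega b_t$, so every voter is $\gamma$-satisfied at some level, and then use a quantitative approximation property of the fractional solution at each level (\cref{thm:grad}: any coalition $S_t$ gaining a $\theta$ factor over $\mathbf{x}_t$ with budget $\cost(\mathbf{y})$ satisfies $|S_t| \lesssim \frac{n_t}{b_t}\cdot\frac{\cost(\mathbf{y})}{\theta-1}$), summed as a geometric series over $t$, to show the total number of possible deviators is strictly less than $|S|$ --- a contradiction. Without this recursion and the per-level counting lemma, the pointwise non-blocking condition does not follow.

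A secondary problem is the polynomial-time claim. No polynomial-time algorithm is known for the Lindahl equilibrium even for additive utilities with unit sizes, and the concave closure $\bar F_i$ of a general submodular function is itself not efficiently computable; this is exactly why the paper's Lindahl-based result is stated only for additive utilities \emph{without} a running-time guarantee, and why the submodular result instead uses a polynomial-time local search for Nash Welfare on the multilinear extension (with sampled gradients and a bounded iteration count). Replacing the fixed point ``by a convex program over $\bar F_i$'' does not resolve this.
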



Our second result in \cref{sec:additive} is the following improved result for the well-studied special case of additive utilities:

\begin{restatable}[Additive Utilities]{theorem}{additive}
\label{cor:additive}
For additive utilities, a $9.27$-core is always non-empty.
\end{restatable}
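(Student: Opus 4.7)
The plan is to follow the same two-stage template that drives \cref{thm:main}---produce a fractional relaxation that is itself an approximate core, then round it to an integral committee---while replacing each stage with an instantiation tailored to additivity. Specifically, I would swap the Lindahl equilibrium (suited to submodular utilities) for the Nash Welfare fractional solution, and round via a dedicated dependent-rounding procedure (the $\ndep$ routine) that exploits the fact that each $u_i$ is linear in the indicator vector.

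For the fractional stage, let $\mathbf{x}^* \in [0,1]^m$ maximize $\sum_{i \in V} \log u_i(\mathbf{x})$ subject to $\sum_j s_j x_j \le b$ and $0 \le x_j \le 1$, where $u_i(\mathbf{x}) = \sum_j u_{ij} x_j$. The KKT conditions yield per-candidate prices and the interpretation that each voter spends exactly $b/n$ at these prices. The key fractional-core claim is that for any coalition $S$ and any fractional $\mathbf{y}$ with $\sum_j s_j y_j \le \tfrac{|S|}{n} b$, some voter $i \in S$ must satisfy $u_i(\mathbf{y}) \le 2\, u_i(\mathbf{x}^*)$; this follows from pairing the gradient of $\sum_i \log u_i$ at $\mathbf{x}^*$ with $\mathbf{y} - \mathbf{x}^*$, using the price scaling, and applying a standard Nash Welfare manipulation. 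Thus $\mathbf{x}^*$ lies in a fractional $2$-core, strictly better than what the Lindahl route supplies in the submodular case.

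For the rounding stage, $\ndep$ would convert $\mathbf{x}^*$ into an integral $O$ with marginals $\Pr[j \in O] = x_j^*$, total size at most $b$, and negatively correlated indicators. Because utilities are additive, each $u_i(O) = \sum_j u_{ij}\, \mathbbm{1}[j \in O]$ is a sum of negatively correlated bounded variables, so Chernoff-style concentration gives $u_i(O \cup \{q_i\}) \ge u_i(\mathbf{x}^*)/\beta$ with sufficient probability, where the additament $q_i$ absorbs voter $i$'s single worst candidate. Composing the fractional factor $2$ with the rounding loss $\beta$ and tuning the parameters of $\ndep$ would push the overall guarantee to $2\beta \le 9.27$.

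The principal obstacle is the simultaneous analysis over \emph{all} coalitions $S \subseteq V$ and all potential deviations $T$: the probabilistic argument must produce a single committee $O$ that blocks exponentially many deviations at once. This existential step is what forces the proof to be non-constructive and explains why \cref{cor:additive} omits any polynomial-time claim (in contrast to \cref{thm:main}). A secondary subtlety is that the additament $q$ in \cref{def:alphaCore} may be chosen depending on $i$, which is precisely what lets the per-voter concentration bound combine with a union bound without incurring additional loss; tracking this dependency carefully is what converts the large constant of the submodular analysis into the sharper $9.27$ here.
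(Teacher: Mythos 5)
Your proposal inverts the paper's actual division of labor and, as a result, cannot reach the constant $9.27$. The paper uses the Nash Welfare local optimum for the \emph{submodular} case (\cref{sec:poly}) and the \emph{Lindahl equilibrium} for the additive case (\cref{sec:additive}); you propose the opposite swap. The distinction is not cosmetic: a Nash Welfare optimum is only a $2$-approximate fractional core, and the resulting coalition bound (\cref{thm:grad}) has the form $|S_t| \lesssim \frac{n_t}{b_t}\cdot\frac{\cost(\mathbf{y})}{\theta-1}$, whereas the Lindahl equilibrium is an \emph{exact} fractional core whose per-voter market-clearing prices give the strictly stronger bound $|S_t| \le \frac{n_t}{b_t}\cdot\frac{\sum_{j\in A}s_j}{\theta}$ (\cref{lem:priceA,lem:St}). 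That replacement of $\theta-1$ by $\theta$, together with budget-preserving dependent rounding (no $\kappa$-scaling, $\beta=\gamma e^{1-\gamma}$), is exactly what brings the constant from the combined Nash-Welfare-plus-additive-rounding figure of $15.2$ (which the paper explicitly notes is what your route yields, in polynomial time) down to $9.27$. Your closing arithmetic ``compose the fractional factor $2$ with the rounding loss $\beta$ to get $2\beta\le 9.27$'' is not how the constants combine in either argument; the true bound comes from a geometric sum $\sum_t(\beta/\omega)^t$ over recursion levels and the inequality $\alpha \le \frac{\omega\gamma}{(1-\omega)(\omega-\gamma e^{1-\gamma})}$.

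A second gap is structural: you treat the argument as a single rounding step followed by a union bound over all coalitions $S$ and deviations $T$, and you identify ``blocking exponentially many deviations at once'' as the key obstacle. The paper never union-bounds over coalitions. Instead it uses the iterative framework of \cref{alg:1}: round, remove the $\gamma$-satisfied voters, recurse on the rest with budget shrunk by $\omega$, and then---for an \emph{arbitrary fixed} deviating pair $(S,A)$ assumed to exist for contradiction---deterministically bound $|S\cap W_t|$ at each level via the Lindahl prices and sum over levels. Without this recursion, the constant fraction of voters left unsatisfied after one rounding can themselves form a blocking coalition, and no concentration argument with constant per-voter failure probability can rule that out. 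Finally, your explanation of why \cref{cor:additive} is non-constructive is backwards: the Nash Welfare route \emph{is} polynomial-time; the obstruction is that the Lindahl equilibrium is obtained via a fixed-point argument with no known efficient algorithm.
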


Unlike \cref{thm:main}, we do not know how to implement the algorithm in \cref{cor:additive} in polynomial time. We remark that the previous two results can be combined to show that a $15.2$-core solution for additive utilities can be computed in polynomial time, without providing the details.



\paragraph{Lower Bounds.} The next natural question is whether our results can be extended to arbitrary monotone utilities. We show that this is not possible. We present an example in \cref{sec:general} to show that an $O(1)$-core (or even any $f(n, m)$-core where $n$ is the number of voters and $m$ is the number of candidates) may not exist for general monotone utilities. 


\begin{restatable}[General Utility Lower Bound]{theorem}{general}
\label{thm:general_lb}
For general monotone utilities, for any function $\varphi : \mathbb{Z}^+ \times \mathbb{Z}^+ \to \mathbb{R}^+$, a $\varphi(n, m)$-core can be empty.
\end{restatable}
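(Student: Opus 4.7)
Given $\varphi$, the plan is to construct an explicit instance (with $n$, $m$, and $b$ depending on $\varphi$) whose $\varphi(n,m)$-core is empty. I would use monotone \emph{non-submodular} utilities of threshold type: each voter $i$ is assigned a target set $T_i \subseteq C$, and $u_i(S) = M$ if $T_i \subseteq S$ and $0$ otherwise, for some scale $M > \varphi(n,m)$. This utility is monotone. Its key feature is that a single additament cannot rescue $u_i(O \cup \{q\})$ once $|T_i \setminus O| \ge 2$: in that regime $u_i(O \cup \{q\}) = 0$ for every $q$, while any $T \supseteq T_i$ gives $u_i(T) = M$, so the ratio is infinite, and in particular exceeds any finite $\varphi(n,m)$.

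The problem then reduces to a purely combinatorial design task: choose $n$, $m$, $b$, and a family $\{T_i\}_{i \in V}$ so that for every committee $O \subseteq C$ of size at most $b$, there is a non-empty coalition $S \subseteq V$ with (i) $|T_i \setminus O| \ge 2$ for every $i \in S$, and (ii) $\bigl|\bigcup_{i \in S} T_i\bigr| \le |S| \cdot b / n$ (the coalition can afford the joint target). Once such a family is in hand, for any $O$ we exhibit the guaranteed $S$ and the committee $T = \bigcup_{i \in S} T_i$, verify that $|T| \le |S|b/n \le b$, and conclude that each $i \in S$ has ratio $u_i(T)/u_i(O \cup \{q\}) = \infty > \varphi(n,m)$, certifying that $O$ is not in the $\varphi(n,m)$-core. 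The fact that $n$ and $m$ may be chosen as functions of $\varphi(n,m)$ is essential: we are free to grow the instance as quickly as the desired lower bound requires.

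I would carry out (i) and (ii) by making many voters share or heavily overlap their targets so that $|\bigcup_{i\in S} T_i|$ remains small even for moderately sized coalitions, while simultaneously making the overall family rich enough (for instance, by embedding a suitable combinatorial design on a very large candidate set, or by partitioning voters into many interlocking groups whose targets overlap in structured ways) that no $b$-subset of $C$ can near-cover every target. The construction is then tuned, depending on $\varphi(n,m)$, so that both (i) and (ii) hold uniformly over all choices of $O$.

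The main obstacle is precisely reconciling (i) and (ii) simultaneously: the two constraints pull in opposite directions. Small, heavily overlapping targets make affordability easy but let $O$ cover the family cheaply; spread-out or near-disjoint targets defeat coverage but destroy affordability. The heart of the proof is therefore the choice of target family and parameters that threads this needle for every $O$ at once. The remainder of the argument -- producing a blocking coalition witnessing an infinite ratio from (i) and (ii), and thereby refuting any $\varphi(n,m)$-core -- is a direct verification.
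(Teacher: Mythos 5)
Your reduction is set up correctly, but the combinatorial design task you reduce to is provably infeasible, so the construction you defer is not just the hard part --- it does not exist. For single-threshold utilities $u_i(S) = M\cdot\mathbbm{1}[T_i \subseteq S]$, consider the following greedy committee: repeatedly pick any nonempty set $A_k$ of not-yet-chosen voters with $\bigl|\bigcup_{i \in A_k} T_i\bigr| \le |A_k|\cdot b/n$ and add $\bigcup_{i \in A_k} T_i$ to $O^*$, stopping when no such set remains among the residual voters $R$. Since the $A_k$ are disjoint, $|O^*| \le \sum_k |A_k|\, b/n \le b$, so $O^*$ is feasible. Now take any coalition $S$ satisfying your affordability condition (ii), $\bigl|\bigcup_{i\in S} T_i\bigr| \le |S|b/n$. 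By the stopping rule, $S \not\subseteq R$, so $S$ contains a voter $i$ from some $A_k$, and that voter has $T_i \subseteq O^*$, hence $|T_i \setminus O^*| = 0$, violating your condition (i). So conditions (i) and (ii) can never hold simultaneously against $O^*$, for any choice of $n$, $m$, $b$, and $\{T_i\}$. (The same argument shows this utility class always has a nonempty $1$-core, even with no additament.) The tension you flag at the end is therefore not a needle to be threaded but a genuine obstruction.

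The paper's \cref{ex:general} escapes this by using \emph{two-level} utilities: each voter gets a large value $\alpha+1$ if her favorite gadget is fully selected plus a small value $1$ if her second-favorite is, with $\alpha = \varphi(n,m)$. The blocking coalition deviates to a single gadget that serves its two members at \emph{different} levels: one voter gains only the small value but has denominator $0$ (her gadgets are each missing at least two candidates, so one additament cannot complete either), while the other gains the large value $\alpha+1$ against a denominator of at most $1$. This heterogeneity --- the deviating committee need not fully satisfy every coalition member's ``expensive'' demand --- is exactly what your single-threshold utilities cannot express, and it is what defeats the greedy argument above. Note also that the paper keeps $n = 6$, $m = 30$ fixed and puts the dependence on $\varphi$ only into the utility \emph{values}; your suggestion to grow $n$ and $m$ as functions of $\varphi(n,m)$ is both unnecessary and circular, since $\varphi$ is evaluated at the instance's own parameters.
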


We also show that the case of submodular utilities does need a multiplicative approximation to the core, and the $1$-core can be empty. This justifies the form of \cref{thm:main} that involves both a multiplicative approximation and an additament.

\begin{restatable}[Submodular Lower Bound]{theorem}{submodularlb}
\label{thm:submodular_lb}
For monotone submodular utilities, a $1.015$-core can be empty.
\end{restatable}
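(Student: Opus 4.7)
The statement asks for the existence of a bad instance, so the proof is by explicit construction. My plan is to build a small, highly symmetric instance with unit candidate sizes ($s_j = 1$), an integer budget $b$, and voters whose utilities are capped coverage functions over a structured universe of ``issues.'' Such functions are canonically monotone and submodular, and they are particularly convenient because (a) each voter's utility for any committee is computed by a single min-cap, (b) the marginal value of adding a single candidate is uniformly bounded, which directly constrains what an additament can do, and (c) up to symmetry there are few essentially different committees to analyze.

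Concretely, I would partition the voters into $k$ equal cohorts $V_1,\dots,V_k$ and assign each cohort $g$ a private set of issues $E_g$, then design the candidates so that (i) within a cohort, candidates cover largely disjoint issues, so a cohort's utility grows almost linearly in the first few candidates it receives, and (ii) across cohorts, the candidates conflict enough that any size-$b$ committee must significantly under-serve some cohort relative to its proportional share $|V_g|/n \cdot b$. A voter in $V_g$ would then have utility $u_i(T) = \min\bigl(|E_g \cap \bigcup_{j \in T}\mathrm{cov}(j)|,\,\kappa\bigr)$ for a cap $\kappa$; the cap enforces submodularity while keeping the utility ratio between a coalition's alternative $T$ and $O \cup \{q\}$ strictly above $1$. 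The blocking coalition is simply the cohort $V_{g^*}$ whose share is most under-served by $O$, with $T$ chosen to maximize $E_{g^*}$-coverage subject to the size bound $\lfloor|V_{g^*}|/n \cdot b\rfloor$.

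The main obstacle will be the additament: the definition allows $O$ to absorb any single extra candidate $q$ before utilities are compared, so the construction must ensure $u_i(T) > 1.015 \cdot u_i(O \cup \{q\})$ for \emph{every} $q \in C$. This is precisely where submodularity is being used against the core: the marginal value of any single candidate is uniformly bounded, but $T$ can stack several candidates with complementary coverage inside $E_{g^*}$, so the ratio will not collapse under the additament as long as $T$'s cohort-coverage strictly dominates that of any single candidate. The second difficulty is ruling out ``irregular'' committees $O$ that break the cohort symmetry and try to defend against every potential blocking cohort simultaneously; I expect a standard averaging/convexity argument to reduce this to the symmetric case. Finally, to extract the specific constant $1.015$, I would parameterize the construction by $k$, $\kappa$, the per-candidate coverage degrees, and $b$, and numerically optimize the worst-case blocking ratio over the parameter space; the exact value of the bound is not essential, only that the optimum strictly exceeds $1$.
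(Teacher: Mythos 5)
There is a genuine gap: the construction you describe cannot yield an empty core, because ``cohort-separable'' preferences always admit an exact core outcome. In your setup each voter's utility depends only on her own cohort's private issue set $E_g$, and all voters in a cohort share the same capped-coverage function. Let $T_g$ be the best committee of size $\lfloor |V_g|/n\cdot b\rfloor$ for cohort $g$ (the very committee your blocking cohort would deviate to). Since the cohorts are equal-sized, $\sum_g |T_g| \le b$, so $O = \bigcup_g T_g$ is feasible, and by monotonicity $u_i(O) \ge u_i(T_g)$ for every $i \in V_g$. Hence no single cohort can block $O$ at all --- let alone by a factor $1.015$ --- and your stated plan only ever considers single-cohort blocking coalitions. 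The cap $\kappa$ does not help, since it is applied identically on both sides of the comparison. The real obstacle to core existence is not the additament (which you focus on) but the fact that proportional allocation defeats any construction in which voters' interests decompose cleanly by group.

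What the known counterexamples (including the paper's) do instead is make each voter care about \emph{two} groups of candidates arranged cyclically, so that blocking coalitions necessarily mix voters with different top choices. The paper uses $6$ voters and $6$ gadgets of $5$ unit-size candidates with budget $15$; voter $i$ has a favorite gadget and a second-favorite gadget in a $3$-cycle, and utility $u_i(E) = x_i(E) + z\,(1-x_i(E))\,y_i(E)$ with $z \approx 0.925$, where $x_i, y_i$ are the covered fractions of the favorite and second-favorite gadgets (this is verified to be monotone submodular). A counting argument shows every feasible committee leaves two ``adjacent'' gadgets under-filled, so two voters --- one whose favorite and one whose second-favorite is the same gadget --- pool their endowment of $5$ to buy that gadget entirely, and the multiplicative gap survives any additament because a single extra candidate moves $x_i$ or $y_i$ by only $1/5$. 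If you want to salvage your approach, you would need to import this kind of cyclic cross-group structure; disjoint cohorts with a numerical parameter search will not produce a gap greater than $1$.
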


\subsection{Technical Contributions} 
The main technical idea in our paper is to consider {\em fractional relaxations} of the problem, via considering outcomes where candidates are fractionally allocated. For submodular utilities, we extend the utility to fractional values via the so-called multilinear extension~\citep{Vondrak,CalinescuCPV11}. For additive utilities, the extension is straightforward. We then construct an exact or approximate core outcome for the fractional version. We finally round this solution to find an integer core. We now describe these steps in more detail.

\paragraph{Fractional Core.} We use two entirely different techniques for constructing fractional core outcomes to show \cref{thm:main} and \cref{cor:additive}. For both solutions, we extend the standard core property and show a stronger {\em approximation property} (\cref{thm:grad} and \cref{lem:St}) on how the size of a coalition of voters relates to the budget of the committee to which they can deviate to and obtain a multiplicative factor larger utility.

The first approach that we use for arbitrary submodular functions is via optimizing the classic Nash Welfare objective~\citep*{Nash}, that maximizes the product of utilities of the voters. Though the Nash Welfare is a convex optimization problem when utilities are continuous, monotone, and concave, in our case, the utility is now defined as the multilinear extension and this is only concave in positive directions~\citep{CalinescuCPV11}. Nevertheless, we show a simple continuous local search procedure for the Nash Welfare objective, such that the local optimum is a $2$-approximate fractional core. For our rounding, we need a stronger approximation property that we prove in \cref{thm:grad}. This property may be of independent interest. 

The second approach that we use for additive utilities, finds an {\em exact} fractional core. This uses a classic market clearing solution is termed the {\em Lindahl equilibrium}~\citep*{lindahl1958just,foley1970lindahl}. In this equilibrium, candidates can be chosen to any fraction (even greater than one). Each candidate is assigned a per-voter ``price'', and the voters are assigned an endowment of $\frac{b}{n}$ just as in \cref{def:core}. At these prices, if the voters choose their utility maximizing allocation subject to spending their endowment, then (1) they all choose the same fractional committee, and they spend their entire endowment; and (2) for each chosen candidate, the total price collected exactly pays for that candidate. \citet{foley1970lindahl} shows via a fixed point argument that such an equilibrium always exists when each voter's utility for fractional committees is any continuous, monotone, and concave function. Further, the resulting equilibrium is a core outcome.

\paragraph{Rounding.} We then construct the integer solution by rounding of the fractional core. Since the allocation is common to all voters, it is easier to work with rounding processes that are oblivious to the utility functions, and we therefore simply use randomized rounding. If we randomly round the allocation to integer values, the expected utility of any voter is preserved. This can now be combined with Chernoff-type bounds for multilinear extensions~\citep{ChekuriVZ10} to argue that a constant factor of voters see utility that is at least a constant factor of the corresponding utility in the fractional solution. For additive utilities, we use dependent rounding~\citep{Gandhi,DBLP:journals/talg/ByrkaPRST17} to provide an improved constant here. 

This brings up the next hurdle: The remaining unsatisfied set of voters can still deviate to a different coalition. Our second ingredient is the iterative rounding framework of~\citet{DBLP:conf/stoc/JiangMW20}: We recurse on the unsatisfied voters to find a fractional solution with a smaller budget on total size. The key idea is that every voter is satisfied at some level of the recursion. We now use the approximation property of the fractional solution at each level of recursion to argue that the number of voters deviating to any fixed committee that provides them a constant factor more utility is bounded. Summing this argument over all levels completes the proof. 

Our overall argument in both cases -- submodular and additive utilities -- is quite delicate, and crucially requires strong approximation properties of the fractional solutions of Lindahl equilibrium and locally optimal Nash Welfare. This forms our main technical contribution.



\subsection{Related Work}
\label{sec:related}
\paragraph{Proportionality and the Core.} One classic objective in committee selection is achieving fairness via proportionality, where different demographic slices of voters feel they have been represented fairly. This general idea dates back more than a century~\citep{Droop}, and has recently received significant attention~\citep{CC,Monroe,Brams2007,JR,Sanchez,PJR2018}. In fact, there are several elections, both at a group level and a national level, that attempt to find committees (or parliaments) that provide approximately proportional representation. For instance, the popular Single Transferable Vote (STV) rule is used in parliamentary elections in Ireland and Australia, and in several municipal elections in the USA. This rule attempts to find a proportional solution.  

A long line of recent literature has studied the complexity and axiomatization of voting rules that achieve proportionality; see~\citep{AzizChapter,EndrissBook,VinceBook} for recent surveys. Proportionality in committee selection arises in many other applications outside of social choice as well. For example, consider a shared cache for data items in a multi-tenant cloud system, where each data item is used by several long-running applications~\citep{ROBUS,Psomas}. Each data item can be treated as a candidate, and each application as a voter whose utility for an item corresponds to the speedup obtained by caching that item. In this context, a desirable caching policy provides proportional speedup to all applications.

The core represents the ultimate form of proportionality: Every demographic of voters feel that they have been fairly represented and do not have the incentive to deviate and choose their own committee of proportionally smaller size which gives all of them higher utility. In the typical setting where these demographic slices are not known upfront, the notion of core attempts to be fair to all subsets of voters. The work of~\citet{DBLP:conf/ec/MunagalaSW21} formally argues that in certain multiwinner election settings, the core also approximately optimizes simpler diversity measures of the resulting committee.

\paragraph{Fisher Markets.} Our fractional solutions are superficially related to the Fisher market equilibrium~\citep{Fisher,Nash,ArrowD} when divisible items need to be {\em allocated} to agents, and agents' utilities are additive. For the Fisher market, the optimum Nash Welfare solution  finds market clearing prices. However, in a Fisher market, the prices are common to the agents while the allocations are different, while in a Lindahl equilibrium, the prices are per-voter while the allocation (or committee) is common and provides shared utility to all the voters. This is a key difference -- the Fisher market has a polynomial time algorithm via convex programming~\citep{EG}, while no polynomial time algorithm is known for the Lindahl equilibrium even when candidates have unit sizes and voters' utilities are additive (or linear till the maximum size of the candidate). Similarly, though the Nash Welfare solution finds market clearing prices for the Fisher market via strong duality, in the case of public goods, there is no obvious way to interpret the dual of the Nash Welfare solution as market clearing prices. Moreover, for submodular utilities and multilinear extensions, the Nash Welfare objective is no longer a convex program, so that strong duality does not apply.

\paragraph{Approximate Core.} In this paper, we have focused on approximating the utility voters obtain on deviating (see \cref{def:alphaCore}). As mentioned before, this notion first appeared in~\citep{DBLP:conf/ec/FainMS18}, and the notion of a single additament in approximation is due to~\citet{DBLP:journals/corr/PetersPS20}. The latter work present a logarithmic approximation for the special case of additive utilities. \citet{PetersS20} show that the well-known Proportional Approval Voting method~\citep{thiele1895om} achieves a $2$-core for the special case where the utilities are additive and candidates are unit size, with each voter having utility either zero or one for each candidate. This algorithm can be viewed as a discrete version of Nash Welfare, and in essence, \cref{thm:grad} extends this result to the case of submodular utilities and general costs, showing that it yields a $2$-core for the {\em fractional} case of multilinear extension via a polynomial time local search algorithm. The work of~\citet{ChenFLM19} presents a constant approximation for the $K$-clustering problem, where the committee is a set of $K$ centers in a metric space, and the cost of a voter is the distance to the closest center. However, these ideas do not extend to the committee selection problem we consider in this paper.

\citet{DBLP:conf/stoc/JiangMW20} consider a different notion of approximation: Instead of approximating the utility, they approximate the {\em endowment} that a voter can use to buy the deviating committee. Building on the work of~\citet{DBLP:journals/teac/ChengJMW20}, they show a different fractional relaxation, to which a $2$-approximation always exists. They then iteratively round this fractional solution to an integer solution that is a $32$-approximation for {\em all} monotone utility functions. The problem of approximating utilities is very different; indeed, \cref{thm:general_lb} shows we cannot hope to have a similar constant approximation for all utility functions. Nevertheless, we use the idea of iterative rounding from that work, albeit with an entirely different fractional solution and analysis. In effect, we showcase the power of iterative rounding as a unifying framework for finding approximate core solutions, regardless of the notion of approximation.

\paragraph{Rounding Techniques.} The notion of multilinear extension and correlation gap has been widely used in stochastic optimization~\citep{Shipra10}, mechanism design~\citep{Yan11}, and rounding~\citep{Vondrak,CalinescuCPV11,ChekuriVZ14}. Typically, it has been used to develop computationally efficient approaches; on the other hand, we demonstrate an application to showing a purely existential result.  Similarly, rounding of market clearing solutions have been used to show approximately fair allocations of indivisible goods among agents~\citep{BarmanK,ColeG}. The structure of these problems (common prices but different allocations) is very different from ours (common allocations and different prices), and we need different techniques. Again, in contrast with the resource allocation literature, we need the rounding just to show an existence result as opposed to a computational one.

\section{Model and Preliminaries}
Recall that $C$ is a set of $m$ candidates and $V$ is a set of $n$ voters. Each candidate $j \in C$ is associated with a size $s_j > 0$. For each $i \in V$ and each $T \subseteq C$, we denote by $u_i(T)$  the utility of voter $i$ on committee $T$.  We assume $u_i$ is a monotone, submodular set function with $u_i(\varnothing) = 0$:
\begin{itemize}
\item Monotonicity: \ $u_i(T) \le u_i(T') \ \ \forall T \subseteq T'$.
\item Submodularity: \ $u_i(T \cup \{t\}) - u_i(T) \ge u_i(T' \cup \{t\}) - u_i(T') \ \ \forall T \subseteq T', t \notin T'$.
\end{itemize}


We have a constraint of $b$ on the total size of the committee. We call a committee $O \subseteq C$ feasible (with respect to $b$) if: $\sum_{j \in O} s_j \leq b$.

Our goal is to find a committee in the $\alpha$-core for $\alpha = O(1)$. Recall its definition:
\utilitycore*


\paragraph{Fractional Allocations and Core.} In the sequel, we will consider continuous extensions of the committee and the utility function. We define a {\em fractional committee} as a $m$-dimensional vector $\mathbf{x} \ge 0$. The quantity $x_j$ denotes the fraction to which candidate $j$ is allocated. 

We denote by $\cost(\mathbf{x}) := \sum_{j \in C} x_j\cdot s_j$ as the cost of the allocation $\mathbf{x}$.  Without loss of generality, we assume that $\sum_{j \in C} s_j > b$. 

Similarly, we consider continuous utilities $U_i(\mathbf{x})$ for the voters, whose construction will be specified later. Given this notation, an $\alpha$-approximate fractional core, for $\alpha \ge 1$ is defined as follows:

\begin{definition}[$\alpha$-Approximate Fractional Core]
\label{def:core2}
A fractional committee $\mathbf{x} \ge 0$ with $\cost(\mathbf{x}) \le b$ lies in the $\alpha$-approximate \emph{fractional core} if there is no $S \subseteq V$ and allocation $\mathbf{z} \ge 0$ with $\cost(\mathbf{z}) \le \frac{|S|}{|V|} \cdot b$ such that $U_i(\mathbf{z}) > \alpha U_i(\mathbf{x})$ for all $i \in S$.
\end{definition}

When $\alpha = 1$, we call the fractional committee $\mathbf{x}$ as simply lying in the {\em fractional core}.

\section{Constant Approximate Core for Submodular Utilities}
\label{sec:poly}
Our overall algorithm proceeds via constructing approximate fractional core for a well-known continuous relaxation, called the {\em multilinear relaxation}. We use a continuous time local search procedure for the Nash Welfare objective to find this fractional solution. We show that it is almost in a $2$-approximate fractional core, and subsequently round it iteratively to find a solution in the approximate integer core. Our overall algorithm runs in time polynomial in the number of voters and candidates.  We first present the entire algorithm as an existence proof, and then delve into the details of the running time in \cref{sec:polyproof}.


\subsection{First Observations}
The following steps are without loss of generality, aiming to simplify future derivations in this section. We will assume $\eps \in (0,1/20)$ is a small constant that we choose later. In a proof of existence we can set $\eps = 0$; however, we need $\eps > 0$ to achieve polynomial running time. 
\begin{itemize}
\item 
Let $u_i^{\max} := \max_j\{u_i(\{j\})\}$, the maximum utility of voter $i$ with a single item. We scale down every $u_i(T)$ by a factor of $u_i^{\max}$ and denote this normalized utility as $u_i'(T) := \frac{u_i(T)}{u_i^{\max}}$. (We ignore every voter with $u_i(C) = 0$, since they will not deviate in any situation.) 
By submodularity of the utility functions, $u_i'(T)\in [0, m]$ for every $i$ and $T\subseteq C$. This step preserves the original problem, since every voter multiplicatively compares utilities of different committees in \cref{def:alphaCore} and different voters compare separately. Therefore, without loss of generality, we assume $\max_j\{u_i(\{j\})\} = 1$ and thus $u_i(T)\in [0, m], \forall T\subseteq C$.
\item We further assume $s_j\le b$  for each $j \in C$, as otherwise, the candidate cannot be selected in any feasible committee. 
\item Now define 
\[
C_{s} := \{j \mid s_{j}\le \eps b/m\}
\]
as a set of ``small enough'' candidates. We place all these candidates in the solution. Clearly all candidates in $C_s$ take up a total budget of at most $\eps b$ -- the remaining budget is at least $(1-\eps)b$. Our algorithm then works with candidates in:
\[
C_{\ell} := C \setminus C_{s} = \{j \mid s_{j} > \eps b/m\}.
\]
Note that each candidate $j$ in $C_{\ell}$ now has size $s_j > \eps b/m$. 
\end{itemize}

\subsection{The $2$-Approximate Fractional Core}
\label{sec:nash}
We now present a procedure that finds an approximate fractional core solution for an arbitrary budget $B \in \left[\frac{\eps}{5 m} b, b \right]$, and an arbitrary subset $W \subseteq V$ of voters. We present the algorithm and its optimality properties below; however, we defer the polynomial running time analysis to \cref{sec:polyproof}.

\subsubsection{Multilinear Extension}
Recall that the utility function $u_i(S)$ of voter $i$ for $S \subseteq C$ is non-negative, monotone, and submodular. We first extend these utilities to fractional allocations. A natural way to do that is to use the multilinear extension.

\begin{definition}[Multilinear Extension, \citep{CalinescuCPV11,Vondrak}]
	\label{def:concext}
	Given a monotone, non-negative submodular function $f$, its \emph{multilinear extension} $F$ is defined for any $\mathbf{x} \in [0,1]^m$ as:
	$$ F(\mathbf{x}) = \sum_{T \subseteq C} f(T) \left(\prod_{j \in T} x_j \right) \left(\prod_{j \notin T} (1-x_j) \right).$$
\end{definition}

We will apply the following property of this function.

\begin{lemma}[Concavity along Positive Directions, \citep{CalinescuCPV11}]
	\label{lem:conm}
	Given a monotone, non-negative submodular function $f$, its multilinear extension $F$ is concave along positive directions, {\em i.e.}, for all $\mathbf{x_0}$ and all $\mathbf{d}\ge \mathbf{0}$, we have that $F_{\mathbf{x_0},\mathbf{d}}(\lambda)=F(\mathbf{x_0}+\lambda\mathbf{d})$ is a monotone, non-negative and concave function of $\lambda$.
\end{lemma}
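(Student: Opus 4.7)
The plan is to work directly from the probabilistic interpretation of the multilinear extension. First, I would observe that $F(\mathbf{x}) = \mathbb{E}[f(R(\mathbf{x}))]$, where $R(\mathbf{x}) \subseteq C$ is the random set including each $j$ independently with probability $x_j$. Non-negativity of $F_{\mathbf{x_0},\mathbf{d}}(\lambda)$ is then immediate from $f \ge 0$, and we can restrict attention to $\lambda$ such that $\mathbf{x_0} + \lambda \mathbf{d} \in [0,1]^m$ (outside this box the multilinear polynomial is still defined, but the probabilistic interpretation is what we will use on the box; the claim then extends to all $\lambda$ by polynomial continuation, but I will only need the analysis on the box).

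Next, I would handle monotonicity by computing first partials. Because $F$ is multilinear in $\mathbf{x}$, its partial derivative with respect to $x_j$ can be written as
\[
\frac{\partial F}{\partial x_j}(\mathbf{x}) \;=\; F(\mathbf{x}\mid x_j{\gets}1) - F(\mathbf{x}\mid x_j{\gets}0) \;=\; \mathbb{E}\bigl[f(R_{-j}(\mathbf{x}) \cup \{j\}) - f(R_{-j}(\mathbf{x}))\bigr],
\]
where $R_{-j}(\mathbf{x})$ samples every coordinate except $j$. By monotonicity of $f$, this is $\ge 0$. Hence for $\mathbf{d} \ge \mathbf{0}$,
\[
\tfrac{d}{d\lambda} F_{\mathbf{x_0},\mathbf{d}}(\lambda) \;=\; \langle \mathbf{d}, \nabla F(\mathbf{x_0}+\lambda\mathbf{d})\rangle \;\ge\; 0,
\]
giving monotonicity in $\lambda$.

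For concavity, the main step is to show that the Hessian $H(\mathbf{x})$ of $F$ satisfies $\mathbf{d}^\top H(\mathbf{x})\mathbf{d} \le 0$ for every $\mathbf{d} \ge \mathbf{0}$, after which the claim follows from $\tfrac{d^2}{d\lambda^2} F_{\mathbf{x_0},\mathbf{d}}(\lambda) = \mathbf{d}^\top H(\mathbf{x_0}+\lambda\mathbf{d})\mathbf{d}$. Multilinearity gives $\partial^2 F/\partial x_j^2 \equiv 0$, so the diagonal of $H$ vanishes. For $j \neq k$, differentiating the identity above once more in $x_k$ yields
\[
\frac{\partial^2 F}{\partial x_j\,\partial x_k}(\mathbf{x}) \;=\; \mathbb{E}\bigl[\,f(R'\cup\{j,k\}) - f(R'\cup\{j\}) - f(R'\cup\{k\}) + f(R')\,\bigr],
\]
where $R' = R_{-j,-k}(\mathbf{x})$ samples all coordinates but $j,k$. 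Submodularity of $f$ says exactly that this integrand is $\le 0$ pointwise. Therefore
\[
\mathbf{d}^\top H(\mathbf{x})\mathbf{d} \;=\; \sum_{j\neq k} d_j d_k \,\frac{\partial^2 F}{\partial x_j\,\partial x_k}(\mathbf{x}) \;\le\; 0,
\]
since each factor $d_j d_k$ is nonnegative and each second partial is nonpositive.

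The main obstacle, if any, is bookkeeping around the multilinear derivatives and making sure that the marginal formulas for the first and second partials are justified cleanly from the explicit polynomial in \cref{def:concext}; once those formulas are in hand, monotonicity and submodularity of $f$ translate mechanically into the sign conditions $\nabla F \ge 0$ and $\partial^2 F/\partial x_j \partial x_k \le 0$, and the positive-direction hypothesis $\mathbf{d} \ge \mathbf{0}$ is what prevents the vanishing diagonal of $H$ from causing trouble in the quadratic form.
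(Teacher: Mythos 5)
Your proof is correct and is essentially the standard argument from \citet{CalinescuCPV11}, which the paper cites rather than reproves: first partials are expected marginals (nonnegative by monotonicity), mixed second partials are expected second-order differences (nonpositive by submodularity), the diagonal of the Hessian vanishes by multilinearity, and the sign of $\mathbf{d}^\top H \mathbf{d}$ for $\mathbf{d}\ge\mathbf{0}$ follows. No gaps; your caveat about restricting to the box $[0,1]^m$, where the probabilistic interpretation of the derivatives is valid, is exactly the right bookkeeping.
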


We denote the multilinear extension of voter $i$'s utility function $u_i$ for a fractional committee $\mathbf{x}$ as $U_i(\mathbf{x})$. The above lemma implies that $\frac{\partial U_i}{\partial x_j} \ge 0$ for all $i \in W, j \in C$. The next lemma upper bounds the gradient. Here, $\mathbf{x_{-j}}$ is $\mathbf{x}$ with the $j^\text{th}$ dimension removed, and $T \sim \mathbf{x_{-j}}$ means that $T$ is chosen by including $\ell \in C \setminus \{j\}$ independently with probability $x_{\ell}$. 

\begin{lemma}[Bounded Gradient, \citep{CalinescuCPV11}]
\label{lem:gradsize}
$$ \frac{\partial \ux}{\partial x_j} = \sum_{T \subseteq C \setminus \{j\}} \Pr_{T \sim \mathbf{x_{-j}}}\left[T\right] \cdot \left( u_i(T \cup \{j\}) - u_i(T) \right) \le \max_j u_i(\{j\}) = 1.$$
\end{lemma}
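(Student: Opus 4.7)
The plan is to prove the identity for the partial derivative first, then bound it using submodularity and the earlier normalization.

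First I would derive the closed form for $\partial U_i / \partial x_j$. Starting from \cref{def:concext}, I split the sum defining $U_i(\mathbf{x})$ according to whether $j \in T$ or not:
\begin{align*}
U_i(\mathbf{x}) &= \sum_{T \ni j} u_i(T) \, x_j \prod_{\ell \in T \setminus \{j\}} x_\ell \prod_{\ell \notin T} (1-x_\ell) \\
&\quad + \sum_{T \not\ni j} u_i(T) \, (1-x_j) \prod_{\ell \in T} x_\ell \prod_{\ell \notin T \cup \{j\}} (1-x_\ell).
\end{align*}
Reindexing the first sum by $T' = T \setminus \{j\}$, both sums now range over $T' \subseteq C \setminus \{j\}$, and the coefficient of $x_j$ (as a univariate function, with all $x_\ell$ for $\ell \neq j$ held fixed) in $U_i(\mathbf{x})$ is exactly $\sum_{T' \subseteq C\setminus\{j\}} \bigl(u_i(T' \cup \{j\}) - u_i(T')\bigr) \prod_{\ell \in T'} x_\ell \prod_{\ell \notin T' \cup \{j\}} (1-x_\ell)$. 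Since $U_i$ is linear in $x_j$, differentiating recovers precisely the expression claimed, namely $\Ex{T \sim \mathbf{x_{-j}}}{u_i(T \cup \{j\}) - u_i(T)}$.

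Next I would use submodularity to bound each marginal gain. For every $T \subseteq C \setminus \{j\}$, applying submodularity with the chain $\varnothing \subseteq T$ and element $j$ yields
\[
u_i(T \cup \{j\}) - u_i(T) \;\le\; u_i(\{j\}) - u_i(\varnothing) \;=\; u_i(\{j\}) \;\le\; \max_{j'} u_i(\{j'\}) \;=\; 1,
\]
where the last equality uses the normalization performed at the start of \cref{sec:poly}. Since $\sum_{T \subseteq C \setminus \{j\}} \Prob{T \sim \mathbf{x_{-j}}}{T} = 1$, taking the expectation preserves the upper bound, giving $\partial U_i / \partial x_j \le 1$ as required.

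I expect the only mildly subtle step to be the bookkeeping in the first part: carefully reindexing the sums so that the coefficient of $x_j$ cleanly telescopes into a sum of marginal gains $u_i(T \cup \{j\}) - u_i(T)$ weighted by the product $\prod_{\ell \in T} x_\ell \prod_{\ell \notin T \cup \{j\}} (1 - x_\ell)$, which is exactly $\Prob{T \sim \mathbf{x_{-j}}}{T}$. Everything else — invoking submodularity and using the normalization — is immediate from the setup already established in the excerpt.
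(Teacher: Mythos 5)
Your proof is correct. The paper does not actually prove \cref{lem:gradsize} — it is cited directly from \citet{CalinescuCPV11} — and your argument (splitting the multilinear sum on whether $j \in T$, using that $U_i$ is affine in $x_j$ to read off the derivative as the expected marginal gain, then bounding each marginal $u_i(T \cup \{j\}) - u_i(T) \le u_i(\{j\}) \le 1$ via submodularity against $\varnothing$ and the normalization from the start of \cref{sec:poly}) is exactly the standard derivation the citation is standing in for.
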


\subsubsection{Nash Welfare and Local Search}
The input of the algorithm consists of three parts: The utility function $u_i(T)$ for each committee $T\subseteq C$ and voter $i \in W$, the sizes of the candidates $\{s_j\}$ and the total budget $B \in \left[\frac{\eps}{5m} b, b \right]$ on candidates in $C_{\ell}$. Recall that we assume all candidates in $C_s$ have already been chosen. We assume w.l.o.g. that $\sum_{j \in C_{\ell}} s_j > B$.  In our algorithm, we treat the utility function as an oracle which returns the utility of an agent $i$ on a given committee in $O(1)$ time.

In our algorithm, we will lower bound the allocation $x_j$ by an amount $\underline{x}_j$ that is defined as follows.
\[
\underline{x}_j=
\begin{cases}
1 & j\in C_s,\\
\frac{B \cdot \eps}{\sum_{j\in C_{\ell}} s_j} & j\in C_{\ell}.
\end{cases}
\]

\paragraph{Nash Welfare.} We will find a local optimum to the following {\em Nash Welfare} program to find the fractional solution. 

\[ \mbox{Maximize} \ \sum_{i \in W} \log U_i(\mathbf{x}) \]
\[ \begin{array}{rcll}
\sum_{j \in C_{\ell}} s_j x_j & \le & B \\
x_j & \in & [\underline{x}_j,1] & \forall j \in C
\end{array} \]

Let $\phi_i(\mathbf{x}) = \log U_i(\mathbf{x})$, and $\phi(\mathbf{x}) = \sum_{i \in W} \phi_i(\mx)$. This is the objective value of the above program.

\paragraph{Local Search.}  We call the local search procedure as \nw{}$(C, W, \{U_{i}\}, B)$. This procedure has a {\em step-size} parameter $\delta = \dstate$.  We start with any allocation $\mathbf{x_0} \in [\underline{x}_j,1]^m$ with $\sum_{j \in C_{\ell}} s_j x_j = B$. Note that such an allocation always exists since we assumed $\sum_{j \in C_{\ell}} s_j > B$, and further, $\sum_{j \in C_{\ell}} s_j \underline{x}_j = \eps B$.

Given the current allocation $\mathbf{x}$, we find candidates $j, \ell \in C_{\ell}$ such that (i) $x_j \le 1 -\frac{\delta}{s_j}$; (ii) $x_{\ell} \ge \underline{x}_{\ell}+\frac{\delta}{s_\ell}$; and (iii) the following condition holds: 
\begin{equation}
	\label{eq:local}
	\frac{\partial \phi(\mx)}{\partial x_j \cdot s_j} >  \frac{\partial \phi(\mx)}{\partial x_{\ell} \cdot s_{\ell}} +\frac{\eps}{b}.
\end{equation}

Recall that $b$ is the budget for the overall problem, not for the subroutine. While such a pair of candidates $(j, \ell)$ exists, the algorithm increases $x_j$ by $\delta/s_j$, and decreases $x_{\ell}$ by $\delta/s_{\ell}$. Note that this update is feasible for the program, and preserves the cost of the allocation in $C_{\ell}$ at exactly $B$, and is hence feasible for the above program. The procedure stops when such a pair of candidates $(j,\ell)$ no longer exists.

\subsubsection{Analysis} 
In the analysis below, we assume the local optimum can be efficiently computed and present properties of this solution. The running time analysis is presented in \cref{sec:polyproof}.

We show the following result, which at a high level, extends the analysis of the PAV rule for binary additive utilities and unit sizes in~\cite{PetersS20} to the continuous setting, with submodular utilities and arbitrary sizes. This result will show as a corollary that \nw{}$(C, W, \{U_{i}\}, B)$ finds a $2$-approximate fractional core outcome. 

\begin{theorem}
	\label{thm:grad}
	Given a set of voters $W$ with utilities $\{u_i\}$ and multilinear extensions $\{U_i\}$, and a cost budget $B \in \left[\frac{\eps}{5m} b, b \right]$ on $C_{\ell}$, let $\mathbf{x}$ denote the solution to \nw{}$(C, W, \{U_{i}\}, B)$.  Suppose a subset $S \subseteq W$ of voters can choose an allocation $\mathbf{y} \in [0,1]^m$ with $\cost(\mathbf{y}) \le b$ such that $U_i(\mathbf{y}) > \theta U_i(\mathbf{x})$ for all $i \in S$, where $\theta \ge 1$. Then:
\[
|S| < \frac{|W|}{B (1 -\eps)}\cdot \frac{\cost(\mathbf{y})}{\theta-1-2\eps}.
\]
\end{theorem}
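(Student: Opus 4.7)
The plan is to combine first-order local-optimality of the Nash Welfare iteration with concavity of the multilinear extension along positive directions. This yields a Lagrangian-style upper bound on the gradient $r_j/s_j$ per unit cost, which then converts the utility-approximation gain $(\theta-1)$ into a proportional bound on $|S|$.

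The first step is a per-voter gradient inequality. Fix $i \in S$ and let $\mathbf{d} = (\mathbf{y}-\mathbf{x})^+$, so that $\mathbf{x} + \mathbf{d} \ge \mathbf{y}$ coordinatewise. Monotonicity gives $U_i(\mathbf{x}+\mathbf{d}) \ge U_i(\mathbf{y}) > \theta U_i(\mathbf{x})$, and concavity of $U_i$ along the positive direction $\mathbf{d}$ (\cref{lem:conm}) gives $U_i(\mathbf{x}+\mathbf{d}) - U_i(\mathbf{x}) \le \sum_j d_j\, \partial U_i/\partial x_j$. Dividing by $U_i(\mathbf{x})$ and summing over $i \in S$, using that all partial derivatives of $U_i$ are non-negative (\cref{lem:gradsize}), yields
\[
(\theta-1)|S| < \sum_j (y_j-x_j)^+\, r_j, \qquad r_j := \sum_{i \in W}\frac{\partial \log U_i(\mathbf{x})}{\partial x_j}.
\]

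The second step bounds $r_j/s_j$ for every $j \in C_{\ell}$ with $x_j < 1 - \delta/s_j$ (call these ``increasable''). At the local optimum, for any increasable $j$ and any decreasable $\ell$ (i.e., $x_\ell > \underline{x}_\ell + \delta/s_\ell$), the swap condition \cref{eq:local} must fail, so $r_j/s_j \le r_\ell/s_\ell + \eps/b$. To control $\mu_{\min} := \min_{\ell~\text{decreasable}} r_\ell/s_\ell$, apply concavity of $U_i$ along the ray from $\mathbf{0}$ to $\mathbf{x}$ (with $U_i(\mathbf{0}) = 0$) to obtain $\sum_j x_j\,\partial U_i/\partial x_j \le U_i(\mathbf{x})$, and hence $\sum_j x_j r_j \le |W|$. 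The budget identity $\sum_{j\in C_{\ell}} x_j s_j = B$, together with $\sum_{j\ \text{not decreasable},\, j\in C_{\ell}} x_j s_j \le \eps B + m\delta$ (the floor $\underline{x}_j$ accounts for $\eps B$ and the boundary-layer thickness for $m\delta$), then gives $\mu_{\min} \le |W|/(B(1-\eps) - m\delta)$. Consequently $r_j/s_j \le |W|/(B(1-\eps)) + \eps/b + o(\eps)$ for every increasable $j$.

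The third step assembles the bound by splitting $\sum_j (y_j-x_j)^+ r_j$ over $C_s$, upper-saturated $C_{\ell}$, and increasable $C_{\ell}$. Candidates in $C_s$ contribute zero since $x_j = 1 \ge y_j$. For increasable $j$, the contribution is at most $(|W|/(B(1-\eps)) + \eps/b + o(\eps)) \cdot \cost(\mathbf{y})$; using $\cost(\mathbf{y}) \le b$, the $\eps/b$ term becomes at most $\eps$. For upper-saturated $j \in C_{\ell}$, use $(y_j - x_j)^+ \le \delta/s_j$ combined with the gradient bound $\partial U_i/\partial x_j \le 1$ and the uniform lower bound $U_i(\mathbf{x}) \ge \eps^2/(5 m^2)$ (which follows from $x_{j_i^*} \ge \underline{x}_{j_i^*}$ for any max-utility singleton $j_i^*$ together with the normalization $u_i(\{j_i^*\})=1$). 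The specific choice $\delta = \dstate$ is calibrated so that this entire boundary contribution is at most $\eps |S|$. Combining, $(\theta - 1)|S| < |W|\cost(\mathbf{y})/(B(1-\eps)) + 2\eps |S|$, which rearranges to the claimed bound.

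The main obstacle is tracking three independent sources of $\eps$-slack -- the $\eps B$ worth of budget locked up by the floors $\underline{x}_j$, the $\eps/b$ additive tolerance in the swap condition \cref{eq:local}, and the $\delta/s_j$ boundary layer where local optimality provides no direct gradient control -- and verifying that the aggressive calibration $\delta = \dstate$ makes the boundary-layer contribution absorbable into a single $\eps|S|$ of slack. The key lever is the uniform lower bound on $U_i(\mathbf{x})$ inherited from the floors $\underline{x}_j$; without it the quantity $r_j$ at the upper boundary cannot be controlled.
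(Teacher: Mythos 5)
Your proposal is correct and follows essentially the same route as the paper's proof: a first-order concavity bound along the positive direction $(\mathbf{y}-\mathbf{x})^+$ to charge the gain $(\theta-1)U_i(\mathbf{x})$ to the gradient at $\mathbf{x}$, the pairwise swap condition \cref{eq:local} combined with the identity $\sum_j x_j\,\partial U_i/\partial x_j \le U_i(\mathbf{x})$ to bound the per-unit-cost log-gradient on increasable coordinates, and absorption of the $\delta$-boundary layer via the floor $U_i(\mathbf{x}) \ge \underline{x}_j$. Your bookkeeping differs only cosmetically (a single concavity step in place of the paper's $\mathbf{y'},\mathbf{z}$ two-step, and a uniform bound on increasable coordinates in place of the paper's max-over-$M_1$/min-over-$R_1$ averaging), and your accounting of the $\eps B + m\delta$ budget locked below the decreasability threshold is in fact slightly more careful than the paper's.
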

\begin{proof}
	Given the local optimum $\mathbf{x}$ with $\sum_{j \in C_{\ell}} s_j x_j = B$. Let $M_1 = \{j \in C \mid x_j +\frac{\delta}{s_j}\le y_j\}$ and $M_2= \{j \in C \mid x_j <  y_j<x_j+\frac{\delta}{s_j}\}$. Note that $M_1 \cup M_2 \subseteq C_{\ell}$, since for all $j \in C_s$, we have $x_j = 1 \ge y_j$. 
	
	Let $\mathbf{y'}$ be the allocation where $y'_j = y_j$ for $j\in M_1$, and $y'_j = x_j$ otherwise. Note that when moving from $\mathbf{x}$ to $\mathbf{y'}$, utility increase for $i \in S$ can only come from allocation increase of candidates in $M_1$. Further, note that the direction $\mathbf{y'} - \mathbf{x}$ is non-negative, since $y_j > x_j$ for $j \in M_1$ and $y'_j = x_j$ for $j \notin M_1$. 
	
	By applying \cref{lem:conm} to the multilinear function $U_i$ and observing that $\frac{\partial U_i}{\partial x_j} \ge 0$, we have 
	$$\sum_{j \in M_1} \frac{\partial U_i(\mathbf{x})}{\partial x_j}\cdot y_j + U_i(\mathbf{x}) \ge \sum_{j \in C} \frac{\partial U_i(\mathbf{x})}{\partial x_j}\cdot (y'_j-x_j) + U_i(\mathbf{x}) \ge U_i(\mathbf{y'}),$$ 
	where the last inequality follows from the concavity of $U_i$ in non-negative directions. 
	
	Consider now the allocation $\mathbf{z}$, where $z_j = y_j$ if $j \in M_1 \cup M_2$, and $z_j = x_j$ otherwise. Note that $\mathbf{z} \ge \mathbf{y}$, so that $U_i(\mathbf{z}) \ge U_i(\mathbf{y})$. Further, the direction $\mathbf{z} - \mathbf{y'}$ is non-negative, since $z_j = y_j \ge x_j = y'_j$ for $j \in M_2$, and $z_j = y'_j$ otherwise. Applying \cref{lem:conm} again gives
\[
U_i(\mathbf{z})\le U_i(\mathbf{y'})+\sum_{j} \frac{\partial U_i(\mathbf{y'})}{\partial y'_j}\cdot (z_j-y_j') =  U_i(\mathbf{y'})+\sum_{j\in M_2} \frac{\partial U_i(\mathbf{y'})}{\partial x_j}\cdot (y_j-y_j').
\]

Combining the relations above and observing that $U_i(\mathbf{z}) \ge U_i(\mathbf{y})$, we have
$$ \sum_{j \in M_1} \frac{\partial U_i(\mathbf{x})}{\partial x_j}\cdot y_j + U_i(\mathbf{x}) \ge U_i(\mathbf{y'}) \ge U_i(\mathbf{y}) - \sum_{j\in M_2} \frac{\partial U_i(\mathbf{y'})}{\partial x_j}\cdot (y_j-y_j').$$

By \cref{lem:gradsize}, we have $\frac{\partial U_i(\mathbf{y'})}{\partial x_j} \le 1$ for all $j \in M_2$. Further, $y_j - y'_j = y_j - x_j \le \frac{\delta}{s_j} \le \frac{\eps^6}{312m^5}$, since $s_j \ge \frac{\eps}{m} b$ and $\delta = \dstate$. Therefore, for any $i \in S$, we have:
$$ \sum_{j\in M_1} \frac{\partial U_i(\mathbf{x})}{\partial x_j}\cdot  y_j \ge U_i(\mathbf{y}) - U_i(\mathbf{x}) - \frac{\eps^6}{312m^4} > (\theta - 1) U_i(\mathbf{x}) - \frac{\eps^6}{312m^4},$$
where we have used that $U_i(\mathbf{y}) > \theta U_i(\mathbf{x})$ for all $i \in S$.

Note now that 
\begin{equation}
    \label{eq:U_lb}
    U_i(\mathbf{x})\ge u_i^{\max} \cdot \underline{x}_j=\frac{B \cdot \eps}{\sum_{j\in C_{\ell}} s_j}\ge \frac{\eps^2 \cdot b/m}{b \cdot m}=\frac{\eps^2}{m^2},
\end{equation}
where we have used that $s_j \le b$ for all $j \in C$, and $B \ge \frac{\eps}{5 m} b$. Combining the two inequalities above gives
\[
\sum_{j\in M_1} \frac{\partial U_i(\mathbf{x})}{\partial x_j}\cdot  y_j\ge  (\theta - 1-\eps)\cdot U_i(\mathbf{x}).
\]
Summing this inequality over all candidates in $S$, we have
\[
\sum_{i\in S} \sum_{j\in M_1} \frac{\partial U_i(\mathbf{x})}{\partial x_j}\cdot \frac{y_j}{U_i(\mathbf{x})}\ge|S| \cdot(\theta-1-\eps).
\]
	Since $\frac{\partial \phi_i(\mx)}{\partial x_j }=\frac{1}{U_i(\mathbf{x})}\cdot \frac{\partial U_i(\mathbf{x})}{\partial x_j}$, we have 
	\begin{align*}
	\frac{\sum_{j\in M_1}\left( y_j\cdot s_j \cdot \frac{\partial \phi(\mx)}{\partial x_j \cdot s_j}\right)}{\sum_{j\in M_1} y_j\cdot s_j} \ge \frac{\sum_{j\in M_1}\left( \sum_{i\in S}\frac{\partial \phi_i(\mx)}{\partial x_j \cdot s_j}\cdot y_j\cdot s_j\right)}{\sum_{j\in M_1} y_j\cdot s_j} &\ge \frac{\sum_{i \in S} \sum_{j\in M_1}\frac{\partial U_i(\mathbf{x})}{\partial x_j}\cdot \frac{y_j}{U_i(\mathbf{x})}}{\sum_{j\in M_1} y_j \cdot s_j}\\
	&\ge \frac{|S| \cdot (\theta - 1-\eps)}{\cost(\mathbf{y})}. 	    
	\end{align*}
	Therefore,
	\[
	\max_{j\in M_1} \frac{\partial \phi(\mx)}{\partial x_j \cdot s_j} > \frac{|S|\cdot (\theta - 1-\eps)}{\cost(\mathbf{y})} \ge \frac{|S|\cdot (\theta - 1-2\eps)}{\cost(\mathbf{y})}+\veps,
	\]
	where we have used that $\cost(\mathbf{y}) \le b$ and $|S| \ge 1$. Let $j_1$ denote the $j\in M_1$ that achieves the maximum in the previous inequality.
	
	Now, let $R_1 = \{j \in C_{\ell} \mid x_j \ge \underline{x}_j+\frac{\delta}{s_j}\}$. Applying \cref{lem:conm} along the positive direction $\mathbf{x}$ gives $\sum_{j \in C} x_j \cdot \frac{\partial U_i(\mathbf{x})}{\partial x_j}\le U_i(\mathbf{x})$. Since $\frac{\partial U_i(\mathbf{x})}{\partial x_j} \ge 0$ for all $j \in C$, we have
	$$ \sum_{j \in R_1} x_j \cdot \frac{\partial U_i(\mathbf{x})}{\partial x_j}\le U_i(\mathbf{x}). $$
	Summing this over all $i \in W$, we have:
	$$
	\sum_{i \in W} \sum_{j \in R_1} x_j \cdot s_j \cdot \frac{\partial \phi_i(\mx)}{\partial x_j \cdot s_j}  = \sum_{i \in W} \frac{\sum_{j \in R_1} x_j \cdot \frac{\partial U_i(\mathbf{x})}{\partial x_j}} {U_i(\mathbf{x})} \le \sum_{i \in W} \frac{U_i(\mathbf{x})}{U_i(\mathbf{x})} = |W|.
	$$
	Therefore,
	$$ \frac{\sum_{j \in R_1} x_j \cdot s_j \cdot \frac{\partial \phi(\mx)}{\partial x_j \cdot s_j}}{\sum_{j \in R_1} x_j \cdot s_j} = \frac{\sum_{i\in W} \sum_{j \in R_1} x_j \cdot s_j \cdot \frac{\partial \phi_i(\mx)}{\partial x_j \cdot s_j}}{\sum_{j \in R_1} x_j \cdot s_j}\le \frac{|W|}{\sum_{j \in R_1} x_j \cdot s_j} \le  \frac{|W|}{B(1-\eps)}.$$
	The last inequality holds since $\sum_{j \in C_{\ell} \setminus R_1} x_j \cdot s_j\le \sum_{j\in C_{\ell} \setminus R_1} \frac{\delta}{s_j}\cdot s_j \le m \delta \le \eps B$, where we have used that $\delta = \dstate$ and $B \ge \frac{\eps}{ 5 m} b$. This implies  $\sum_{j \in R_1} s_j x_j \ge B( 1- \eps).$
	
	Therefore,
	$$ \min_{j \in R_1} \frac{\partial \phi(\mx)}{\partial x_j \cdot s_j} \le \frac{|W|}{B(1-\eps)}.$$
	Let $j_2$ denote the $j \in R_1$ that achieves this minimum. 
	
	Note that $x_{j_1} < y_{j_1}$ and $x_{j_2} > \underline{x}_{j_2}$ by assumption. Further, since $\mathbf{x}$ is locally optimal, \cref{eq:local} cannot hold for $j = j_2$ and $\ell = j_1$. This means $\frac{\partial \phi(\mx)}{\partial x_{j_1}\cdot s_{j_1}} \le \frac{\partial \phi(\mx)}{\partial x_{j_2}\cdot s_{j_2}}+\veps$. Therefore, we have
	$$\frac{|S| \cdot (\theta - 1-2\eps)}{\cost(\mathbf{y})} < \frac{\partial \phi(\mx)}{\partial x_{j_1}\cdot s_{j_1}}-\veps \le \frac{\partial \phi(\mx)}{\partial x_{j_2}\cdot s_{j_2}}  \le \frac{|W|}{B(1-\eps)}.$$
	Rearranging completes the proof.
\end{proof}

As an aside, the following corollary is immediate if all one wants is an approximate fractional core on the utilities $\{U_i\}$ and budget $b$. Note that the proof trivially extends to {\em any} continuous, concave utilities $\{U_i\}$, since these are concave in positive directions.

\begin{corollary}
	The solution $\mathbf{x}$ of \nw{}$(C, V, \{U_{i}\}, b)$ is in a $2$-approximate fractional core for {\em any} continuous concave utilities $\{U_i\}$.
\end{corollary}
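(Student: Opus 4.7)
The plan is to invoke \cref{thm:grad} with $W=V$, $B=b$, and $\theta=2$, and derive a contradiction from the assumption that $\mathbf{x}$ fails to be in the $2$-approximate fractional core. Concretely, suppose there exist $S\subseteq V$ and a fractional committee $\mathbf{y}\ge 0$ with $\cost(\mathbf{y})\le \frac{|S|}{n}\cdot b$ and $U_i(\mathbf{y})>2\,U_i(\mathbf{x})$ for every $i\in S$. Applying \cref{thm:grad} with this choice of $\theta$ yields
\[
|S| \;<\; \frac{n}{b(1-\eps)}\cdot \frac{\cost(\mathbf{y})}{1-2\eps}\;\le\; \frac{n}{b(1-\eps)(1-2\eps)}\cdot \frac{|S|\,b}{n}\;=\;\frac{|S|}{(1-\eps)(1-2\eps)}.
\]
For a pure existence argument we take $\eps\to 0$ (running the local search to exact stationarity), which collapses the right-hand side to $|S|$ and produces the contradiction $|S|<|S|$.

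The second step is to check that the proof of \cref{thm:grad} actually carries over when the multilinear extensions $\{U_i\}$ are replaced by arbitrary continuous, monotone, non-negative concave utilities. Inspecting that proof, the only structural facts used are: monotonicity (so that $\partial U_i/\partial x_j\ge 0$), and concavity along non-negative directions (invoked twice via \cref{lem:conm} to compare $U_i(\mathbf{x})$ with $U_i(\mathbf{y'})$ and $U_i(\mathbf{y'})$ with $U_i(\mathbf{z})$, and once along the direction $\mathbf{x}$ itself to bound $\sum_j x_j\,\partial U_i/\partial x_j$). For any continuous concave $U_i$, concavity along positive directions is merely ordinary concavity restricted to a ray, so these steps go through verbatim.

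The one place where something multilinear-specific enters is \cref{lem:gradsize} together with the lower bound $U_i(\mathbf{x})\ge \eps^2/m^2$ coming from the $\underline{x}_j$ floor; these contribute the additive $\eps$ and $2\eps$ slack terms in \cref{thm:grad}. But these terms are purely artifacts of the discretization needed to obtain polynomial running time. In the $\eps=0$ existence regime there is no step-size $\delta$, no floor $\underline{x}_j$, no set $M_2$ to control, and the chain of inequalities in the proof goes through with $\theta-1$ in place of $\theta-1-2\eps$, giving the cleaner bound $|S|<\frac{|W|}{B}\cdot \frac{\cost(\mathbf{y})}{\theta-1}$. Plugging in $W=V$, $B=b$, $\theta=2$ and $\cost(\mathbf{y})\le |S|b/n$ again yields the desired contradiction.

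I do not anticipate a serious obstacle here: the corollary is essentially a direct consequence of \cref{thm:grad} combined with $\cost(\mathbf{y})\le \frac{|S|}{n}\cdot b$. The only delicate point is the sanity check that none of the auxiliary lemmas invoked inside the proof of \cref{thm:grad} secretly depends on the multilinear structure of $U_i$ beyond concavity along positive rays; as argued above, they do not.
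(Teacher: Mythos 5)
Your proposal is correct and matches the paper's own proof: both apply \cref{thm:grad} with $\theta=2$ and $\eps=0$ (the existence regime), combine the resulting bound with $\cost(\mathbf{y})\le\frac{|S|}{n}b$ to get the contradiction $|S|<|S|$, and note that the argument only uses concavity along positive directions, which any concave $U_i$ satisfies.
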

\begin{proof}
	Setting $\theta = 2$ and $\varepsilon = 0$ in \cref{thm:grad}, if there exists an $S$ and a solution $\mathbf{y}$ such that $U_i(\mathbf{y}) > 2 U_i(\mathbf{x})$ for all $i \in S$, then $|S| < \frac{|W|}{b}\cdot \cost{}(\mathbf{y})$. This contradicts \cref{def:core2} and thus completes the proof.
\end{proof}

\subsection{Randomized Rounding and Satisfied Voters}
We now present a randomized rounding scheme on the fractional solution constructed above. Since the utilities of voters are different but the allocation is the same, we will need the rounding of the fractional allocation to be {\em oblivious} to the utilities. The natural approach is therefore randomized rounding. Clearly, if we randomly round, the expected utility of a voter in the integral committee is at least that in the fractional core. However, the same cannot be said for an arbitrary \emph{realization} of the randomization: there can be many voters having lower-than-expected utilities, and they can form a coalition and deviate. Further, the size of the resulting committee can exceed the bound $B$.  Therefore, we cannot directly argue that the resulting integral committee will lie in the core. 

Our contribution in this section is to show a rounding procedure that ensures that with constant probability, a constant fraction of voters obtain at least a constant fraction of the utilities they obtained in the fractional solution. We will use this as a subroutine in our main procedure in \cref{sec:proof} in order to argue that these voters will not form part of any deviating coalition. 


\subsubsection{Rounding Procedure}
To begin with, in the sequel, we will distinguish between items that are fractionally and integrally allocated as follows:

\begin{definition}[Fully and Fractionally Allocated Items]
	An item $j \in C$ in the fractional solution is \emph{fully allocated} if $x_j \ge 1$; we call it \emph{fractionally allocated} if $x_j \in (0, 1)$. Note that all items in $C_s$ are fully allocated.
\end{definition}




The rounding procedure has a parameter $\kappa \le 1$, and executes as shown in \cref{alg:rand}.

\begin{algorithm}[htbp]
	\caption{Randomized Rounding of Fractional Allocation \label{alg:rand} $\mathbf{x}$}
	\begin{algorithmic}[1]
		\Statex
		\Function{\idep}{$C, W, \{U_{i}\}, B$} 
		\Let{$\mathbf{x}$} {\nw{}($C, W, \{U_{i}\}, \kappa B$)}
		\Let{$O$}{$C_s$} 
		\Let{$T_2$}{ $\{j \in C_{\ell} \mid s_j \le \kappa B\}$}
		\State Include $j \in T_2$ in $O$ independently with probability $\min(1,x_j)$
		\State \Return{$O$}
		\EndFunction
	\end{algorithmic}
\end{algorithm}

\subsubsection{Satisfied Voters}
We now define what it means for a voter to be {\em satisfied}, and present a bound on the number of satisfied voters after the procedure \idep{} is applied.

\begin{definition}[$\gamma$-Satisfied]
	\label{def:gammaSatisfied}
	Let $\mathbf{x}$ be the fractional solution of \nw{}$(C, W, \{U_i\}, \kappa B)$, and $O$ be the resulting integer committee found by \idep{}$(C, W, \{U_{i}\}, B)$. For $\gamma \ge 2$, we say that voter $i$ is \emph{$\gamma$-satisfied} by $O$ w.r.t. solution $\mathbf{x}$ if there is a $q \in C$ such that $u_i(O \cup \{q\}) \ge \frac{U_i(\mathbf{x})}{\gamma}$.
\end{definition}

Our main result in this section is the following theorem.

\begin{theorem}[Constant Fraction of Constant-Satisfied Voters for Submodular Utilities]
	\label{thm:nw_satisfied}
	Given the fractional solution $\mathbf{x}$ produced by \nw{}$(C, W, \{U_{i}\}, \kappa B)$ where $|W| = n'$, there is a integral committee $O$ produced by \idep{}$(C, W, \{U_{i}\}, B)$ with at least $(1 - \beta)n'$ $\gamma$-satisfied voters, where 
	\begin{equation}
	\label{eq:beta}
	    \beta = \left( \kappa e^{1-\kappa} \right)^{\frac{1}{\kappa}} +  (\gamma -1) e^{2-\gamma}.
	\end{equation}
	Further, this committee is feasible, so that $\sum_{j \in O \cap C_{\ell}} s_j \le B$.
\end{theorem}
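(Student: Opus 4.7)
I would analyze the random committee $O$ produced by \idep{} via a single expectation argument. Let $N$ count the number of $\gamma$-unsatisfied voters and $Z = \mathbbm{1}[\sum_{j \in O \cap C_\ell} s_j > B]$. If $\mathbb{E}[N + n' Z] \le \beta n'$, then some realization must have $N + n' Z \le \beta n'$; since $\beta < 1$, that realization necessarily has $Z = 0$ and hence $N \le \beta n'$. So it suffices to bound the two expectations separately so that they sum to $\beta n'$. For the feasibility term: \nw{} returns $\mathbf{x}$ with $\sum_{j \in C_\ell} s_j x_j = \kappa B$, every $j \in T_2$ has $s_j \le \kappa B$, and $O \cap C_\ell \subseteq T_2$ is obtained by independent rounding. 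Normalizing $Y_j := s_j \mathbbm{1}[j \in O]/(\kappa B) \in [0,1]$ gives independent summands with mean at most $1$, and the multiplicative Chernoff bound at $1 + \delta = 1/\kappa$ yields $\Pr[Z = 1] \le (\kappa e^{1 - \kappa})^{1/\kappa}$, which is the first term of $\beta$.

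\textbf{Per-voter satisfiedness.} Write $\mathbf{x}'$ for the projection of $\mathbf{x}$ with $x'_j = 0$ on the ``big'' items $C_\ell \setminus T_2$ and $x'_j = x_j$ elsewhere; by construction $O$ is the independent rounding of $\mathbf{x}'$, so $\mathbb{E}[u_i(O)] = U_i(\mathbf{x}')$ for every voter $i$. For each voter $i$ I would pick an additament $q_i$ by cases. If $U_i(\mathbf{x}) \le \gamma$, pick $q_i$ to be a single-candidate maximizer of $u_i$ over $C$; normalization $u_i^{\max}=1$ and monotonicity then give $u_i(O \cup \{q_i\}) \ge 1 \ge U_i(\mathbf{x})/\gamma$ deterministically. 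Otherwise pick $q_i$ to be a single-candidate maximizer of $u_i$ over $C_\ell \setminus T_2$; concavity of $U_i$ along positive directions (\cref{lem:conm}) together with the gradient bound (\cref{lem:gradsize}) yields
\[ U_i(\mathbf{x}) - U_i(\mathbf{x}') \le \sum_{j \in C_\ell \setminus T_2} x_j \cdot \frac{\partial U_i(\mathbf{x}')}{\partial x_j} \le u_i(\{q_i\}) \sum_{j \in C_\ell \setminus T_2} x_j < u_i(\{q_i\}), \]
where the strict inequality uses $s_j > \kappa B$ on big items against $\sum_{j \in C_\ell} s_j x_j = \kappa B$. Combined with $u_i(O \cup \{q_i\}) \ge \max(u_i(O), u_i(\{q_i\}))$, a one-line arithmetic rearrangement shows that voter $i$ is $\gamma$-satisfied whenever $u_i(O) \ge U_i(\mathbf{x}')/(\gamma - 1)$. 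Since $U_i(\mathbf{x}) > \gamma$ forces $U_i(\mathbf{x}') > \gamma - 1$, I would apply a Bernoulli-type multiplicative Chernoff bound for the multilinear extension of a monotone submodular function with marginals in $[0,1]$, namely $\Pr[u_i(O) < (1 - \delta) U_i(\mathbf{x}')] \le (e^{-\delta}/(1 - \delta)^{1-\delta})^{U_i(\mathbf{x}')}$, at $1 - \delta = 1/(\gamma - 1)$; since the base lies in $(0,1)$ and $U_i(\mathbf{x}') \ge \gamma - 1$, this yields a per-voter failure probability of at most $(\gamma - 1) e^{2 - \gamma}$. Summing over voters gives $\mathbb{E}[N] \le (\gamma - 1) e^{2 - \gamma} n'$, and combining with the feasibility term closes the expectation bound.

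\textbf{Main obstacle.} The technically delicate ingredient is the Bernoulli form of the multilinear Chernoff bound: the inequality most commonly cited from Chekuri--Vondr\'ak--Zenklusen is only the sub-Gaussian form $\Pr[f(R) < (1-\delta) F(\mathbf{x})] \le e^{-F(\mathbf{x})\delta^2/8}$, which is too loose to reproduce the $(\gamma-1) e^{2-\gamma}$ term exactly. I expect this to require a moment-generating-function argument tailored to monotone submodular $f$ with marginals in $[0,1]$ (bounding $\mathbb{E}[e^{-t u_i(O)}]$ by a Bernoulli-like product and optimizing $t$), or an appeal to the sharper strengthening in the literature. Everything else --- the choice of additament, the concavity-plus-gradient comparison $U_i(\mathbf{x}) - U_i(\mathbf{x}') < u_i(\{q_i\})$, the one-line case split on $u_i(O \cup \{q_i\})$, and the final averaging --- combines routinely once this concentration inequality is in hand.
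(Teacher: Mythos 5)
Your proposal is correct and follows essentially the same route as the paper: independent rounding, a Chernoff bound for the size constraint, the Bernoulli-form lower-tail bound for the multilinear extension, and a best-large-singleton additament, differing only in bookkeeping (you split on $U_i(\mathbf{x}) \le \gamma$ and use the normalization $u_i^{\max}=1$ to meet the concentration lemma's precondition with mean $U_i(\mathbf{x}') > \gamma-1$, where the paper splits on whether $G_1(\mathbf{x}) \ge U_i(\mathbf{x})/\gamma$ and rescales using the assumption that all singletons in $T_2$ are small; your $N + n'Z$ device also makes the simultaneity of feasibility and satisfaction cleaner than the paper's per-voter union bound). The inequality you flag as the main obstacle is exactly the paper's \cref{lem:chernoff2}, stated in the required Bernoulli form and cited from \citep{ChekuriVZ10}, so no new moment-generating-function argument is needed.
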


For the proof, we will need the following lemma is the analog of \cref{lem:chernoff} for submodular functions.

\begin{lemma}[Lower Tail Bound, \citep{ChekuriVZ10}]
	\label{lem:chernoff2}
	Assume $f(\{j\}) \in [0,1]$ for all $j \in C$, and let $\mu_0 = F(\mathbf{x})$, where $F$ is the multilinear extension of $f$. Let $O$ denote the outcome of independent randomized rounding applied to $\mathbf{x} \in [0,1]^m$.  Then for any $\rho > 1$:
	$$ \Prob{}{f(O) \le \frac{\mu_0}{\rho}} \le \left(\rho e^{1-\rho} \right)^{\frac{\mu_0}{\rho}} .$$
\end{lemma}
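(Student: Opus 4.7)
The plan is to reduce everything to a Chernoff-style exponential moment bound. Concretely, I would first establish the moment generating inequality
\[
\mathbb{E}\bigl[e^{-t f(O)}\bigr] \le \exp\bigl((e^{-t}-1)\,F(\mathbf{x})\bigr), \qquad t \ge 0,
\]
and then apply Markov's inequality to $e^{-tf(O)}$ at $t = \ln \rho$. After substituting, the right-hand side becomes $\rho^{\mu_0/\rho}\cdot e^{(1/\rho-1)\mu_0} = (\rho\,e^{1-\rho})^{\mu_0/\rho}$, which is exactly the claimed bound. So the entire theorem reduces to proving the MGF inequality above.

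For the MGF inequality, I would use a path argument along $[0,\mathbf{x}]$. Let $\psi(y) := \ln \mathbb{E}[e^{-t f(O_y)}]$, where $O_y$ is the independent rounding of $y \in [0,1]^m$ and note that $\psi(\mathbf{0}) = 0$. It then suffices to prove the coordinate-wise gradient inequality
\[
\frac{\partial \psi}{\partial y_j}(y) \;\le\; (e^{-t}-1)\,\frac{\partial F}{\partial y_j}(y)
\]
for every $j$ and every $y \in [0,1]^m$; integrating along the straight-line path $\mathbf{0} \to \mathbf{x}$ then delivers the target bound $\psi(\mathbf{x}) \le (e^{-t}-1)F(\mathbf{x})$.

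To prove the gradient inequality at coordinate $j$, condition on the randomness in coordinates $\neq j$ and write $a := f(O^0)$ and $b := f(O^0 \cup \{j\}) - f(O^0) \in [0,1]$, where $O^0$ is the random subset of $C\setminus\{j\}$ drawn from $y_{-j}$. A direct calculation gives $\partial_{y_j} \mathbb{E}[e^{-t f(O_y)}] = \mathbb{E}[e^{-ta}(e^{-tb}-1)]$ and $\partial_{y_j} F(y) = \mathbb{E}[b]$. The convexity estimate $e^{-tb} - 1 \le b(e^{-t}-1)$ on $b \in [0,1]$ (valid because $b\mapsto e^{-tb}$ is convex and agrees with its secant at the endpoints) yields $e^{-ta}(e^{-tb}-1) \le (e^{-t}-1)\,b\,e^{-ta}$. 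Next, submodularity of $f$ makes $b$ anti-monotone in $O^0$, while monotonicity of $f$ makes $e^{-ta}$ anti-monotone in $O^0$; since $O^0$ is a product of independent Bernoullis, the Harris/FKG inequality gives $\mathbb{E}[e^{-ta}\, b] \ge \mathbb{E}[e^{-ta}]\,\mathbb{E}[b]$. Multiplying by the non-positive factor $(e^{-t}-1)$ reverses the inequality into $\mathbb{E}[e^{-ta}(e^{-tb}-1)] \le (e^{-t}-1)\,\mathbb{E}[e^{-ta}]\,\mathbb{E}[b]$. Dividing by the denominator $\mathbb{E}[e^{-t f(O_y)}] \le \mathbb{E}[e^{-ta}]$ (a final use of monotonicity, since adding $j$ can only increase $f$ and hence decrease $e^{-tf}$) produces the desired gradient bound.

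The main obstacle is the correlation step: one has to set up the conditioning so that both $b$ and $e^{-ta}$ are monotone in the \emph{same} direction in the underlying independent Bernoullis, since the factor $(e^{-t}-1)$ we ultimately multiply by is negative and we need FKG to point the right way. This is precisely where monotonicity and submodularity of $f$ play distinct and complementary roles, and it is the only substantive combinatorial content in the proof; everything else is routine exponential-moment calculus and the straight-line integration in $y$.
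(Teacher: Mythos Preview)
The paper does not prove this lemma at all; it is quoted verbatim as a result from \cite{ChekuriVZ10} and used as a black box in the proof of \cref{thm:nw_satisfied}. So there is no ``paper's own proof'' to compare against.

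Your argument is nonetheless correct and is essentially the standard route to this inequality. A few remarks for precision. First, the lemma as stated does not explicitly assume $f$ is monotone submodular with $f(\varnothing)=0$, but this is implicit from the context (all the $u_i$'s in the paper satisfy it), and your proof uses all three properties: monotonicity to get $b\ge 0$ and the denominator bound $\mathbb{E}[e^{-tf(O_y)}]\le \mathbb{E}[e^{-ta}]$; submodularity both to get $b\le f(\{j\})\le 1$ and to make $b$ anti-monotone in $O^0$ for the FKG step; and $f(\varnothing)=0$ so that $\psi(\mathbf{0})=0$. Second, the sign bookkeeping in the last two steps is the delicate part and you have it right: since the numerator bound $(e^{-t}-1)\,\mathbb{E}[b]\,\mathbb{E}[e^{-ta}]$ is nonpositive, replacing the denominator $\mathbb{E}[e^{-tf(O_y)}]$ by the larger quantity $\mathbb{E}[e^{-ta}]$ only increases the ratio, which is the direction you need. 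Third, the straight-line integration is valid because the path $s\mapsto s\mathbf{x}$ stays in $[0,1]^m$.

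The exponential-moment calculation and the choice $t=\ln\rho$ are routine and check out: $\Pr[f(O)\le\mu_0/\rho]\le e^{t\mu_0/\rho}\,e^{(e^{-t}-1)\mu_0}=\rho^{\mu_0/\rho}e^{(1/\rho-1)\mu_0}=(\rho e^{1-\rho})^{\mu_0/\rho}$.
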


\begin{proof}[Proof of \cref{thm:nw_satisfied}]
	We show that for each voter $i$, the probability that $i$ is $\gamma$-satisfied is at least $1 - \beta$. This implies the number of $\gamma$-satisfied voters is at least $(1 - \beta) n'$ in expectation, and thus it is at least $(1 - \beta) n'$ for some realization produced by \idep{}.
	
	Let $T_1 = \{ j \in C_{\ell} \mid s_j \ge \kappa B\}$ and $T_2 = C \setminus T_1$. Fix voter $i$ with utility function $u_i$, and multilinear extension $U_i$. 
	
	Let $G_1$ denote the multilinear extension restricted only to candidates in $T_1$, that is,
	$$ G_1(\mathbf{x}) =  \sum_{T \subseteq T_1} u_i(T) \left(\prod_{j \in T} x_j \right) \left(\prod_{j \in T_1\setminus T} (1-x_j) \right).$$
	and similarly, $G_2$ denote the multilinear extension restricted to $T_2$, that is,
	$$ G_2(\mathbf{x}) =  \sum_{T \subseteq T_2} u_i(T) \left(\prod_{j \in T} x_j \right) \left(\prod_{j \in T_2\setminus T} (1-x_j) \right).$$
	By sub-additivity of $u_i$, we have $U_i(\mathbf{x})  \le G_1(\mathbf{x}) + G_2(\mathbf{x})$, and by monotonicity of $u_i$, we have $U_i(\mathbf{x}) \ge \max\left(G_1(\mathbf{x}), G_2(\mathbf{x}) \right)$.
	
	We now split the analysis into several cases:
	
	\noindent\textbf{Case (1):} Suppose $G_1(\mathbf{x}) \ge \frac{1}{\gamma} U_i(\mathbf{x})$.  In this case, we make two observations. First,
	$$ G_1(\mathbf{x}) \le \sum_{T \subseteq T_1} \left(\sum_{j \in T} u_i(\{j\})\right) \left(\prod_{j \in T} x_j \right) \left(\prod_{j \in T_1\setminus T} (1-x_j) \right) \le \sum_{j \in T_1} x_j u_i(\{j\}).$$
	Further, since $\sum_{j \in T_1} s_j x_j \le \kappa B$, and since $s_j \ge \kappa B$ for all $j \in T_1$, we have $\sum_{j \in T_1} x_j \le 1.$ Putting these together, we have
	$$ G_1(\mathbf{x}) \le \max_{j \in T_1} u_i(\{j\}).$$
	Therefore, using $j^* = \mbox{argmax}_{j \in T_1} u_i(\{j\})$ as additament makes $i$ achieve utility at least $1/\g$ fraction of $U_i(\mathbf{x})$. Therefore, $i$ is $\gamma$-satisfied just by the additament.
	
	\noindent\textbf{Case (2):} We now assume $G_1(\mathbf{x}) \le \frac{1}{\g} U_i(\mathbf{x})$ and thus $G_2(\mathbf{x}) \ge (1-1/\g) U_i(\mathbf{x})$. Suppose there exists $\ell \in T_2$ such that $u_i(\{\ell\}) \ge \frac{1}{\g - 1} G_2(\mathbf{x})$. Then, just using $\ell$ as additament causes the utility of $i$ to be at least $1/\g$ fraction of $F_i(\mathbf{x})$ and voter $i$ is $\gamma$-satisfied by the additament $\ell$.
	
	\noindent\textbf{Case (3):} In the final case, we have $G_2(\mathbf{x}) \ge (1-1/\g) U_i(\mathbf{x})$, and we assume that for all $j \in T_2$, $u_i(\{\ell\}) < \frac{1}{\g - 1} G_2(\mathbf{x})$. Let $O' = O \cap C_{\ell}$.
	
	First note that $\Ex{}{\sum_{j \in O'} s_j} \le \kappa B$, since $O'$ is the result of randomized rounding of $\mathbf{x}$ and since $\sum_{j \in C_{\ell}} s_j x_j \le \kappa B$. Further, $O' \subseteq T_2 \cap C_{\ell}$, and for all $j \in T_2 \cap C_{\ell}$, we have $s_j \le \kappa B$. Therefore, a standard application of Chernoff bounds yields:
	\[
	\Prob{}{\sum_{j \in O'} s_j > B} \le \left( \kappa e^{1-\kappa} \right)^{\frac{1}{\kappa}}.
	\]
	Further, note that for all $j \in T_2$, we have that the marginal $u_i(\{\ell\}) < \frac{1}{\g - 1} G_2(\mathbf{x})$. Applying \cref{lem:chernoff2} with $\rho = \g -1$, we have:
	$$ \Prob{}{u_i(O) \le \frac{G_2(\mathbf{x})}{\g-1}} \le (\g -1) e^{2-\g}.$$
	Therefore, by union bounds, with probability at least $1 - \beta$, we have both events: (1) The committee $O'$ is feasible for size $B$; and (2) $u_i(O) \ge \frac{G_2(\mathbf{x})}{\g-1} \ge \frac{U_i(\mathbf{x})}{\g}$, implying $i$ is $\gamma$-satisfied by $O$. This completes the proof.
\end{proof}

The following corollary  shows that this step can be implemented in polynomial time.

\begin{corollary}
\label{cor:beta}
    For $\beta$ defined in \cref{eq:beta}, $n' = |W|$ and any $\eps \in (0,1)$, a committee $O \subseteq C$ with $(1-\beta - \eps)n'$ $\gamma$-satisfied voters can be computed with probability $1 - \frac{1}{\mbox{poly}(m,n)}$ in time $\mbox{poly}(n,m,1/\eps)$.
\end{corollary}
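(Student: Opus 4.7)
The plan is to combine a polynomial-time implementation of \nw{} (whose running-time analysis is deferred to \cref{sec:polyproof}) with the per-voter guarantee already extracted in the proof of \cref{thm:nw_satisfied}, and then boost success probability via repetition together with Markov's inequality. The core observation is that the random committee returned by \idep{} already satisfies each individual voter with probability at least $1-\beta$ \emph{jointly} with feasibility; so in expectation a $(1-\beta)$ fraction of voters are $\gamma$-satisfied, and an $\eps$ slack together with logarithmically many independent trials upgrades this to the stated high-probability bound.

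First I would invoke the polynomial-time bound for \nw{}$(C, W, \{U_i\}, \kappa B)$ from \cref{sec:polyproof} to produce the fractional allocation $\mathbf{x}$ in $\poly(n,m,1/\eps)$ time; the independent rounding inside \idep{} is trivially polynomial. For each voter $i$, define the indicator $Y_i \in \{0,1\}$ that equals $1$ exactly when the rounded committee $O$ is both feasible (i.e.\ $\sum_{j \in O \cap C_\ell} s_j \le B$) and $\gamma$-satisfies $i$. Inspecting the proof of \cref{thm:nw_satisfied}, one sees $\Pr[Y_i = 1] \ge 1-\beta$: in Cases~1 and~2 the voter is $\gamma$-satisfied deterministically via the additament, so only feasibility is stochastic and it already holds with probability at least $1-(\kappa e^{1-\kappa})^{1/\kappa} \ge 1-\beta$; in Case~3 the union bound in the original argument already bundles feasibility and $\gamma$-satisfaction into a single event of probability at least $1-\beta$.

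By linearity, $\mathbb{E}\bigl[\sum_i Y_i\bigr] \ge (1-\beta)n'$, and since the nonnegative random variable $n' - \sum_i Y_i$ has expectation at most $\beta n'$, Markov's inequality gives
\[
\Pr\!\left[\sum_i Y_i \ge (1-\beta-\eps)n'\right] \;\ge\; \frac{\eps}{\beta+\eps}.
\]
On this event $\sum_i Y_i > 0$, forcing $O$ to be feasible and to have at least $(1-\beta-\eps)n'$ $\gamma$-satisfied voters. To reach the advertised $1-1/\poly(m,n)$ success probability, I would run \idep{}$(C, W, \{U_i\}, B)$ independently $k = \Theta\bigl(\tfrac{\beta+\eps}{\eps}\log(nm)\bigr)$ times, verifying feasibility and counting $\gamma$-satisfied voters for each realization in polynomial time by probing $u_i(O\cup\{q\})$ for every $i\in W$ and $q\in C$ via the value oracle, and output the best feasible outcome. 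The failure probability is at most $(1 - \eps/(\beta+\eps))^k \le 1/\poly(n,m)$, and the total running time remains $\poly(n,m,1/\eps)$.

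The main delicate point is that Markov's inequality must be applied to the \emph{joint} event of feasibility and $\gamma$-satisfaction, not to the two separately, since both are driven by the same rounding randomness. Fortunately this is exactly the viewpoint under which the per-voter analysis of \cref{thm:nw_satisfied} already delivers $\Pr[Y_i = 1] \ge 1-\beta$, so the argument needs no new probabilistic tools beyond Markov and a standard boosting step.
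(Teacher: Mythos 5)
Your proposal is correct and follows essentially the same route as the paper: lower-bound the expected number of $\gamma$-satisfied voters via \cref{thm:nw_satisfied}, apply reverse Markov to the bounded variable to get success probability at least $\eps/(\beta+\eps)$, and boost by logarithmically many independent repetitions. Your explicit treatment of the \emph{joint} event of feasibility and satisfaction is a welcome clarification of what the paper leaves implicit in its one-line "standard application of sampling bounds."
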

\begin{proof}
    Let $X$ be the random variable indicating the number of $\gamma$-satisfied voters returned by \idep{}$(C, W, \{U_{i}\}, B)$. By \cref{thm:nw_satisfied}, we have $\Ex{}{X} = (1-\beta)n'$. Further, $X \le n'$. Therefore, $\Pr[X \ge (1-\beta - \eps)n] \ge \frac{\eps}{\beta + \eps} \ge \frac{\eps}{2}$. A standard application of sampling bounds now completes the proof.
\end{proof}

\subsection{The Constant Approximation to the Core}
\label{sec:proof}
In this section, we design an algorithm for the overall problem, and prove that it returns a committee in an $O(1)$-core. The algorithm will repeatedly construct fractional solutions and round them, using the algorithms developed above as subroutines. The analysis critically requires the local optimality property of the Nash Welfare objective, captured in \cref{thm:grad}.

\subsubsection{Algorithm}
\cref{alg:1} finds an approximate core solution in the following way: It iteratively computes the fractional local optimum to Nash Welfare on the remaining voters with a scaled-down budget, rounds it, eliminates voters that are $\gamma$-satisfied with respect to the solution at this iteration, scales down the budget again, and iterates on the remaining voters with the smaller budget. The scaling parameter $\omega$ for the budget will be determined later. The overall structure of the algorithm is similar to that in~\citep{DBLP:conf/stoc/JiangMW20}, though the details of constructing the fractional solution, and the resulting proof of correctness are entirely different.

\begin{algorithm}
	\caption{Iterative Rounding of the Nash Welfare Solution \label{alg:1}}
	\begin{algorithmic}[1]
		\Statex
		\Function{IterRound}{$C, V, \{u_{ij}\}, b$}
		\Let{$t$}{$0$} \Let{$V_0$}{$V$} \Let{$T^*$}{$C_{s}$} \Let{$b_0$}{$(1-\eps)(1 - \omega)b$}
        \While{$b_t \ge \frac{\eps}{m} b$}
		\Let{$\mathbf{x}_t$} {\nw{}($C, V_t, \{U_{i}\},\kappa b_t$)}
		\Let{$O_t$} {Solution of \idep{}$(C, W, \{U_{i}\}, b_t)$ that satisfies \cref{cor:beta}} \label{algline:round}
		\Let{$W_t$}{Voters in $V_t$ that are $\gamma$-satisfied by $O_t$ with respect to the solution $\mathbf{x}_t$}
		\Let{$V_{t + 1}$}{$V_t \setminus W_t$}
		\Let{$T^*$}{$T^* \cup O_t$}
		\Let{$b_{t + 1}$}{$\omega b_t$}
		\Let{$t$}{$t + 1$}
		\EndWhile
		\State \Return{\(T^*\) as the final integer solution}
		\EndFunction
	\end{algorithmic}
\end{algorithm}

\subsubsection{Analysis}
\label{sec:mainanalysis}
Let $n = |V|$, and recall that $b$ is the initial budget. We will show that the solution $T^*$ lies in the $\alpha$-core for the set $V$ of voters with size constraint $b$. First note that since $\sum_{j \in C_s} s_j \le \eps b$, and since each $O_t$ is feasible for budget $b_t$, we have: 
\[
\sum_{j \in T^*} s_j \le \eps b + \sum_t \sum_{j \in O_t} s_j \le \eps b + \sum_t b_t = \eps b + (1-\eps) (1-\omega) \sum_{t \ge 0} \omega^t b \leq b.
\]
Therefore, the solution $T^*$ is feasible for the size $b$.

First note that when $b_t < \frac{\eps}{m} b$, since all items in $C_{\ell}$ has $s_j > \frac{\eps}{m}b$, we have $\sum_{j \in C_{\ell}} x_{jt} < 1$.  Since $C_s \subseteq T^*$, for any fractional solution $\mathbf{x_t}$, we have $\max_{j \in C} u(\{j\}) \ge U_i(\mathbf{x_t})$, so that all voters are $1$-satisfied. This implies $V_t = \varnothing$ at termination, so that any voter $i \in V$ belongs to $W_{t'}$ for some $t'.$

For the purpose of contradiction, we assume the resulting solution $T^*$ is not in the $\alpha$-core (\cref{def:alphaCore}), where $\alpha \ge 1$ is a quantity we will determine later. Let $S$ denote the set of voters that deviate, and let $A$ denote the set of items they deviate to. We have $ \sum_{j \in A} s_j \le \frac{|S|}{n} \cdot b$, and $A$ provides an $\alpha$-factor larger utility to voters in $S$ even after including any additament.

Consider the voters in $W_t$, and let $S_t = S \cap W_t$. These voters are $\gamma$-satisfied by $O_t$ with respect to the fractional solution $\mathbf{x}_t$ (\cref{def:gammaSatisfied}). Therefore, if $i \in S_t$ deviates to $A$ to obtain an $\alpha$-factor larger utility, it must be that $U_i(A) \ge \frac{\alpha}{\gamma} \cdot U_i(\mathbf{x}_t)$. Let $\theta = \frac{\alpha}{\gamma}$. We will assume $\theta > 1$ below.

Using \cref{thm:grad}, since $\mathbf{x}_t$ is a local optimum, using $\mathbf{y}$ as the set $S_t$, and observing that all agents $i \in S_t$ have $U_i(\mathbf{y}) > \theta U_i(\mathbf{x}_t)$, where $\theta = \frac{\alpha}{\gamma}$, we have:
$$|S_t| \le \frac{n_t}{\kappa b_t\cdot (1-\eps)} \cdot \frac{\sum_{j \in A} s_j}{\theta - 1-2\eps}.$$

Summing this over all $t$, and using $ \sum_{j \in A} s_j \le \frac{|S|}{n} b$, we have:
\[
|S| = \sum_t |S_t| \le \frac{\sum_{j \in A} s_j}{(\theta - 1-2\eps)(1-\eps)} \cdot \sum_t \frac{n_t}{\kappa b_t} \le \frac{|S|}{n} \cdot \frac{b}{(\theta - 1-2\eps)(1-\eps)} \cdot \sum_t \frac{n_t}{\kappa b_t}.
\]
Therefore, for a blocking coalition to exist, we need:
\begin{equation} \label{eq:block1}
	\frac{b}{n} \cdot \sum_t \frac{n_t}{\kappa b_t} \ge (\theta - 1-2\eps)(1-\eps) = \left(\frac\alpha\gamma - 1-2\eps\right )(1-\eps).
\end{equation}
We will now set the parameters $\omega, \gamma, \alpha$ so that the above inequality is false. First note by \cref{cor:beta} that $n_{t+1} \le (\beta + \eps) n_t$ where $\beta$ satisfied \cref{eq:beta}. Further, $b_{t+1} = \omega b_t$. Therefore, 
\[
\frac{n_{t+1}}{b_{t+1}} \le \frac{\beta + \eps}{\omega} \cdot \frac{n_t}{b_t}
\]
with $\frac{n_0}{b_0} = \frac{n}{(1-\omega)(1-\eps) b}$. Therefore,
\begin{equation} \label{eq:block2}
	\frac{(1-\eps)b}{n} \cdot \sum_t \frac{n_t}{\kappa b_t} \le \frac{1}{(1-\omega) \kappa} \cdot \sum_{t \ge 0} \left( \frac{\beta + \eps}{\omega} \right)^t = \frac{\omega}{(1-\omega)(\omega - \beta - \eps) \kappa}.
\end{equation}
Combining \cref{eq:block1} and \cref{eq:block2}, for a blocking coalition to exist, we need
\[
\alpha \le \frac{\omega \gamma}{\kappa(1-\omega)\left(\omega - (\g - 1)e^{2-\g} - \left( \kappa e^{1-\kappa} \right)^{\frac{1}{\kappa}} - \eps\right)(1-\eps)^2}+(1+2\eps)\cdot \gamma.
\]
For an $\alpha$ slightly larger than the right-hand side, a blocking coalition will therefore not exist. Then The right-hand side of the above inequality is approximately minimized when $\omega = 0.23$, $\gamma = 7.435$, $\kappa = 0.21$ and $\eps \rightarrow 0$, yielding $\alpha < 67.37$.\footnote{Note that for this choice of $\kappa$, the subroutine \nw{}($C, V_t, \{U_{i}\},\kappa b_t$) is run with budget at least $\kappa b_t \ge \kappa \frac{\eps}{m} b \ge \frac{\eps}{5m} b$, so that the precondition of \cref{thm:grad} holds.} This finally yields the following theorem; the only missing detail is the running time of local search in \cref{sec:nash}, which we address in \cref{sec:polyproof}.

\main*

\subsection{The Running Time of Local Search}
\label{sec:polyproof}
We now show that \cref{alg:1} runs in polynomial time. This requires showing that the local search procedure in \cref{sec:nash} runs in polynomial time; the rest of the steps in \cref{alg:1} can easily be implemented efficiently. We will show this in two parts: The partial derivatives $\frac{\partial \phi_i(\mx)}{\partial x_j}$ can be approximately computed efficiently; and the number of iterations (finding the candidate pair $(j,\ell)$ and updating the allocations) performed by the local search procedure is polynomially bounded.


\subsubsection{Estimating the Gradient of the Nash Welfare Objective}
We first show how to estimate the derivative of $\phi(\mx)$ in the procedure in \cref{sec:nash}. First, the estimation procedure for $U_i$ and its derivative is the same as that in \citet{CalinescuCPV11}: Each time we compute $U_i(\mathbf{x})$, we pretend $\mathbf{x}$ is a product distribution over candidates, and sample $H$ times it. Denote the samples as the committees $O_h \sim \mathbf{x}$ where $h \in \{1,2,\ldots,H\}$. We calculate the quantity $\hat{U_i}(\mathbf{x})=\frac{1}{H} \cdot \sum_{h=1}^H u_i(O_h)$ as the estimation of $U_i(\mathbf{x})$. The following lemma gives a bound on the additive error of estimation on $U_i(\mathbf{x})$ if $H$ is sufficiently large.

\begin{lemma}[\citet{CalinescuCPV11}] $\Pr\left[\left\vert \hat{U}_i(\mathbf{x})-U_i(\mathbf{x}) \right\vert>\Delta\right]<2e^{-\frac{\Delta^2\cdot H}{m^2}}.$
\end{lemma}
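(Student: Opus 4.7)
The plan is to apply Hoeffding's inequality directly to the $H$ i.i.d.\ samples produced by the estimator. First I would note, from the normalization step in the ``First Observations'' subsection, that $u_i(T) \in [0, m]$ for every voter $i$ and every $T \subseteq C$; hence every term $u_i(O_h)$ of the average $\hat{U}_i(\mathbf{x})$ lies in the bounded interval $[0, m]$.

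Next I would verify unbiasedness. By the very definition of the multilinear extension, if $O$ is drawn from the product distribution that includes each candidate $j$ independently with probability $x_j$, then
\[
\Ex{O \sim \mathbf{x}}{u_i(O)} = \sum_{T \subseteq C} u_i(T) \prod_{j \in T} x_j \prod_{j \notin T} (1 - x_j) = U_i(\mathbf{x}).
\]
Therefore the random variables $\{u_i(O_h)\}_{h=1}^H$ are i.i.d., each supported on $[0, m]$ with common mean $U_i(\mathbf{x})$, and $\hat{U}_i(\mathbf{x})$ is precisely their sample mean.

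Finally, a direct application of Hoeffding's inequality to the sample mean of $H$ i.i.d.\ random variables in a range of width $m$ yields
\[
\Pr\!\left[\left\vert \hat{U}_i(\mathbf{x}) - U_i(\mathbf{x}) \right\vert > \Delta\right] \le 2 \exp\!\left(-\frac{2\Delta^2 H}{m^2}\right),
\]
which is already stronger than the stated bound by a constant factor in the exponent, so no additional tricks are needed. Since the proof is essentially a one-line invocation of a classical concentration inequality, there is no genuine obstacle here. The only subtlety worth flagging for downstream use is that this lemma will be invoked many times in the local-search analysis (once per voter, candidate, and iteration), so a suitable union bound must accompany the choice $H = \poly(n, m, 1/\eps)$ in order to obtain joint accuracy of all the estimated partial derivatives with high probability.
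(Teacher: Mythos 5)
Your proof is correct and is essentially the standard argument: the paper itself gives no proof (it cites \citet{CalinescuCPV11}), and the cited result is obtained exactly as you do, by noting that the normalized utilities lie in $[0,m]$, that $\hat{U}_i(\mathbf{x})$ is an unbiased sample mean of the multilinear extension, and then applying a Chernoff--Hoeffding bound; your exponent $2\Delta^2 H/m^2$ indeed dominates the stated $\Delta^2 H/m^2$. Your closing remark about the union bound over all evaluations is also consistent with how the paper uses the lemma in \cref{lem:eval}.
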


Since the derivative of $U_i$ is given in \cref{lem:gradsize}, its error is bounded by estimating the multilinear function twice. This yields the following lemma:

\begin{lemma}
\label{lem:eval}
    In the local search procedure in \cref{sec:nash}, suppose the total number of times the multilinear function $U_i$ and its derivative is evaluated is $N$. Then, using $\mbox{poly}\left(N, m, \frac{1}{\Delta} \right)$ samples, the additive error in each estimate is bounded by $\Delta$ with probability $1 - \frac{1}{\mbox{poly}(m,N)}$.  
\end{lemma}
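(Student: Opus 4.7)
The plan is to derive \cref{lem:eval} as a straightforward consequence of the stated Hoeffding-type bound on a single sample-average estimate of $U_i(\mathbf{x})$, combined with a union bound over the at most $N$ function/derivative queries issued by the local search. First, I would fix a per-estimate failure probability $\delta' := 1/(mN)^c$ for a sufficiently large absolute constant $c$. By the preceding lemma, taking $H = \Theta(m^2 \log(1/\delta')/\Delta^2) = \poly(N, m, 1/\Delta)$ independent samples $O_h \sim \mathbf{x}$ and forming $\hat U_i(\mathbf{x}) = \frac{1}{H}\sum_h u_i(O_h)$ guarantees $|\hat U_i(\mathbf{x}) - U_i(\mathbf{x})| \le \Delta$ with probability at least $1-\delta'$. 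Note the $m^2$ factor in the exponent is tight for us because after normalization and submodularity we have $u_i(T) \in [0,m]$, which is precisely the range bound behind the stated Chernoff-type inequality.

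Next, I would reduce the estimation of partial derivatives to the estimation of $U_i$ itself. By the identity in \cref{lem:gradsize} (which is exactly the multilinear structure),
\[
\frac{\partial U_i(\mathbf{x})}{\partial x_j} \;=\; U_i\bigl(\mathbf{x}^{j\to 1}\bigr) - U_i\bigl(\mathbf{x}^{j\to 0}\bigr),
\]
so each derivative query reduces to two independent multilinear-value queries at modified points. Estimating each of these two values with additive error $\Delta/2$ (by doubling $H$, which does not affect the polynomial bound) and taking the difference gives, via the triangle inequality, an additive error of at most $\Delta$ on the derivative. Thus every one of the $N$ queries the algorithm makes can be answered by at most $2$ underlying multilinear estimations, for a total of at most $2N$ such estimations across the entire execution.

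Finally, I would apply a union bound over all $2N$ such underlying estimations: each fails with probability at most $\delta' = 1/(mN)^c$, so the probability that any of them fails is at most $2N \cdot \delta' = 1/\poly(m,N)$ for $c$ large enough. Conditioned on the complementary event, every one of the $N$ queries made by the local search returns an answer with additive error at most $\Delta$, which is exactly the claim. The total number of samples used is $2N \cdot H = \poly(N,m,1/\Delta)$. The main step requiring any care is aligning the range bound on $u_i$ with the stated Chernoff inequality and ensuring the constant $c$ is chosen large enough to beat the $2N$ union-bound factor; both are bookkeeping rather than substantive obstacles, so no new technical ingredient is needed beyond the already-stated concentration lemma and \cref{lem:gradsize}.
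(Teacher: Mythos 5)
Your proposal is correct and matches the paper's (very terse) argument: the paper likewise invokes the sampling concentration bound of \citet{CalinescuCPV11}, observes via \cref{lem:gradsize} that each derivative query reduces to two evaluations of the multilinear function, and concludes by a union bound over all $N$ queries. Your write-up simply fills in the bookkeeping (choice of $H$, the $\Delta/2$ split, and the union-bound constant) that the paper leaves implicit.
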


In the sequel, we will set $\Delta=\destat$. We denote the estimated derivative of the multilinear function as $\frac{\ph U_i(\mathbf{x})}{\ph x_i}$, and the estimated derivative of $\phi_i(\mx)$ and $\phi(\mx)$  as $\frac{\ph\phi_i(\mx)}{\ph x_j}=\frac{1}{\hat{U}_i}\cdot \frac{\ph U_i(\mathbf{x})}{\ph x_j}$ and $\frac{\ph \phi(\mx)}{\ph x_j}=\sum_{i=1}^n \frac{\ph\phi_i(\mx)}{\ph x_j}$ respectively.  We have the following lemma for bounding the error of estimating $\frac{\partial\phi(\mx)}{\partial x_j}\cdot {\frac1{s_j}}$. For this proof and subsequent ones, we crucially need that the Nash Welfare program in \cref{sec:nash} lower bounds the allocations as $x_j \ge \underline{x}_j$.

\begin{lemma}
\label{lem:gradbound}
     Let $\Delta=\destat$, and suppose $x_j\ge \underline{x}_j$ for all candidates $j \in C_{\ell}$, where $\underline{x}_j$ is as defined in \cref{sec:nash}. If for any $j \in C_{\ell}$ and all $i \in V$, we have: $\left\vert\hat{U}_i(\mathbf{x})-U_i(\mathbf{x})\right\vert< \Delta$ and  $\left\vert \frac{\ph U_i(\mathbf{x})}{\ph x_j} - \frac{\partial U_i(\mathbf{x})}{\partial x_j}\right\vert< \Delta$, then it holds that $\frac{1}{s_j} \cdot  \left\vert\frac{\ph\phi(\mx)}{\ph x_j} -\frac{\partial \phi(\mx)}{\partial x_j}\right\vert\le \frac{\eps}{8b}.$
\end{lemma}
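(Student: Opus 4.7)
The proof is essentially a careful error-propagation calculation, pushing the per-voter additive error on $U_i$ and $\partial U_i/\partial x_j$ through the quotient rule and then summing over voters. First I would record a pointwise lower bound on $U_i(\mx)$. Using the hypothesis that $x_j \ge \underline{x}_j$ for every $j \in C_{\ell}$, monotonicity of $U_i$, and the fact that after normalization there exists some candidate $j^*$ with $u_i(\{j^*\}) = 1$, I can reproduce the argument of \cref{eq:U_lb}: if $j^* \in C_s$ then $x_{j^*} = 1$ and $U_i(\mx) \ge 1$; if $j^* \in C_\ell$ then $U_i(\mx) \ge \underline{x}_{j^*} = \frac{B\eps}{\sum_{j \in C_{\ell}} s_j} \ge \frac{\eps^2}{5m^2}$, using $B \ge \frac{\eps}{5m}b$ and $\sum s_j \le bm$. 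Combined with the error assumption $|\hat U_i - U_i| \le \Delta$ and the fact that $\Delta = \destat \ll \eps^2/m^2$, this also gives $\hat U_i(\mx) \ge \Omega(\eps^2/m^2)$.

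Next, I would bound the per-voter error on $\partial\phi_i/\partial x_j$. Writing
\[
\left|\tfrac{\ph \phi_i(\mx)}{\ph x_j} - \tfrac{\partial \phi_i(\mx)}{\partial x_j}\right|
= \left|\tfrac{1}{\hat U_i}\tfrac{\ph U_i}{\ph x_j} - \tfrac{1}{U_i}\tfrac{\partial U_i}{\partial x_j}\right|
\le \tfrac{1}{\hat U_i}\left|\tfrac{\ph U_i}{\ph x_j} - \tfrac{\partial U_i}{\partial x_j}\right| + \tfrac{\partial U_i/\partial x_j}{U_i \hat U_i}\bigl|U_i - \hat U_i\bigr|,
\]
and invoking the hypothesized additive bounds together with $\partial U_i/\partial x_j \le 1$ from \cref{lem:gradsize}, both terms on the right-hand side are at most $O(\Delta \cdot m^4/\eps^4)$.

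Summing this per-voter bound over all $n$ voters and substituting $\Delta = \frac{\eps^6}{64nm^5}$, the estimate of $\partial\phi/\partial x_j$ is correct up to additive error $n \cdot O(\Delta m^4/\eps^4) = O(\eps^2/m)$. Finally, I divide by $s_j$; since $j \in C_{\ell}$, I have $s_j \ge \eps b/m$, so the overall error in $\frac{1}{s_j} \frac{\partial\phi}{\partial x_j}$ is $O(\eps/b)$. Tracking constants carefully (the $1/64$ in $\Delta$ is chosen precisely for this purpose), the bound comes out to at most $\eps/(8b)$.

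The computation itself is routine; the only thing one has to be careful about is the lower bound on $U_i(\mx)$, since the quotient rule amplifies errors by $1/(U_i \hat U_i)$, which is quadratic in $m/\eps$. This is precisely why the algorithm in \cref{sec:nash} lower-bounds allocations by $\underline{x}_j$ rather than permitting $x_j = 0$, and why $\Delta$ has to be taken as small as $\eps^6/(nm^5)$. No other step presents a real obstacle.
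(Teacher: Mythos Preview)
Your proposal is correct and follows essentially the same approach as the paper: bound $U_i(\mx)$ from below via \cref{eq:U_lb}, propagate the additive errors on $U_i$ and $\partial U_i/\partial x_j$ through the quotient defining $\partial\phi_i/\partial x_j$, sum over voters, and divide by $s_j \ge \eps b/m$. The only cosmetic difference is that the paper bounds the quotient error by the single algebraic expression $\frac{\Delta(U_i + \partial U_i/\partial x_j)}{(U_i-\Delta)^2}$ rather than splitting it into two terms via the triangle inequality as you do; both routes yield the same $O(\Delta m^4/\eps^4)$ per-voter error and the same final bound.
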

\begin{proof}
We have 
\begin{align*}
    \left\vert\frac{\ph\phi_i(\mx)}{\ph x_j}-\frac{\p\phi_i(\mx)}{\p x_j}\right\vert&= \left\vert\frac{1}{\hat{U}_i(\mathbf{x})}\cdot \frac{\ph U_i(\mathbf{x})}{\ph x_j}-\frac{1}{\ux}\cdot \frac{\p\ux}{\p x_j}\right\vert\\
    &\le \frac{1}{U_i(\mathbf{x})-\Delta}\cdot \left(\frac{\partial U_i(\mathbf{x})}{\partial x_j}+\Delta\right)-\frac{1}{\ux}\cdot \frac{\p\ux}{\p x_j}\\
    &\le \left \vert\frac{\Delta \cdot \left(\ux+\frac{\p\ux}{\p x_j}\right)}{(\ux-\Delta)^2}\right\vert.
\end{align*}
By \cref{lem:gradsize}, we have $\frac{\p\ux}{\p x_j}\le 1$. If $\Delta\le \frac{\eps^2}{2m^2}$, then by \cref{eq:U_lb} we have $2 \Delta\le \ux$, thus:
\begin{align*}
    \left\vert\frac{\ph\phi_i(\mx)}{\ph x_j}-\frac{\p\phi_i(\mx)}{\p x_j}\right\vert&=\left \vert\frac{\Delta \cdot \left(\ux+\frac{\p\ux}{\p x_j}\right)}{(\ux-\Delta)^2}\right\vert \le \left \vert\frac{\Delta \cdot \left(\ux+1\right)}{\ux^2/4}\right\vert \le \left \vert\frac{\Delta \cdot 2}{\frac{\eps^4}{m^4}/4}\right\vert \le 
   \left \vert\frac{8\Delta \cdot m^4}{\eps^4}\right\vert.
\end{align*}
 
Since $s_j \ge \frac{\eps}{m} b$ for $j \in C_{\ell}$, we have 
\[
\frac1{s_j} \cdot \left\vert\frac{\partial \phi(\mx)}{\partial x_j} -\frac{\ph \phi(\mx)}{\ph x_j}\right\vert\le \frac{1}{s_j} \sum_{i=1}^n \left\vert\frac{\ph\phi_i(\mx)}{\ph x_j}-\frac{\p\phi_i(\mx)}{\p x_j}\right\vert\le \left \vert\frac{8\Delta \cdot m^4}{\eps^4}\right\vert \cdot n \cdot \frac{m}{b\cdot \eps}\le \frac{8\Delta \cdot m^5\cdot n}{\eps^5\cdot b}=\frac{\eps}{8b}. \qedhere
\]
\end{proof}
 
\subsubsection{Number of Iterations in Local Search}
The local search procedure in \cref{sec:nash} iteratively finds a pair of candidates $(j,\ell)$ such that 
\begin{equation}
\label{eq:localapprox}
    \frac{\ph \phi(\mx)}{\ph x_j}\cdot \frac{1}{s_j} >  \frac{\ph \phi(\mx)}{\ph x_{\ell}} \cdot \frac{1}{s_{\ell}} +\frac{3\eps}{4b}.
\end{equation}
Each time it finds such a pair, it increases $x_j$ by $\delta\cdot \frac{1}{s_{j}}$ and decreases $x_{\ell}$ by $\delta\cdot \frac{1}{s_{\ell}}$. Note that at stopping, \cref{eq:localapprox} does not hold, which by \cref{lem:gradbound} implies \cref{eq:local} also does not hold with high probability. Therefore, the termination condition in \cref{sec:nash} is satisfied.

In order to bound the number of iterations of this procedure, we first show that the first and second partial derivatives of $\phi$ are bounded from above. 

\begin{lemma}
\label{lem:bound}
If $\phi_i(\mx)=\log U_i(\mathbf{x})$ and $x_j\ge \underline{x}_j$ for all candidates $j \in C_{\ell}$, where $\underline{x}_j$ is as defined in \cref{sec:nash}, then we have the following bounds for all $j,k \in C_{\ell}$ and $i \in W$:
\begin{gather*}
 0 \le \frac{\partial \phi_i(\mx)}{\partial x_j} = O\left(\frac{m^2}{\eps^2}\right) \qquad \mbox{and} \qquad \left\vert \frac{\partial^2\phi_i(\mx)}{\partial x_j \partial x_k}\right\vert = O\left(\frac{m^4}{\eps^4}\right).
\end{gather*}
\end{lemma}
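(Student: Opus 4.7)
\medskip

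\noindent\textbf{Proof plan for \cref{lem:bound}.} The plan is to reduce both bounds to facts we already have on $U_i$ and its first two derivatives, together with the lower bound $U_i(\mx) \ge \eps^2/m^2$ coming from the assumption $x_j \ge \underline{x}_j$ (this is exactly \cref{eq:U_lb} in the proof of \cref{thm:grad}). Apply the chain rule to $\phi_i(\mx) = \log U_i(\mx)$ to get
\[
\frac{\partial \phi_i(\mx)}{\partial x_j} = \frac{1}{U_i(\mx)}\cdot \frac{\partial U_i(\mx)}{\partial x_j}, \qquad
\frac{\partial^2 \phi_i(\mx)}{\partial x_j \partial x_k} = \frac{1}{U_i(\mx)}\cdot \frac{\partial^2 U_i(\mx)}{\partial x_j \partial x_k} - \frac{1}{U_i(\mx)^2}\cdot \frac{\partial U_i(\mx)}{\partial x_j}\cdot \frac{\partial U_i(\mx)}{\partial x_k}.
\]
Both bounds then follow if I can prove (i) $0 \le \partial U_i/\partial x_j \le 1$, (ii) $|\partial^2 U_i/(\partial x_j\,\partial x_k)| \le 1$, and (iii) $U_i(\mx) \ge \eps^2/m^2$.

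\medskip

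\noindent\textbf{First derivative of $\phi_i$.} Step (iii) is immediate from \cref{eq:U_lb} under the hypothesis $x_j \ge \underline{x}_j$. Step (i) is \cref{lem:gradsize} together with the normalization $\max_j u_i(\{j\}) = 1$ from \cref{sec:poly}. Plugging these into the chain-rule expression, $\partial \phi_i/\partial x_j$ is a product of a non-negative number and a positive number, so it is non-negative, and its magnitude is at most $(m^2/\eps^2)\cdot 1 = O(m^2/\eps^2)$. This settles the first bound.

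\medskip

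\noindent\textbf{Second derivative of $\phi_i$.} The main content is step (ii). Analogously to \cref{lem:gradsize}, expanding the multilinear extension and differentiating twice gives
\[
\frac{\partial^2 U_i(\mx)}{\partial x_j\,\partial x_k} = \sum_{T \subseteq C\setminus\{j,k\}} \Pr_{T\sim \mathbf{x}_{-jk}}[T]\cdot \bigl(u_i(T\cup\{j,k\}) - u_i(T\cup\{j\}) - u_i(T\cup\{k\}) + u_i(T)\bigr).
\]
By submodularity every summand is $\le 0$, and by monotonicity plus submodularity each summand has absolute value at most $u_i(T\cup\{k\}) - u_i(T) \le u_i(\{k\}) \le 1$. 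So the whole quantity lies in $[-1,0]$, giving (ii). Combining (i), (ii), (iii) in the chain-rule expression for the second derivative yields
\[
\left|\frac{\partial^2 \phi_i(\mx)}{\partial x_j \partial x_k}\right| \le \frac{1}{U_i(\mx)} + \frac{1}{U_i(\mx)^2} \le \frac{m^2}{\eps^2} + \frac{m^4}{\eps^4} = O\!\left(\frac{m^4}{\eps^4}\right).
\]

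\medskip

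\noindent\textbf{Where the difficulty lies.} There is no real obstacle; all three ingredients are either already in the paper (monotonicity of $U_i$, the gradient bound, the lower bound $U_i \ge \eps^2/m^2$) or follow from a one-line multilinear-extension calculation (the second-derivative formula and its bound via submodularity). The only care needed is to verify that the hypothesis $x_j \ge \underline{x}_j$ in the lemma statement is precisely what is used to obtain the $U_i \ge \eps^2/m^2$ lower bound, so that the division by $U_i$ (and $U_i^2$) in the chain-rule expressions is controlled; this is why the lemma is phrased with that hypothesis rather than for arbitrary $\mx \in [0,1]^m$.
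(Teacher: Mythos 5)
Your proof is correct and follows essentially the same route as the paper's: chain rule on $\log U_i$, the gradient bound $\partial U_i/\partial x_j \le 1$ from \cref{lem:gradsize}, the second-derivative formula for the multilinear extension bounded via submodularity and monotonicity, and the lower bound $U_i(\mx) \ge \eps^2/m^2$ from \cref{eq:U_lb}. If anything, your displayed chain-rule expression for the second derivative is the more careful one (the paper's version omits the factor $\partial U_i/\partial x_k$ in the first term, which is harmless since it is at most $1$).
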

\begin{proof}
Since$\phi_i(\mx)= \log U_i(\mathbf{x})$, using \cref{lem:gradsize}, we have $\frac{\partial \phi_i(\mx)}{\partial x_j}\le \frac{1}{U_i}. $ 
Combining this with \cref{eq:U_lb} gives $\frac{\p\phi_i(\mx)}{\p x_j}\le \frac{5 m^2}{\eps^2}$. 

We next bound the second order derivatives as follows. Here, $\mathbf{x_{-j,k}}$ is $\mathbf{x}$ with the $j^{\text{th}}$ and $k^{\text{th}}$ dimension removed, and $T \sim \mathbf{x_{-j,k}}$ means that $T$ is chosen by including $\ell \in C \setminus \{j,k\}$ independently with probability $x_{\ell}$. The first inequality below uses \cref{eq:U_lb}.
    \begin{align*}
       \left\vert\frac{\partial^2 \phi_i(\mx)}{\partial x_j\partial x_k}\right\vert&= \left\vert-\frac{1}{U_i(\mathbf{x})^2}\cdot \frac{\partial U_i(\mathbf{x})}{\partial x_j} + \frac{1}{U_i(\mathbf{x})}\cdot \frac{\partial^2 U_i(\mathbf{x})}{\partial x_j \partial x_k}\right\vert\\
       &\le \frac{25 m^4}{\eps^4}+\left\vert \frac{5m^2}{\eps^2}\cdot \sum_{T\subseteq C\setminus\{j,k\}} \Pr_{T \sim \mathbf{x_{-j,k}}} [T] \cdot  \left(u_i(T\cup\{j,k\})-u_i(T\cup\{j\})-u_i(T\cup\{k\})+u_i(T)\right) \right\vert\\
        &\le \frac{25 m^4}{\eps^4} + \frac{5 m^2}{\eps^2}\cdot \sum_{T\subseteq C\setminus\{j,k\}} \Pr_{T \sim \mathbf{x_{-j,k}}} [T]  \cdot \Big(u_i(T\cup\{k\})-u_i(T)-u_i(T\cup\{j,k\})+u_i(T\cup\{j\})\Big) \\ 
        &\le \frac{25 m^4}{\eps^4} + \frac{5 m^2}{\eps^2}\cdot \sum_{T\subseteq C\setminus\{j,k\}} \Pr_{T \sim \mathbf{x_{-j,k}}} [T]  \cdot 1 \le \frac{26 m^4}{\eps^4}. \qedhere
    \end{align*}
\end{proof}

The next lemma now follows from a standard application of first order Taylor approximation. We assume the local search procedure in \cref{sec:nash} iteratively finds a pair of candidates $(j,\ell)$ such that 
\cref{eq:localapprox} holds. 

\begin{lemma}
\label{lem:stepbound}
    Assuming all the estimates on $\frac{\p\phi(\mx)}{\p x_j}$ during execution of local search are within $\pm\frac{\eps}{8b}$ of the true values, the total number of iterations is $\mbox{poly}\left(m,n,\frac{1}{\eps}\right)$.
\end{lemma}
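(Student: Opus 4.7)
My plan is a standard potential-function argument: I will show that each iteration of local search increases $\phi(\mx)$ by at least $1/\poly(m,n,1/\eps)$, and that $\phi$ ranges over an interval of length at most $\poly(m,n,1/\eps)$ throughout the execution; the iteration count is then the ratio.

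First, I would use \cref{lem:gradbound} to promote the approximate local-search trigger \cref{eq:localapprox} to an exact-gradient statement: since both sides of \cref{eq:localapprox} are estimated within additive error $\eps/(8b)$, whenever the trigger fires on a pair $(j,\ell)$, the true gradients satisfy
\[
\frac{\p\phi(\mx)}{\p x_j}\cdot \frac{1}{s_j} \;\ge\; \frac{\p\phi(\mx)}{\p x_\ell}\cdot \frac{1}{s_\ell} \;+\; \frac{\eps}{2b}.
\]
Let $\mathbf{d}$ be the update direction, with $d_j=1/s_j$, $d_\ell=-1/s_\ell$, and zero elsewhere, so that the iterate moves from $\mx$ to $\mx+\delta\mathbf{d}$. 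A second-order Taylor expansion along this segment gives
\[
\phi(\mx+\delta\mathbf{d})-\phi(\mx) \;\ge\; \delta\,\nabla\phi(\mx)\!\cdot\!\mathbf{d} \;-\; \tfrac{1}{2}\delta^{2}\,\max_{\mathbf{z}\in[\mx,\mx+\delta\mathbf{d}]}\bigl|\mathbf{d}^{\top}\nabla^{2}\phi(\mathbf{z})\,\mathbf{d}\bigr|,
\]
and the first-order term is at least $\delta\eps/(2b)$ by the display above.

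Next I would bound the curvature term using \cref{lem:bound}. Summing the per-voter Hessian bound over the $n$ voters gives $|\p^{2}\phi/(\p x_{a}\p x_{c})|=O(nm^{4}/\eps^{4})$; since $s_{j},s_{\ell}\ge \eps b/m$ on $C_{\ell}$ we have $|d_{j}|,|d_{\ell}|\le m/(\eps b)$, and so $|\mathbf{d}^{\top}\nabla^{2}\phi(\mathbf{z})\,\mathbf{d}|=O(nm^{6}/(\eps^{6}b^{2}))$. Plugging in $\delta=\dstate$ (shrinking its constant, or absorbing an extra $\poly(n)$ factor if needed to swamp the curvature term), the quadratic error is at most half the first-order gain, so each iteration increases $\phi$ by at least $\Omega(\delta\eps/b)=1/\poly(m,n,1/\eps)$.

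Finally, I would bound the range of $\phi$. By submodularity and the normalization $u_{i}(\{j\})\le 1$, one has $u_{i}(T)\le m$ for every committee $T$, so $U_{i}(\mx)\le m$ and $\phi(\mx)\le n\log m$. In the other direction, the Nash Welfare program enforces $x_{j}\ge \underline{x}_{j}$ throughout the search, so \cref{eq:U_lb} guarantees $U_{i}(\mx)\ge \eps^{2}/m^{2}$ and hence $\phi(\mx)\ge -2n\log(m/\eps)$. Dividing this $O(n\log(m/\eps))$ range by the per-step gain yields a $\poly(m,n,1/\eps)$ bound on the iteration count. The main obstacle I expect is precisely this curvature control: the logarithm in $\phi_{i}=\log U_{i}$ makes the Hessian potentially blow up near $U_{i}\to 0$ via a $1/U_{i}(\mx)^{2}$ factor, and it is exactly the lower bound $x_{j}\ge \underline{x}_{j}$ hard-wired into the Nash Welfare program that keeps this factor polynomial and makes the telescoping work.
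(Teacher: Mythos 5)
Your proof is correct and follows essentially the same route as the paper: promote the approximate trigger to a true-gradient gap of $\eps/(2b)$ via the $\pm\eps/(8b)$ estimation guarantee, apply a second-order Taylor bound with the Hessian estimates of \cref{lem:bound}, and divide the $O(n\log(m/\eps))$ range of $\phi$ by the per-step gain. You are in fact slightly more careful than the paper's writeup on one point — summing the per-voter Hessian bound over the $n$ voters (so $M=O(nm^4/\eps^4)$ rather than $O(m^4/\eps^4)$) and noting that $\delta$ may need an extra $\poly(n)$ shrinkage for the quadratic term to be dominated — which only costs another polynomial factor in the iteration count and does not change the conclusion.
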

\begin{proof}
Since all estimates of $\frac{\p\phi(\mx)}{\p x_j}$ during execution of local search are within $\pm\frac{\eps}{8b}$ of the true values, if \cref{eq:localapprox} holds for $j,k \in C_{\ell}$, then:
$$ \frac{\p \phi(\mx)}{\p x_j}\cdot \frac{1}{s_j} >\frac{\ph \phi(\mx)}{\ph x_j}\cdot \frac{1}{s_j}-\frac{\eps}{8b} \ge \frac{\ph \phi(\mx)}{\ph x_{\ell}} \cdot \frac{1}{s_{\ell}}+\frac{3\eps}{4b}-\frac{\eps}{8b}\ge \frac{\p \phi(\mx)}{\p x_{\ell}} \cdot \frac{1}{s_{\ell}} +\frac{\eps}{2b}. $$
We lower bound the increase in $\phi(\mx)$ when $\mathbf{x} = (\mathbf{x_{-j,k}},x_j, x_k)$ moves to  $\mathbf{x'} = \left(\mathbf{x_{-j,k}}, x_j+\frac{\delta}{s_j},x_\ell- \frac{\delta}{s_\ell} \right)$.
By Taylor approximation, we have:
$$ \phi(\mathbf{x'}) - \phi(\mathbf{x}) \ge (\mathbf{x'} - \mathbf{x}) \cdot \nabla \phi(\mathbf{x}) - \sum_{r=1}^m (x'_r - x_r)^2 \frac{M}{2} $$
where $M$ is an upper bound on the absolute value of the second derivatives of $\phi$. Since $M = O\left(\frac{m^4}{\eps^4}\right)$ by \cref{lem:bound}, this implies:
$$ \phi(\mathbf{x'}) - \phi(\mathbf{x}) \ge \delta \cdot \left(\frac{\p \phi(\mx)}{x_j}\cdot \frac1{s_j}-\frac{\p\phi(\mx)}{x_\ell}\cdot \frac1{s_\ell}\right) - \frac{26m^4}{\eps^4}\cdot \frac{3\delta^2}{s_{\min}^2},  $$
where $s_{\min} = \min_{\ell \in C_{\ell}} s_{\ell} \ge \frac{\eps}{m} b$ and $\delta = \dstate$. Plugging these values in and simplifying, we have:
$$ \phi(\mathbf{x'}) - \phi(\mathbf{x}) = \Omega\left( \frac{\eps \delta}{b} \right) = \mbox{poly}\left(\eps, \frac{1}{n}, \frac{1}{m}\right).$$
Since $\phi(\mx)$ lies in $\left(\log \left(\frac{\eps^2}{5m^2}\right), n\log m\right)$, the number of iterations is  $\mbox{poly}\left(m,n,\frac{1}{\eps}\right)$.
\end{proof}

Combining the previous lemmas, the following theorem is now immediate by union bounds, where we assume $\eps > 0$ is a small enough constant. This completes the proof of \cref{thm:main}.
\begin{theorem}
    With probability at least $1-\dfrac{1}{\mbox{poly}(n,m,\frac{1}{\eps})}$, the local search algorithm in \cref{sec:nash} has running time $\mbox{poly}(n,m,\frac{1}{\eps})$. 
\end{theorem}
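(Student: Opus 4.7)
The plan is to assemble the three preceding lemmas via a union bound. By \cref{lem:stepbound}, conditioned on every gradient estimate produced during the run being within $\pm\frac{\eps}{8b}$ of the true value, the number of local-search iterations is bounded by some polynomial $N = \mbox{poly}\!\left(m,n,\tfrac{1}{\eps}\right)$. Each iteration requires evaluating, for every $i \in W$ and every $j \in C_\ell$, the estimates $\hat U_i(\mathbf{x})$ and $\widehat{\partial U_i/\partial x_j}$; the total number $N'$ of multilinear function evaluations across the entire run is therefore also $\mbox{poly}\!\left(m,n,\tfrac{1}{\eps}\right)$.

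Next, I would invoke \cref{lem:eval} with $\Delta = \destat$ and the bound $N'$ on the number of evaluations. This guarantees that, using $\mbox{poly}(N',m,1/\Delta) = \mbox{poly}\!\left(n,m,\tfrac{1}{\eps}\right)$ samples per evaluation, every single estimate deviates by more than $\Delta$ from its true value with probability at most $\frac{1}{\mbox{poly}(m,N')} = \frac{1}{\mbox{poly}(n,m,1/\eps)}$. Taking a union bound over the (polynomially many) evaluations, with probability $1 - \frac{1}{\mbox{poly}(n,m,1/\eps)}$ every estimate of $U_i$ and of $\partial U_i/\partial x_j$ is accurate to within $\Delta$ throughout the execution.

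Conditioned on this event, \cref{lem:gradbound} (whose precondition that $x_j \ge \underline{x}_j$ is maintained automatically by the local-search update, since the feasible region of the Nash Welfare program is $[\underline{x}_j,1]$) upgrades every additive-$\Delta$ error on the $U_i$'s into an additive-$\frac{\eps}{8b}$ error on $\frac{1}{s_j}\cdot\frac{\partial \phi(\mx)}{\partial x_j}$. This is exactly the hypothesis of \cref{lem:stepbound}, so the iteration count is at most $N$ as claimed. The per-iteration cost is dominated by computing $|W|\cdot|C_\ell|$ gradient estimates, each costing $\mbox{poly}(n,m,1/\eps)$ time (polynomially many utility-oracle calls plus elementary arithmetic), and by the $O(|C_\ell|^2)$ scan to find the pair $(j,\ell)$ satisfying the approximate local condition \cref{eq:localapprox}. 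Multiplying together, the total running time is $\mbox{poly}(n,m,1/\eps)$ on this high-probability event.

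There is no genuine obstacle here; the only mild bookkeeping point is to make sure the union bound is taken after one has an \emph{a priori} polynomial bound on the number of iterations, so that the number of estimates is polynomial before sampling bounds are applied. Since \cref{lem:stepbound} provides such a bound unconditionally on the event of accurate estimates, the argument is not circular: one fixes the target iteration bound $N$, chooses sample complexity large enough to make all $O(N\cdot n\cdot m)$ estimates accurate with probability $1-1/\mbox{poly}(n,m,1/\eps)$, and then observes that on this event the algorithm indeed terminates within $N$ iterations. This yields the stated theorem and completes the proof of \cref{thm:main}.
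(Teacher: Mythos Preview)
Your proposal is correct and follows essentially the same approach as the paper's proof: bound the number of iterations via \cref{lem:stepbound} conditioned on accurate estimates, count the resulting polynomial number of evaluations, and then invoke \cref{lem:eval} with $\Delta=\destat$ to ensure all estimates are accurate with high probability. You are somewhat more explicit than the paper in spelling out the role of \cref{lem:gradbound} and in addressing the potential circularity (fixing the target iteration bound before choosing the sample size), but the underlying argument is the same.
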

\begin{proof}
    By \cref{lem:stepbound}, if we estimate all the $U_i(\mathbf{x})$ and $\frac{\p \ux}{\p x_j}$ within $\pm \Delta$ additive error, the algorithm ends in $\poly(n,m,\frac1\eps)$ iterations. Since within each iteration we estimate $U_i$ and $\frac{\p \ux}{\p x_j}$ for $\mbox{poly}(m)$ times, the total number of evaluations is bounded by $N=\mbox{poly}(n,m,\frac1\eps)$. Then by \cref{lem:eval}, we need at most $\mbox{poly}(N,m,\frac{1}{\Delta})$ samples for all additive errors to be bounded by $\pm \Delta$ with probability $1-\frac{1}{\mbox{poly}(N)}$. Since $\Delta=\destat=\frac{1}{\mbox{poly}\left(n,m,\frac1\eps\right)}$ and the number of iterations is bounded by $\mbox{poly}(n,m,\frac1\eps)$, the total running time including sampling steps is  $\mbox{poly}(n,m,\frac{1}{\eps})$.
\end{proof}

\section{Improved Approximation for Additive Utilities}
\label{sec:additive}
In this section, we present a $9.27$-core for the special case of additive utilities. Recall that for such utilities, $u_i(S) = \sum_{j \in S} u_{ij}$, where $S \subseteq C$ is a subset of candidates and $i \in W$ is a voter. Though we could use the approach in the previous section, we lose constant factors first because the local optimum to the Nash Welfare objective only finds an approximate fractional core, and secondly because the randomized rounding needs to scale down budgets to satisfy the size constraints. 

We address the first issue by using an exact core solution to the fractional problem. We do this via the classic Lindahl equilibrium that we describe in \cref{sec:lindahl}. To address the second issue, we use dependent rounding that preserves the budget constraint with probability one for additive utilities, and we describe this in \cref{sec:satisfied}. This yields a $9.27$-core, though we do not know how to implement the resulting algorithm in polynomial time, since the fractional solution is now via a fixed point argument.

\subsection{Fractional Solution: Lindahl Equilibrium}
\label{sec:lindahl}
Our algorithm for constructing a committee in the approximate core will make use of the Lindahl equilibrium~\citep{lindahl1958just,foley1970lindahl}. This equilibrium yields a fractional committee that lies in the fractional core (\cref{def:core2} for $\alpha = 1$). 

We follow the approach in~\citep{DBLP:conf/wine/FainGM16} for specifying the Lindahl equilibrium. Let $x_j \ge 0$ denote the fraction to which candidate $j$ is chosen. Here, we will assume for technical reasons that this can be a quantity greater than $1$. We assign endowment $\frac{b}{n}$ to each voter $i \in V$, and a price $p_{ij}$ of $j \in C$ for $i \in V$. The Lindahl equilibrium is now defined as follows.

\begin{definition}[Lindahl Equilibrium]
\label{def:LindahlEq}
Let $p_{ij}$ be the price of $j \in C$ for $i \in V$, and let $x_j \ge 0$ be the fraction with which item $j$ is allocated. The prices and allocations constitute a \emph{Lindahl Equilibrium} if:
\begin{enumerate}
    \item For all $i \in V$, suppose the voter computes allocation $\mathbf{y} \ge 0$ that maximizes her utility $U_i(\mathbf{y})$ subject to her endowment constraint $\sum_{j \in C}p_{ij}y_{j} \leq \frac{b}{n}$, then we have that $\mathbf{y} = \mathbf{x}$.\label{item:lindhal_voter}
    \item The profit of the allocation given by $\sum_{i \in V}\sum_{j \in C} y_{j} p_{ij} - \sum_{j \in C} s_{j}y_{j}$, is maximized when $\mathbf{y} = \mathbf{x}$.\label{item:lindhal_profit}
\end{enumerate}
\end{definition}

The main result in~\citep{foley1970lindahl} is the following theorem proved by a fixed point argument. The setting in~\citep{foley1970lindahl} is much more general. Therefore, for the purpose of completeness, we present a simple and direct proof of this theorem and its consequences via complementarity theory~\citep{Eaves} in \cref{app:lindahlProof}.

\begin{restatable}[\citep{foley1970lindahl}]{theorem}{lindahl}
\label{thm:lindahlEq}
If the utility functions $U_i(\mathbf{x})$ have continuous derivatives, are strictly increasing, and are strictly concave, then a Lindahl equilibrium always exists. 
\end{restatable}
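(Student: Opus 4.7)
The plan is to characterize Lindahl equilibria via first-order conditions, convert the existence question into a nonlinear complementarity problem on a compact convex set, and then invoke a fixed-point theorem. Concretely, by strict concavity and continuous differentiability of $U_i$, condition~\ref{item:lindhal_voter} of \cref{def:LindahlEq} is equivalent to the KKT system for voter $i$: there exists $\lambda_i > 0$ such that $\partial U_i(\mathbf{x})/\partial x_j \le \lambda_i p_{ij}$ for every $j$, with equality whenever $x_j > 0$; moreover the budget binds, $\sum_j p_{ij} x_j = b/n$, because $U_i$ is strictly increasing. Condition~\ref{item:lindhal_profit} is a linear program in $\mathbf{y}$ whose value is bounded only if $\sum_i p_{ij} \le s_j$ for every $j$, with equality whenever $x_j > 0$. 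Eliminating $p_{ij}$ by $p_{ij} = \lambda_i^{-1}\,\partial U_i(\mathbf{x})/\partial x_j$ and using the budget constraint to write $\lambda_i = \frac{n}{b}\sum_k x_k\,\partial U_i(\mathbf{x})/\partial x_k$, the equilibrium collapses to: find $\mathbf{x}\ge\mathbf{0}$ with
\[
\Phi_j(\mathbf{x}) \;:=\; s_j \;-\; \frac{b}{n}\sum_{i\in V}\frac{\partial U_i(\mathbf{x})/\partial x_j}{\sum_k x_k\,\partial U_i(\mathbf{x})/\partial x_k} \;\ge\; 0,
\]
with equality whenever $x_j>0$. This is a standard nonlinear complementarity problem in $\mathbf{x}$.

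Next I would bound the search space and apply a fixed-point theorem. Aggregating the binding budget constraint across voters shows any Lindahl allocation satisfies $\sum_j s_j x_j = b$, so the relevant $\mathbf{x}$ lie in the compact convex polytope $K=\{\mathbf{x}\ge 0:\sum_j s_j x_j\le b\}$. On the relative interior of $K$, where $\sum_k x_k\,\partial U_i/\partial x_k>0$ for every $i$, the induced Lindahl prices and per-voter optimizers depend continuously on $\mathbf{x}$ (by continuous differentiability of $U_i$ and Berge's maximum theorem), and strict concavity of $U_i$ makes each voter's utility-maximizing response a \emph{singleton}. Defining the natural best-response map $\Psi$ that, given $\mathbf{x}$, computes the personalized Lindahl prices from gradients and then outputs the (necessarily common) voter-optimal allocation, $\Psi$ is single-valued and continuous; Brouwer's theorem yields a fixed point $\mathbf{x}^\star$, and retracing the KKT derivation shows $\mathbf{x}^\star$ with the induced $p_{ij}$ satisfies both conditions of \cref{def:LindahlEq}. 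This is essentially the complementarity formulation of \citet{Eaves} specialized to Lindahl pricing.

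The main obstacle is degeneracy at the boundary of $K$: the denominator $\sum_k x_k\,\partial U_i/\partial x_k$ can vanish when $\mathbf{x}$ is near the origin, so $\Phi$ (and the Lindahl prices) blow up and $\Psi$ fails to be continuous on all of $K$. My plan is the standard perturbation trick: replace each $U_i$ by $U_i^\eps(\mathbf{x}) = U_i(\mathbf{x}) + \eps\sum_j \log(x_j+\eps)$, which preserves continuous differentiability, strict monotonicity, and strict concavity, while forcing any equilibrium $\mathbf{x}^\eps$ to be bounded away from $0$ coordinatewise and making the gradient-ratio denominators uniformly positive. Brouwer then applies to the perturbed problem and yields $\mathbf{x}^\eps \in K$; compactness of $K$ lets me extract a subsequential limit $\mathbf{x}^\star=\lim_{\eps\downarrow 0}\mathbf{x}^\eps$ along which the KKT complementarity inequalities carry over in the limit (with some coordinates possibly tight at $0$), producing a Lindahl equilibrium for the original $\{U_i\}$. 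Finally, the ``consequences'' part, i.e.\ that the equilibrium allocation lies in the fractional core, will follow by a standard coalitional argument: any $S$ with an allocation $\mathbf{z}$ of cost at most $\tfrac{|S|}{n}b$ strictly Pareto-improving for $S$ would let some $i\in S$ afford $\mathbf{z}$ at Lindahl prices within her own endowment, contradicting her individual utility maximization at $\mathbf{x}^\star$.
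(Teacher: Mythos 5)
Your reduction of the Lindahl conditions to the complementarity system
\[
\Phi_j(\mathbf{x}) \;=\; s_j - \frac{b}{n}\sum_{i\in V}\frac{\partial U_i(\mathbf{x})/\partial x_j}{\sum_k x_k\,\partial U_i(\mathbf{x})/\partial x_k}\;\ge\;0,\qquad x_j\,\Phi_j(\mathbf{x})=0,
\]
is exactly the formulation the paper uses (it is the paper's $f_j + z s_j = y_j$ with $z=n/b$), and your verification that a solution of this system yields both equilibrium conditions matches the paper's. The divergence is in how the system is solved: the paper simply invokes the constrained complementarity theorem of Eaves as a black box on the polytope $\{\mathbf{x}\ge 0:\sum_j s_j x_j\le b\}$, whereas you try to re-derive existence via a Brouwer fixed point of a ``best-response'' map $\Psi$. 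That step has a genuine gap. If $\Psi$ computes prices $p_{ij}(\mathbf{x})\propto \partial U_i(\mathbf{x})/\partial x_j$ normalized so that $\sum_j p_{ij}(\mathbf{x})x_j=b/n$, then for \emph{every} $\mathbf{x}$ in the interior of the polytope the KKT conditions of voter $i$'s problem are satisfied at $\mathbf{x}$ itself (the gradient is proportional to the price vector and the budget binds), so by strict concavity $\mathbf{x}$ is already each voter's unique optimal bundle: $\Psi$ is the identity, every point is a fixed point, and a fixed point carries no information about the supply-side condition $\sum_i p_{ij}\le s_j$ with complementarity. If instead you define prices some other way, the voters' optimal bundles are \emph{not} ``necessarily common,'' and $\Psi$ is not a single-valued map into allocation space. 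Either way, the map as described cannot yield an equilibrium; a correct fixed-point construction must couple demand to the costs $s_j$ (an excess-demand or price-adjustment map, which is precisely what Eaves' theorem packages).

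Two smaller issues: your perturbation $U_i^{\eps}(\mathbf{x})=U_i(\mathbf{x})+\eps\sum_j\log(x_j+\eps)$ has bounded gradient at $x_j=0$ (equal to $1$ there), so it neither forces equilibria coordinatewise away from the boundary nor keeps the denominator $\sum_k x_k\,\partial U_i/\partial x_k$ uniformly positive near $\mathbf{x}=\mathbf{0}$; you would need a barrier that blows up at $0$, and you would still need to rule out the origin separately (the paper does this implicitly by showing $z>0$, i.e., the budget binds, because the utilities are strictly increasing). Also, your elimination $p_{ij}=\lambda_i^{-1}\partial U_i/\partial x_j$ uses the KKT \emph{equality}, which only holds where $x_j>0$; for the existence direction this is harmless (one defines prices by this formula everywhere and checks the conditions, as the paper does), but as stated your ``equivalence'' claim is only a sufficiency argument.
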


\subsubsection{Properties of the Lindahl Equilibrium}
We will need the following corollary to~\cref{thm:lindahlEq}, which for the purpose of completeness, we also prove in \cref{app:lindahlProof}.

\begin{restatable}{corollary}{lindahlcoro}
\label{coro:LindahlEqCoro}
It holds for a Lindahl equilibrium that:
 \begin{enumerate}
     \item For all voters $i \in V$, we have $\sum_j p_{ij} x_j = \frac{b}{n}$. \label{item:exhaust}
     \item For each candidate with $x_j > 0$, we have $\sum_i p_{ij} = s_j$, and for each candidate with $x_j = 0$, we have $\sum_i p_{ij} \le s_j$. \label{item:complementary}
     \item $\sum_{j \in C} s_j x_j = b$.
 \end{enumerate}   
\end{restatable}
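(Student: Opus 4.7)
The plan is to establish each of the three items directly from the two equilibrium conditions in \cref{def:LindahlEq}, using strict monotonicity of the utilities $U_i$ together with an unboundedness argument for profit maximization.

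For Item~\ref{item:exhaust} (endowment exhaustion), I would argue by contradiction: suppose $\sum_j p_{ij} x_j < b/n$ for some voter $i$. Since $U_i$ has continuous and (by strict monotonicity) strictly positive partial derivatives, increasing any coordinate $x_j$ by a small enough $\varepsilon > 0$ still satisfies the budget constraint yet strictly increases $U_i(\mathbf{x})$. This contradicts $\mathbf{x}$ being the utility-maximizing bundle for voter $i$ as required by item~\ref{item:lindhal_voter} of \cref{def:LindahlEq}. (The subcase where all $p_{ij} = 0$ is ruled out because the voter's problem would be unbounded above in the positive orthant, so no maximizer could exist in the first place.)

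For Item~\ref{item:complementary} (complementary slackness), I would rewrite the profit objective in item~\ref{item:lindhal_profit} of \cref{def:LindahlEq} as
\[
\pi(\mathbf{y}) \;=\; \sum_{j \in C} y_j \left( \sum_{i \in V} p_{ij} - s_j \right),
\]
which is separable across coordinates with $y_j \ge 0$. The existence of a maximizer forces $\sum_i p_{ij} \leq s_j$ for every $j$, for otherwise $\pi$ is unbounded above by taking $y_j \to \infty$. Moreover, if $\sum_i p_{ij} < s_j$ for some $j$, then the unique profit-maximizing value of $y_j$ is $0$; contrapositively, $x_j > 0$ forces $\sum_i p_{ij} = s_j$.

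Item 3 then drops out by summing Item~\ref{item:exhaust} over the $n$ voters, exchanging the order of summation, and applying Item~\ref{item:complementary}:
\[
b \;=\; \sum_{i \in V} \sum_{j \in C} p_{ij} x_j \;=\; \sum_{j \in C} x_j \sum_{i \in V} p_{ij} \;=\; \sum_{j \,:\, x_j > 0} x_j \cdot s_j \;=\; \sum_{j \in C} s_j x_j,
\]
where the last equality uses that terms with $x_j = 0$ contribute nothing. The only delicate point is the zero-price edge case in Item~\ref{item:exhaust}, which must be dispatched by noting that strict monotonicity of $U_i$ would make the voter's optimization problem unbounded and hence inconsistent with $\mathbf{x}$ being the equilibrium choice; everything else is immediate from the definitions.
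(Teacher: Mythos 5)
Your proof is correct, but it takes a genuinely different route from the paper's. The paper does not prove \cref{coro:LindahlEqCoro} abstractly from \cref{def:LindahlEq}; instead, it folds the corollary into its constructive existence proof in \cref{app:lindahlProof}: it applies Eaves's constrained complementarity theorem to the map $f_j(\mathbf{x}) = -\sum_i \bigl(\partial U_i/\partial x_j\bigr)\big/\bigl(\sum_\ell x_\ell\, \partial U_i/\partial x_\ell\bigr)$, solves for the multiplier $z = n/b$, defines the prices explicitly as $p_{ij} = \frac{b}{n}\cdot\bigl(\partial U_i/\partial x_j\bigr)\big/\bigl(\sum_\ell x_\ell\, \partial U_i/\partial x_\ell\bigr)$, and then reads off all three items by direct computation from the complementarity conditions. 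You instead treat the equilibrium as given and derive the three properties purely from the two conditions in \cref{def:LindahlEq}: endowment exhaustion from strict monotonicity plus a perturbation argument, complementary slackness from the linearity and separability of the profit $\sum_j y_j\bigl(\sum_i p_{ij} - s_j\bigr)$ over the unconstrained cone $\mathbf{y}\ge 0$, and item 3 by summing and exchanging the order of summation. Your argument is arguably the more faithful proof of the corollary \emph{as stated} (``it holds for a Lindahl equilibrium that\ldots''), since it applies to every equilibrium rather than only the one produced by the fixed-point construction; the paper's version buys the corollary for free as a byproduct of existence but is tied to that particular construction. Amusingly, your complementary-slackness step is exactly the reasoning the paper uses in the opposite direction, when it verifies that its constructed prices satisfy \cref{item:lindhal_profit} of \cref{def:LindahlEq}. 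The only point worth tightening is the edge case in Item~\ref{item:exhaust}: strict monotonicity alone rules out a maximizer whenever \emph{any single} $p_{ij}=0$ (one can push that coordinate to infinity), not just when all prices vanish; but since your main contradiction only needs the budget constraint to be slack, this does not affect correctness.
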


Given \cref{coro:LindahlEqCoro}, it is easy and instructive to see that a Lindahl equilibrium lies in the core. Since this proof idea will be crucial to our analysis, we present it for completeness. 

\begin{corollary}[\citet{foley1970lindahl}]
\label{lem:lcore}
 The Lindahl equilibrium lies in the fractional core (\cref{def:core2} for $\alpha = 1$). 
\end{corollary}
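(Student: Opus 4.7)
The plan is to argue by contradiction, leveraging the fact that a Lindahl equilibrium simultaneously optimizes two things: each voter's utility at her personalized prices, and the aggregate ``profit''. Suppose $\mathbf{x}$ is the Lindahl allocation and, contrary to the claim, there exist a coalition $S \subseteq V$ and a fractional committee $\mathbf{z} \ge 0$ with $\cost(\mathbf{z}) \le \tfrac{|S|}{n} b$ such that $U_i(\mathbf{z}) > U_i(\mathbf{x})$ for every $i \in S$. The goal is to derive a contradiction by showing $\cost(\mathbf{z})$ would be forced to exceed $\tfrac{|S|}{n}b$.

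The first key step uses \cref{item:lindhal_voter} of \cref{def:LindahlEq}: since $\mathbf{x}$ maximizes $U_i$ subject to the endowment constraint $\sum_j p_{ij}y_j \le b/n$, any $\mathbf{z}$ that strictly beats $\mathbf{x}$ in utility must violate that constraint, i.e.\ $\sum_j p_{ij} z_j > b/n$ for every $i \in S$. Summing this strict inequality over the coalition yields
\[
\sum_{i \in S}\sum_{j \in C} p_{ij} z_j > \frac{|S|}{n}\, b.
\]

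The second key step swaps the order of summation and uses the per-candidate price-sum bound from \cref{item:complementary} of \cref{coro:LindahlEqCoro}, namely $\sum_{i \in V} p_{ij} \le s_j$ for every $j \in C$ (with equality when $x_j > 0$, but the inequality is all we need here). Since prices are non-negative, summing only over $i \in S$ can only decrease the inner sum, so
\[
\sum_{i \in S}\sum_{j \in C} p_{ij} z_j \;=\; \sum_{j \in C} z_j \sum_{i \in S} p_{ij} \;\le\; \sum_{j \in C} z_j \sum_{i \in V} p_{ij} \;\le\; \sum_{j \in C} z_j\, s_j \;=\; \cost(\mathbf{z}).
\]
Chaining this with the previous display gives $\cost(\mathbf{z}) > \tfrac{|S|}{n}b$, contradicting the assumed budget bound on $\mathbf{z}$.

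There is no real obstacle here; the only subtlety to flag is that the argument only needs the per-candidate bound $\sum_i p_{ij} \le s_j$ (not the equality case) and non-negativity of prices, both of which are immediate from the equilibrium conditions. Finally, feasibility of $\mathbf{x}$ itself as a fractional committee with $\cost(\mathbf{x}) = b$ follows from \cref{coro:LindahlEqCoro}\ref{item:exhaust} summed over voters together with \ref{item:complementary}, so $\mathbf{x}$ is a legitimate candidate for the fractional core, completing the argument.
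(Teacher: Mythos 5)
Your proposal is correct and follows essentially the same argument as the paper: contradiction via the voter-level optimality of $\mathbf{x}$ at the personalized prices (forcing $\sum_j p_{ij} z_j > b/n$ for each deviating voter), followed by summing over the coalition and applying the per-candidate bound $\sum_i p_{ij} \le s_j$ from \cref{coro:LindahlEqCoro}. The only cosmetic difference is that the paper also invokes the exact-exhaustion of the endowment, which (as you note) is not strictly needed for this step.
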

\begin{proof}
Given the equilibrium $\mathbf{x}$, suppose for the purpose of contradiction that there exists a subset $S \subseteq V$ of $t$ voters that can deviate and choose a committee $\mathbf{z}$ of size $\sum_j s_j z_j \le \frac{t}{n} b$ such that $U_i(\mathbf{z}) > U_i(\mathbf{x})$ for all $i \in V$. Since $\mathbf{x}$ is utility maximizing for the voter $i$ at endowment $\frac{b}{n}$, and since by \cref{item:exhaust} of \cref{coro:LindahlEqCoro}, this endowment is spent exactly, the committee $\mathbf{z}$ must cost more than $b/n$. Therefore, for all $i \in V$, we have:
$$ \sum_j p_{ij} z_j > \frac{b}{n}.$$
However, by \cref{item:complementary} of \cref{coro:LindahlEqCoro}, $\sum_i p_{ij} \le s_j$ for all $i \in S, j \in C$. Summing the previous inequality over all $i \in S$, and applying $\sum_i p_{ij} \le s_j$, we obtain:
$$ \sum_j s_j z_j \ge \sum_{j, i \in V} p_{ij} z_j \ge \sum_{i \in S} \left( \sum_j p_{ij} z_j \right) > \sum_{i \in S} \frac{b}{n} = t \frac{b}{n}.$$
This contradicts the fact that $\mathbf{z}$ could have been purchased with the endowments of $i \in S$, that is, $\sum_j s_j z_j \le t \frac{b}{n}$, and thus completes the proof.
\end{proof}

\subsubsection{Lindahl Equilibrium for Additive Utilities}
So far, we have presented the Lindahl equilibrium in its generality for continuous, concave, non-decreasing utilities. We now specialize it to additive utilities. Given the additive utility function $u_i$, we make it continuous and unconstrained using the following natural definition:
\[
U_i(\mathbf{x}) := \sum_{j \in C} \min(1, x_j) \cdot u_i(\{j\}).
\]
It is clearly a concave function.\footnote{As a technicality, to make the function satisfy the preconditions in \cref{thm:lindahlEq}, we perturb $U_i$ slightly to make it be strictly increasing, strictly concave, and have continuous derivative. This perturbation preserves $U_i$ to within a $(1+\epsilon)$ approximation for $\epsilon > 0$ being arbitrary small, which will suffice for our approximation guarantees.} Given this utility function, the Lindahl equilibrium is defined exactly as in \cref{def:LindahlEq}, and this implies \cref{coro:LindahlEqCoro} holds as is.  The Lindahl equilibrium computation will be encapsulated by a subroutine below.

\begin{definition} [Subroutine \lin{}$(C, W, \{U_{i}\}, B)$]
Given the set of candidates $C$, a subset $W \subseteq V$ of voters with continuous and concave utilities $\{U_{i}\}$, and a size constraint $B$, this procedure finds a fractional solution $\mathbf{x} \ge 0$ over $C$ with $\sum_{j \in C} s_j x_j \le B$, such that this solution is a Lindahl equilibrium.
\end{definition} 

Note that the size $B$ could be different from $b$. Using \cref{def:core2}, the solution $\mathbf{x}$ of this subroutine satisfies the following condition: There is no subset $S \subseteq W$ of voters who can find a committee $\mathbf{z} \ge 0$ over $E$, with $\sum_{j \in E} s_j z_j \le \frac{|S|}{|W|} \cdot B$ such that $U_i(\mathbf{z}) > U_i(\mathbf{x})$ for all $i \in S$.

\subsection{Randomized Rounding and Satisfied Voters}
\label{sec:satisfied}
Note that if in the Lindahl equilibrium, we have $x_j > 1$ for some $j \in C$, then all items must be integrally allocated. This is because we assumed $U_i(\mathbf{x}) = u^+_i(\mathbf{y})$ where $y_j = \min(1,x_j)$. This means that if $x_j > 1$, then it can be reduced to $1$ without affecting any utilities. Therefore, if some item is fractionally allocated, we can decrease the allocation of $j$ with $x_j > 1$ and increase the allocation of a fractional item, improving the utility of some agent. This violates the core condition. We will therefore assume throughout that $x_j \in [0,1]$ for all $j \in C$.

\subsubsection{Rounding Procedure \idep{}$(C,W,\{u_i\},B)$}
Assume we are given a fractional solution $\mathbf{x} \ge 0$ to \lin{}$(C, W, \{u_{ij}\}_{i, j}, B)$. The procedure \idep{}$(C,W,\{u_{ij}\},B)$ uses the algorithm of~\citet{DBLP:journals/talg/ByrkaPRST17} to the fractionally allocated items in this solution. This produces a committee $(O,\ell)$, where the candidates in $O$ are chosen integrally, at most candidate $\ell$ is chosen fractionally, and the total size is always at most $B$.

Let $X_j$ be the random variable denoting whether candidate $j$ is selected: $X_j = 1$ if $j \in O$, $X_j = 0$ of $j \notin O \cup \{\ell\}$, and $X_j$ is the fractional weight if $j = \ell$. Then the procedure \idep{} preserves the marginals: $\Ex{}{X_j} = \min(1,x_j)$ for all $j \in C$, and the $\{X_j\}$ are negatively correlated~\citep{Gandhi,PanconesiS97}. Note that the candidates with $X_j = 1$ define $O$, while the one with $X_{\ell} \in (0,1)$ is the fractionally chosen candidate (it may not exist). 

Note that 
\begin{equation}
    \label{eq:expectedValueBound}
    \Ex{}{u_{i \ell} + \sum_{j \in O} u_{ij} } \ge \Ex{}{\sum_{j \in C} u_{ij} X_j} =  \sum_{j \in C} u_{ij} \min(1,x_j) = U_i(\mathbf{x})
\end{equation}
where the expectation is over the random choice of $O$. Further, the size constraint is never violated, so that: 
\[
\sum_{j \in O} s_j \le \sum_{j \in C} s_j X_j \le B
\]
regardless of the outcome of the rounding procedure.  Note finally that any candidate that is fully allocated by $\mathbf{x}$, that is, with $x_j \ge 1$, must be present in $O$.

\subsubsection{Satisfied Voters}
The procedure \idep{}$(C,W,\{u_{ij}\}, B)$ returns the committee $O$. Given the definition of $\gamma$-satisfied from \cref{def:gammaSatisfied}, we can strengthen \cref{thm:nw_satisfied} as follows:

\begin{theorem}[Constant Fraction of Constant-Satisfied Voters]
\label{thm:constant_satisfied2}
    Given the fractional solution $\mathbf{x}$ produced by \lin{}$(C, W, \{u_{ij}\}, B)$ where $|W| = n'$, there is a integral committee $O$ produced by \idep{}$(C, W, \{u_{ij}\}, B)$ with at least $(1 - \beta)n'$ $\gamma$-satisfied voters, where $\beta = \gamma e^{1-\gamma}$.
\end{theorem}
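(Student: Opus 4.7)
The plan is to prove a per-voter bound: for each voter $i \in W$, the probability (over the randomness in \idep{}) that $i$ is not $\gamma$-satisfied by $O$ is at most $\beta = \gamma e^{1-\gamma}$. Linearity of expectation then gives at least $(1-\beta) n'$ $\gamma$-satisfied voters in expectation, so some realization achieves at least that many, and we take $O$ to be the committee from that realization. Feasibility $\sum_{j \in O} s_j \le B$ is automatic, since the \citet{DBLP:journals/talg/ByrkaPRST17} dependent rounding used inside \idep{} never violates the size constraint, regardless of the outcome.

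Fix a voter $i$, let $q^* := \mbox{argmax}_{j \in C}\, u_{ij}$, and write $M := u_{iq^*}$. If $M \ge U_i(\mathbf{x})/\gamma$, then the additament $q^*$ already delivers $u_i(O \cup \{q^*\}) \ge M \ge U_i(\mathbf{x})/\gamma$ deterministically, so $i$ is $\gamma$-satisfied with probability one. Otherwise $u_{ij} < U_i(\mathbf{x})/\gamma$ for every $j$; after rescaling utilities by $1/M$ we may assume $u_{ij} \in [0,1]$ for all $j$ and $\mu := U_i(\mathbf{x}) > \gamma$. Let $X_j$ be the variable produced by \idep{}: $X_j = 1$ for $j \in O$, $X_\ell \in (0,1)$ for the at-most-one fractional candidate $\ell \notin O$, and $X_j = 0$ otherwise. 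By the marginal-preservation property $\Ex{}{X_j} = \min(1, x_j)$, so $Y := \sum_{j \in C} u_{ij} X_j$ has expectation $\Ex{}{Y} = \sum_j u_{ij} \min(1, x_j) = U_i(\mathbf{x}) = \mu$. Taking $\ell$ as the additament (or any fixed candidate if no fractional $\ell$ exists) gives $u_i(O \cup \{\ell\}) = \sum_{j \in O} u_{ij} + u_{i\ell} \ge \sum_{j \in O} u_{ij} + u_{i\ell} X_\ell = Y$, so voter $i$ is $\gamma$-satisfied whenever $Y \ge \mu/\gamma$.

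The variables $\{X_j\}$ returned by \citet{DBLP:journals/talg/ByrkaPRST17} rounding are negatively correlated, and each $u_{ij} X_j$ lies in $[0,1]$, so the standard lower-tail Chernoff bound for negatively correlated $[0,1]$-valued sums (the Panconesi--Srinivasan version of which also underlies \cref{lem:chernoff2}) yields $\Pr[Y \le \mu/\gamma] \le (\gamma e^{1-\gamma})^{\mu/\gamma}$. Because $\gamma e^{1-\gamma} < 1$ for $\gamma > 1$ and $\mu/\gamma > 1$ in this case, the right-hand side is at most $\gamma e^{1-\gamma} = \beta$, establishing the per-voter claim. The only mildly delicate point is the bookkeeping around the unique fractional candidate $\ell$: one must absorb its contribution to $Y$ into the additament so that the concentration event lower-bounds the \emph{realized} integer utility $u_i(O \cup \{\ell\})$, rather than merely $u_i(O)$. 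Once this is handled, the argument is a clean application of Chernoff to negatively correlated bounded variables; additivity of $u_i$ together with the fact that dependent rounding never exceeds the budget together remove the $(\kappa e^{1-\kappa})^{1/\kappa}$ overhead in $\beta$ that appears in the submodular analysis of \cref{thm:nw_satisfied}.
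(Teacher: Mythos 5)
Your proposal is correct and follows essentially the same route as the paper's proof: a per-voter failure-probability bound of $\gamma e^{1-\gamma}$ obtained by (i) handling deterministically the case where a single candidate already provides a $1/\gamma$ fraction of $U_i(\mathbf{x})$ as additament, and (ii) otherwise rescaling so all coefficients lie in $[0,1]$ and applying the lower-tail Chernoff bound for negatively correlated variables to the rounded sum, with the at-most-one fractional candidate $\ell$ absorbed into the additament; linearity of expectation then yields a realization with $(1-\beta)n'$ satisfied voters. The only (immaterial) difference is your case split, which thresholds $\max_j u_{ij}$ against $U_i(\mathbf{x})/\gamma$ over all of $C$, whereas the paper separates the fully allocated set $I$ from the fractional set $F$ and thresholds against $u_i(F)/\gamma$, giving mean exactly $\gamma$ rather than your mean $\mu > \gamma$ — both yield the same bound $\beta = \gamma e^{1-\gamma}$.
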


Proving \cref{thm:constant_satisfied2} requires a concentration bound for the sum of negatively correlated weighted Bernoulli random variables. 
\begin{restatable}[\citep{PanconesiS97}]{lemma}{chernoff}
\label{lem:chernoff}
Let $B_1, B_2, \ldots, B_k$ be $k$ negatively correlated Bernoulli random variables. Let $X = \sum_{i = 1}^k \beta_i B_i$, where $\beta_i \in [0,1]$. Let $\Ex{}{X} \ge \mu$.
Then for any constant $\delta \in (0, 1)$ we have:
\[
\Prob{}{X < (1 - \delta)\mu} < \left(\frac{e^{-\delta}}{(1 - \delta)^{(1 - \delta)}}\right)^{\mu}.
\]
\end{restatable}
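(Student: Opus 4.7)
\textbf{Proof proposal for Lemma \ref{lem:chernoff}.} The plan is to follow the classical Chernoff–Hoeffding template via the moment generating function of $-X$, with the negative correlation hypothesis used in exactly one place to decouple the MGF into a product.

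First I would apply the exponential Markov inequality for $t>0$:
\[
\Pr[X < (1-\delta)\mu] \;=\; \Pr\!\left[e^{-tX} > e^{-t(1-\delta)\mu}\right] \;\le\; e^{t(1-\delta)\mu}\,\mathbb{E}\!\left[e^{-tX}\right].
\]
The crucial step is to bound $\mathbb{E}[e^{-tX}] = \mathbb{E}\!\left[\prod_i e^{-t\beta_i B_i}\right]$ by $\prod_i \mathbb{E}[e^{-t\beta_i B_i}]$. Since the functions $b \mapsto e^{-t\beta_i b}$ are all nonincreasing on $\{0,1\}$, this is precisely the product-of-expectations inequality guaranteed by the negative correlation hypothesis (in the sense of \citet{PanconesiS97}); this is the main obstacle in the sense that it is the one place the joint distribution enters, and it is exactly what their notion is designed to provide.

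Next I would estimate each factor. Because $B_i\in\{0,1\}$, we have the exact identity $e^{-t\beta_i B_i} = 1 - (1-e^{-t\beta_i})B_i$, so with $p_i := \mathbb{E}[B_i]$,
\[
\mathbb{E}\!\left[e^{-t\beta_i B_i}\right] \;=\; 1 + p_i\bigl(e^{-t\beta_i}-1\bigr).
\]
Since $x \mapsto e^{-tx}$ is convex and $\beta_i\in[0,1]$, the chord bound gives $e^{-t\beta_i} - 1 \le \beta_i(e^{-t}-1)$, and then $1+y\le e^y$ yields $\mathbb{E}[e^{-t\beta_i B_i}] \le \exp\!\bigl(p_i\beta_i(e^{-t}-1)\bigr)$. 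Multiplying over $i$ and using $e^{-t}-1 \le 0$ together with $\mathbb{E}[X]=\sum_i p_i\beta_i \ge \mu$,
\[
\mathbb{E}\!\left[e^{-tX}\right] \;\le\; \exp\!\bigl(\mathbb{E}[X](e^{-t}-1)\bigr) \;\le\; \exp\!\bigl(\mu(e^{-t}-1)\bigr).
\]

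Finally I would optimize $t$. Combining the two estimates gives $\Pr[X<(1-\delta)\mu]\le \exp\!\bigl(\mu(e^{-t}-1) + t(1-\delta)\mu\bigr)$. Differentiating in $t$ sets $e^{-t} = 1-\delta$, i.e.\ $t=\ln\tfrac{1}{1-\delta}$, which is positive because $\delta\in(0,1)$. Substituting yields
\[
\Pr[X<(1-\delta)\mu] \;\le\; \exp\!\bigl(-\delta\mu - (1-\delta)\mu\ln(1-\delta)\bigr) \;=\; \left(\frac{e^{-\delta}}{(1-\delta)^{1-\delta}}\right)^{\mu},
\]
which is the claimed inequality. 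The only delicate piece is the decoupling step; every other ingredient is a routine one-dimensional convexity/optimization computation.
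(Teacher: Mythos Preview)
Your proof is correct and follows the same overall Chernoff/MGF template as the paper's (hidden) proof, but your execution of the decoupling step is more direct. The paper constructs independent copies $Y_i$ with the same marginals, shows the moment domination $\mathbb{E}[\bar X^t]\le \mathbb{E}[\bar Y^t]$ for every integer $t\ge 0$ via the multinomial expansion and the negative-correlation bound on $\prod_i \bar X_i^{\phi_i}$, and then transfers this to the MGF through the Taylor series of $e^{-s\bar X}$. You instead decouple $\mathbb{E}\bigl[\prod_i e^{-t\beta_i B_i}\bigr]$ in one shot. Both routes ultimately rest on the same fact---that $e^{-t\beta_i B_i}=e^{-t\beta_i}+(1-e^{-t\beta_i})(1-B_i)$ is a nonnegative affine combination of $1$ and $1-B_i$, so expanding the product yields only nonnegative coefficients on $\prod_{i\in S}(1-B_i)$ and the negative-correlation-of-complements hypothesis applies termwise. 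Your version simply skips the detour through integer moments and Taylor expansion; the paper's version makes the comparison-to-independent-copies structure more explicit. Either way, the remaining convexity and optimization steps are identical.
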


Using the above bound, we complete the proof of \cref{thm:constant_satisfied2}.

\begin{proof}[Proof of \cref{thm:constant_satisfied2}]
Let $I \subseteq C$ be the set of fully allocated candidates in \lin{}$(C, W, \{u_{ij}\}_{i, j}, B)$ (i.e. $I = \{j \in C \mid x_j \geq 1\}$), and let $F \subseteq C$ be the set of fractionally allocated ones (i.e. $F = \{j \in C \mid x_j \in (0, 1)\}$). Let $u_i(F) = \sum_{j \in F} x_j u_{ij}$. There are two cases:

\noindent\textbf{Case (1):} Suppose these is some candidate $q \in F$ so that $u_{iq} \geq \frac{u_i(F)}{\gamma}$. In this case, voter $i$ is $\gamma$-satisfied with probability $1$, since
\[
u_{iq} + \sum_{j \in O} u_{ij} \geq \frac{\sum_{j \in F} x_ju_{ij}}{\gamma} + \sum_{j \in I} u_{ij} \geq \frac{\sum_{j \in E} u_{ij} \min(1, x_j)}{\gamma},
\]
where the first step uses that $I \subseteq O$.

\noindent\textbf{Case (2):} Suppose for every candidate $j \in F$, $u_{ij} < \frac{u_i(F)}{\gamma}$. In this case, we consider the candidate $\ell$ from the procedure \idep{} as the additament, and invoke our concentration bound of \cref{lem:chernoff}. Notice $X_j$'s are negatively correlated Bernoulli random variables. Let $\mu = \Ex{}{\sum_{j \in F} u_{ij} \cdot \frac{\gamma}{u_i(F)} \cdot X_j} = \gamma$ and $\delta = 1 - \frac{1}{\gamma}$. By \cref{lem:chernoff}, we have
\[
\Prob{}{\sum_{j \in F} u_{ij} \cdot \frac{\gamma}{u_i(F)} \cdot X_j < 1} < \left(\frac{e^{-\delta}}{(1 - \delta)^{(1 - \delta)}}\right)^{\mu} = \gamma e^{1 - \gamma}.
\]
If the event $\sum_{j \in F} u_{ij} \cdot \frac{\gamma}{u_i(F)} \cdot X_j < 1$ does not happen, the utility of $i$ with additament $\ell$ is at least
\begin{align*}
u_{i\ell} + \sum_{j \in O} u_{ij} &\geq u_{i\ell} X_\ell + \sum_{j \in F \setminus \{\ell\}} u_{ij} X_j + \sum_{j \in I} u_{ij} = \sum_{j \in F} u_{ij} X_j + \sum_{j \in I} u_{ij}\\
&\geq \frac{u_i(F)}{\gamma} + \sum_{j \in I} u_{ij} \geq \frac{\sum_{j \in E} u_{ij} \min(1, x_j)}{\gamma},
\end{align*}
where the first step again uses that $I \subseteq O$.
\end{proof}

\subsection{The Constant Approximation to the Core}
\label{sec:proof2}
In this section, we modify the algorithm in \cref{sec:proof} and prove that it returns a committee in an $O(1)$-core. The analysis of our algorithm now critically requires the market-clearing properties of the Lindahl equilibrium presented in \cref{def:LindahlEq}, \cref{coro:LindahlEqCoro}, and \cref{lem:lcore}.

\paragraph{Algorithm.} The algorithm is the same as \cref{alg:1}, except the following lines:

\begin{itemize}
    \item {\bf Line 7:} $\mathbf{x}_t \leftarrow \ $ \lin{}($C, V_t, \{U_{i}\}, b_t$).
    \item {\bf Line 8:} $O_t \leftarrow $ Solution of $\idep{}(C, V_t, \{u_i\}, b_t)$ that satisfies \cref{thm:constant_satisfied2}.
\end{itemize}

\paragraph{Analysis.} The analysis follows the same outline as that in \cref{sec:mainanalysis}. First, using the same argument as in that section, the solution $T^*$ is feasible for the size $b$. 

As before, we proceed to show a contradiction. Let $S$ denote the set of voters that deviate, and let $A$ denote the set of items they deviate to. We have $ \sum_{j \in A} s_j \le \frac{|S|}{n} \cdot b$, and $A$ provides an $\alpha$-factor larger utility to voters in $S$ even after including any additament. Consider the voters in $W_t$, and let $S_t = S \cap W_t$. These voters are $\gamma$-satisfied by $O_t$ with respect to the fractional solution $\mathbf{x}_t$ (\cref{def:gammaSatisfied}). Therefore, if $i \in S_t$ deviates to $A$ to obtain an $\alpha$-factor larger utility, it must be that $U_i(A) \ge \frac{\alpha}{\gamma} \cdot U_i(\mathbf{x}_t)$. Let $\theta = \frac{\alpha}{\gamma}$. We will assume $\theta > 1$ below.

We will now show the analog of \cref{thm:grad} using the prices computed by the Lindahl equilibrium. Let $p^t$ denote the prices computed by \lin{}($C, V_t, \{U_{i}\}, b_t)$, and let $n_t = |V_t|$.  The following lemma generalizes \cref{lem:lcore} and bounds the price of the set $A$ via the optimality conditions of the Lindahl equilibrium $\mathbf{x}_t$.

\begin{lemma}
\label{lem:priceA}
    For all $i \in S_t$, we have $\sum_{j \in A} p^t_{ij} \ge \theta \cdot \frac{b_t}{n_t}$.
\end{lemma}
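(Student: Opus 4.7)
The plan is to exploit the utility-maximization property of the Lindahl equilibrium (item~\ref{item:lindhal_voter} of \cref{def:LindahlEq}) in essentially the same way that \cref{lem:lcore} uses it, but now with a scaling argument to handle the factor of $\theta$. The core observation is that for a voter $i \in S_t$ who has $U_i(\mathbf{1}_A) > \theta\, U_i(\mathbf{x}_t)$, the shrunken allocation $\mathbf{y} = \tfrac{1}{\theta}\cdot \mathbf{1}_A$ (i.e., $y_j = 1/\theta$ for $j \in A$ and $y_j = 0$ otherwise) still beats $\mathbf{x}_t$ in utility and must therefore be unaffordable at $i$'s prices.

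Concretely, I would first observe that since $\theta \geq 1$ we have $1/\theta \leq 1$, so the ``$\min(1,\cdot)$'' cap is never active on $\mathbf{y}$, giving
\[
U_i(\mathbf{y}) \;=\; \sum_{j\in A}\tfrac{1}{\theta}\,u_{ij} \;=\; \tfrac{1}{\theta}\,U_i(\mathbf{1}_A) \;=\; \tfrac{1}{\theta}\,U_i(A).
\]
Combining with the hypothesis $U_i(A) > \theta\,U_i(\mathbf{x}_t)$ yields $U_i(\mathbf{y}) > U_i(\mathbf{x}_t)$. But by item~\ref{item:lindhal_voter} of \cref{def:LindahlEq}, $\mathbf{x}_t$ maximizes $U_i$ on the budget set $\{\mathbf{z}\ge 0 : \sum_j p^t_{ij} z_j \le b_t/n_t\}$, so $\mathbf{y}$ must violate this constraint:
\[
\tfrac{1}{\theta}\sum_{j \in A} p^t_{ij} \;=\; \sum_{j\in C} p^t_{ij}\, y_j \;>\; \tfrac{b_t}{n_t}.
\]
Multiplying through by $\theta$ gives the claimed inequality $\sum_{j \in A} p^t_{ij} > \theta\cdot \tfrac{b_t}{n_t}$, which implies the weak inequality stated in the lemma.

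There is essentially no serious obstacle; the only technical wrinkles are (i) to remember that the utility function used in \lin{} is the perturbed version of $U_i(\mathbf{x}) = \sum_j \min(1,x_j)\,u_{ij}$, which is strictly concave and continuously differentiable, so that \cref{thm:lindahlEq} applies and the utility-maximization characterization is valid (the $(1+\eps)$ perturbation only costs an arbitrarily small factor which can be absorbed into the final constants), and (ii) to note that we do not even need the market-clearing price statement from \cref{coro:LindahlEqCoro} here: we are only using the per-voter utility-maximization condition. The lemma will then feed into the subsequent argument exactly as \cref{lem:lcore} fed into the proof that the Lindahl equilibrium lies in the fractional core, with the extra $\theta$ factor propagating through.
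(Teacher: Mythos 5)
Your proposal is correct and follows essentially the same route as the paper: both arguments rest on the fact that $\mathbf{x}_t$ is voter $i$'s utility-maximizing bundle on the budget set $\{\mathbf{z}\ge 0 : \sum_j p^t_{ij} z_j \le b_t/n_t\}$, combined with the observation that scaling a bundle down by $1/\theta$ loses at most a $1/\theta$ factor of utility (the paper phrases this via concavity of $U_i$, you via exact linearity below the cap, which is the same thing here). Your explicit scale-down-by-$1/\theta$ step is just a spelled-out version of the paper's terse ``any solution producing $\theta$ times the utility must cost $\theta$ times as much,'' and your remark that only the utility-maximization condition (not market clearing) is needed is accurate.
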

\begin{proof}
    First note that $x_{tj} \le 1$ for all items $j$. If any $x_{tj} > 1$, then all items must be integrally allocated (see \cref{sec:lindahl}), which means $A$ cannot achieve a $\theta$-factor larger utility. Note that $\mathbf{x}$ is the utility maximizing solution to a packing problem for voter $i$ where the ``size'' of item $j$ is $p^t_{ij}$ and the ``size'' constraint is $\frac{b_t}{n_t}$. This constraint is exactly satisfied by \cref{item:exhaust} of \cref{coro:LindahlEqCoro}.  Since $U_i$ is concave, any solution that produces $\theta$ factor more utility must have ``size'' at least $\theta$ times larger. Therefore, the price of $A$ is $\theta$ times larger than that of $\mathbf{x}_t$, completing the proof.
\end{proof}

We now bound the size of set $S_t$ as follows:

\begin{lemma} [Analog of \cref{thm:grad}]
\label{lem:St}
$|S_t| \le \frac{n_t}{b_t} \cdot \frac{\sum_{j \in A} s_j}{\theta}$.
\end{lemma}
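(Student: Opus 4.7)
The plan is to combine the per-voter price lower bound from \cref{lem:priceA} with the market-clearing property of the Lindahl equilibrium from \cref{coro:LindahlEqCoro} via a simple double-counting argument, exactly mirroring the proof of \cref{lem:lcore}.

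First, I would sum the inequality in \cref{lem:priceA} over all $i \in S_t$ to obtain
\[
\sum_{i \in S_t} \sum_{j \in A} p^t_{ij} \;\ge\; |S_t| \cdot \theta \cdot \frac{b_t}{n_t}.
\]
Next, I would swap the order of summation and bound the inner sum using the market-clearing inequality. Since the prices $p^t_{ij}$ are nonnegative and $S_t \subseteq V_t$, we have $\sum_{i \in S_t} p^t_{ij} \le \sum_{i \in V_t} p^t_{ij}$. By \cref{item:complementary} of \cref{coro:LindahlEqCoro} applied to the equilibrium computed by \lin{}$(C, V_t, \{U_i\}, b_t)$, we have $\sum_{i \in V_t} p^t_{ij} \le s_j$ for every candidate $j \in C$. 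Summing over $j \in A$ gives
\[
\sum_{j \in A} \sum_{i \in S_t} p^t_{ij} \;\le\; \sum_{j \in A} s_j.
\]
Chaining these two bounds and dividing through by $\theta \cdot b_t / n_t$ yields the claim.

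There is essentially no main obstacle: the two ingredients have already been set up, and the argument is a single double-counting. The only bookkeeping point to check is that \cref{coro:LindahlEqCoro} is being invoked on the Lindahl equilibrium for the restricted voter set $V_t$ with budget $b_t$ (rather than on the original instance), but this is legitimate because the corollary applies to any Lindahl equilibrium and the subroutine \lin{} by definition returns one. Note also that this proof does not require the additament at all, nor any local-search or concavity argument beyond what is encoded in \cref{lem:priceA}; this is why the additive-utility case gets a cleaner bound (factor $\theta$ instead of $\theta - 1 - 2\eps$) than the submodular case in \cref{thm:grad}.
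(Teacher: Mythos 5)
Your proposal is correct and is essentially identical to the paper's proof: both sum the bound of \cref{lem:priceA} over $i \in S_t$, then upper-bound the double sum by $\sum_{j \in A} s_j$ using \cref{item:complementary} of \cref{coro:LindahlEqCoro} (with nonnegativity of prices to pass from $S_t$ to $V_t$), and combine. Nothing is missing.
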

\begin{proof}
    Summing the bound in \cref{lem:priceA} over $i \in S_t$, we have
    \[
    \sum_{i \in S_t} \sum_{j \in A} p^t_{ij} \ge |S_t| \cdot \theta \cdot \frac{b_t}{n_t}.
    \]
    By \cref{item:complementary} of \cref{coro:LindahlEqCoro}, we have 
    \[
    \sum_{i \in V_t} p^t_{ij} \le s_j \ \ \forall j \qquad \implies \qquad \sum_{i \in S_t} \sum_{j \in A} p^t_{ij} \le \sum_{j \in A} s_j.
    \]
    Combining these inequalities completes the proof.
\end{proof}

This proves the analog of \cref{thm:grad}. Continuing as before, we sum \cref{lem:St} over all $t$, and using $ \sum_{j \in A} s_j \le \frac{|S|}{n} b$, we have:
\[
|S| = \sum_t |S_t| \le \frac{\sum_{j \in A} s_j}{\theta} \cdot \sum_t \frac{n_t}{b_t} \le \frac{|S|}{n} \cdot \frac{b}{\theta} \cdot \sum_t \frac{n_t}{b_t}.
\]
Therefore, for a blocking coalition to exist, we need:
\begin{equation} \label{eq:block11}
\frac{b}{n} \cdot \sum_t \frac{n_t}{b_t} \ge \theta = \frac{\alpha}{\gamma}.
\end{equation}
We will now set the parameters $\omega, \gamma, \alpha$ so that the above inequality is false. First note by \cref{thm:constant_satisfied2} that $n_{t+1} \le \beta n_t$ where $\beta = \gamma e^{1-\gamma}$. Further, $b_{t+1} = \omega b_t$. Therefore, 
\[
\frac{n_{t+1}}{b_{t+1}} \le \frac{\beta}{\omega} \cdot \frac{n_t}{b_t}
\]
with $\frac{n_0}{b_0} = \frac{n}{(1-\omega)b}$. Therefore,
\begin{equation} \label{eq:block21}
\frac{b}{n} \cdot \sum_t \frac{n_t}{b_t} \le \frac{1}{(1-\omega)} \cdot \sum_{t \ge 0} \left( \frac{\beta}{\omega} \right)^t = \frac{\omega}{(1-\omega)(\omega - \beta)}.
\end{equation}

Combining \cref{eq:block11} and \cref{eq:block21}, for a blocking coalition to exist, we need
\[
\alpha \le \frac{\omega \gamma}{(1-\omega)\left(\omega - \beta\right)} = \frac{\omega \gamma}{(1-\omega)\left(\omega - \gamma e^{1-\gamma}\right)},
\]
where the last step uses $\beta = \gamma e^{1-\gamma}$ according to \cref{thm:constant_satisfied2}.

For an $\alpha$ slightly larger than the right-hand side, a blocking coalition will therefore not exist. Plugging $\omega = 0.15$ and $\gamma = 6.7$ shows $\alpha < 9.27$, which yields the following theorem:
\additive*



\section{Lower Bounds}
\label{sec:general}
In this section, we provide lower bound examples for general monotone utilities and monotone submodular utilities. The former result rules out extending our constant factor bound to general monotone utilities, while the latter shows that for submodular utilities, there is a lower bound on approximation of an absolute constant $c > 1$.

\subsection{General Monotone Utilities}
\general*

\cref{thm:general_lb} is proved by the following example. (The same structure of $2$ groups of $3$ cyclically symmetric voters appears in \citep{DBLP:conf/ec/FainMS18, DBLP:journals/corr/PetersPS20}.)
\begin{example}
\label{ex:general}
We have $n = 6$ voters and $m = 30$ candidates. The candidates are grouped into $6$ disjoint sets, each of which contains $5$ candidates and is called a ``gadget''. We name the gadgets $g_1, \ldots, g_6$. Each voter $i$ has a favorite gadget $g_{f_i}$ and a second favorite gadget $g_{s_i}$, given by:
\begin{alignat*}{11}
&f_1 = 1, \ &&f_2 = 2, \ &&f_3 = 3, \ &&f_4 = 4, \ &&f_5 = 5, \ &&f_6 = 6;\\
&s_1 = 2, \ &&s_2 = 3, \ &&s_3 = 1, \ &&s_4 = 5, \ &&s_5 = 6, \ &&s_6 = 4.
\end{alignat*}
For any committee $E$, let $x_i(E) = \frac{1}{5}|E \cap g_{f_i}|$ and $y_i(E) = \frac{1}{5}|E \cap g_{s_i}|$, denoting the fraction of candidates in the favorite / second favorite gadget of voter $i$ being selected into $E$, respectively. The utility of voter $i$ on $E$ is given by
\[
u_i(E) = (\alpha + 1) \cdot \mathbbm{1}[x_i(E) = 1] + \mathbbm{1}[y_i(E) = 1].
\]
Here $\alpha = \varphi(n, m) = \varphi(6, 30)$. Her utility is monotone and supermodular.

Each candidate is of unit size $1$ and the budget $b = 15$. For any feasible committee $E$, there must be at least $3$ gadgets $g$'s with $|E \cap g| \leq 3$ -- otherwise the committee has at least $4 \cdot 4 = 16 > b$ candidates. Therefore, either $\{g_1, g_2, g_3\}$ or $\{g_4, g_5, g_6\}$ includes at least $2$ gadgets with $|E \cap g| \leq 3$. Without loss of generality, assume $\{g_1, g_2, g_3\}$ does and $|E \cap g_1| \leq 3$, $|E \cap g_2| \leq 3$. In this case, voters $1$ and $2$ can deviate and buy $g_2$, as they have a budget of $b \cdot \frac{2}{n} = 5$. For any additaments $q$ and $q'$:
\begin{alignat*}{3}
&u_1(g_2) = 1, \quad &&u_1(E \cup \{q\}) = 0;\\
&u_2(g_2) = \alpha + 1, \quad &&u_2(E \cup \{q'\}) \leq 1.
\end{alignat*}
We have $u_1(g_2) > \alpha u_1(E \cup \{q\})$ and $u_2(g_2) > \alpha u_2(E \cup \{q'\})$.
\end{example}

\subsection{Submodular Utilities}
Next, we modify \cref{ex:general} to show a lower bound for monotone submodular utilities.
\submodularlb*
\begin{example}
\label{ex:submodular}
We use the same setting as \cref{ex:general}, except the utility of each voter $i$ is given by
\[
u_i(E) = x_i(E) + z \cdot (1 - x_i(E)) \cdot y_i(E),
\]
where $z \in (0, 1)$ is a constant to be determined later.

\begin{lemma}
The function $u_i$ is monotone and submodular.
\end{lemma}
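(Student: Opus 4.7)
The plan is to fix an arbitrary voter $i$ and drop the subscript, writing $F = g_{f_i}$ (favorite gadget) and $S = g_{s_i}$ (second-favorite gadget). Note that $F$ and $S$ are disjoint, each of size $5$. For a set $E \subseteq C$, set $x(E) = |E \cap F|/5$ and $y(E) = |E \cap S|/5$, so that
\[
u_i(E) \;=\; x(E) \;+\; z\,\bigl(1 - x(E)\bigr)\,y(E) \;=\; x(E) + z\,y(E) - z\,x(E)\,y(E).
\]
Since only the intersections of $E$ with $F$ and with $S$ matter, the proof reduces to a case analysis on which gadget (if any) a newly added candidate belongs to.

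For monotonicity, I would compute the marginal $u_i(E \cup \{j\}) - u_i(E)$ for $j \notin E$ in three cases. If $j \notin F \cup S$, neither $x$ nor $y$ changes, so the marginal is $0$. If $j \in F$, then $x$ increases by $1/5$ and $y$ is unchanged, giving marginal $\tfrac{1}{5}(1 - z\,y(E))$, which is nonnegative because $z \in (0,1)$ and $y(E) \in [0,1]$. If $j \in S$, then $y$ increases by $1/5$ and $x$ is unchanged, giving marginal $\tfrac{z}{5}(1 - x(E)) \geq 0$. Hence $u_i$ is monotone.

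For submodularity, I would fix $T \subseteq T'$ and $t \notin T'$, and show that the marginal of $t$ at $T$ is at least the marginal of $t$ at $T'$, again by case analysis on where $t$ lives. If $t \notin F \cup S$, both marginals are $0$. If $t \in F$, the marginals are $\tfrac{1}{5}(1 - z\,y(T))$ and $\tfrac{1}{5}(1 - z\,y(T'))$; since $T \subseteq T'$ implies $y(T) \le y(T')$, the first is at least the second. If $t \in S$, the marginals are $\tfrac{z}{5}(1 - x(T))$ and $\tfrac{z}{5}(1 - x(T'))$, and the same monotonicity of $x$ in the underlying set gives the desired inequality. This covers all cases.

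There is no real obstacle here: because $F \cap S = \varnothing$, the cross-term $z\,x\,y$ is the only source of interaction between the two gadgets, and it enters with a negative sign bounded by $z \in (0,1)$, which is exactly why both monotonicity and submodularity survive. The only thing to be careful about is that in the submodular comparison for $t \in F$ we use that $y$ is monotone nondecreasing in the set, and symmetrically for $t \in S$; both follow immediately because $y$ and $x$ are themselves modular (in fact linear) in the underlying set.
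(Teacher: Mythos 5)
Your proof is correct and follows essentially the same route as the paper's: a case analysis on which gadget the added candidate belongs to, computing the marginal explicitly and using that $x$ and $y$ are nondecreasing in the underlying set. The only (harmless) difference is that you explicitly handle the case $t \notin g_{f_i} \cup g_{s_i}$, which the paper glosses over.
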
 
\begin{proof}
Fix any $E \subseteq C$ and $t \notin E$ and consider $u_i(E \cup \{t\}) - u_i(T)$. Since $g_{f_i} \cap g_{s_i} = \varnothing$, this $t$ lies in one of these two sets but not both. Suppose $t \in g_{s_i}$. Then, \[
u_i(E \cup \{t\}) - u_i(E) = z \cdot (1 - x_i(E)) \cdot \left( y_i(E \cup \{t\}) - y_i(E) \right) \ge 0.
\]
Similarly, if $t \in g_{f_i}$, we have
$$ u_i(E \cup \{t\}) - u_i(E) = (1 - z y_i(E)) \cdot \left( x_i(E \cup \{t\}) - x_i(E) \right) \ge 0,$$
where we have used that $y_i(E) \le 1$ and $z \in [0,1]$. This shows that $u_i$ is a monotone function.

Similarly, if $E \subseteq E'$, then  $1 - x_i(E) \ge 1 - x_i(E')$ since the coverage function is monotone. Further, by the submodularity of the coverage function, we have
\[
y_i(E \cup \{t\}) - y_i(E) \ge y_i(E' \cup \{t\}) - y_i(E').
\]
Therefore, if $t \in g_{s_i}$, then
\begin{align*} 
u_i(E \cup \{t\}) - u_i(E) & = z \cdot (1 - x_i(E)) \cdot \left( y_i(E \cup \{t\}) - y_i(E) \right) \\
  & \ge z \cdot (1 - x_i(E')) \cdot \left( y_i(E' \cup \{t\}) - y_i(E') \right) \\
  & = u_i(E' \cup \{t\}) - u_i(E').
\end{align*}
A similar argument for the case where $t \in g_{f_i}$ completes the proof of submodularity.
\end{proof}

For any feasible committee $E$, again without loss of generality, assume $|E \cap g_1| \leq 3$, $|E \cap g_2| \leq 3$. In this case, voters $1$ and $2$ can deviate and buy $g_2$. For any additaments $q$ and $q'$:
\begin{alignat*}{3}
&u_1(g_2) = z, \quad  &&u_1(E \cup \{q\}) \leq 0.8 + z \cdot 0.2 \cdot 0.6;\\
&u_2(g_2) = 1, \quad  &&u_2(E \cup \{q'\}) \leq 0.8 + z \cdot 0.2 \cdot 1.
\end{alignat*}
When $z = \frac{\sqrt{689} - 17}{10} \approx 0.925$, the gap is $\min\left(\frac{u_1(g_2)}{u_1(E \cup \{q\})}, \frac{u_2(g_2)}{u_2(E \cup \{q'\})}\right) \geq \frac{5\sqrt{689} - 115}{16} > 1.015$.
\end{example}

\section{Conclusions and Open Questions}
Our work brings up several open questions. First is the existence of polynomial time computable pricing rules that approximate the core. The work of~\citet*{DBLP:journals/corr/PetersPS20} shows a price increase process that runs in polynomial time and provides a logarithmic approximation to the core for additive utilities. However, the prices are {\em common} to the voters, as opposed to the per-voter market clearing prices of the Lindahl equilibrium, and we do not know how to compute the latter in polynomial time. Is there an intermediate price tattonnement scheme that not only runs in polynomial time, but also achieves a constant approximation?

Next, our approximation bounds are far from tight. In particular, we have not ruled out the existence of a $1$-core for additive utility, or an $O(1)$-core for subadditive or XOS utilities. To address the latter, we would need to understand the approximability of the core when utilities of voters are continuous and the maximum of linear functions. Such utilities are convex, but have a specific form that may be amenable to better approximation results than the worst case illustrated in \cref{thm:general_lb}. We leave this as an interesting open question.

\bibliographystyle{plainnat}
\bibliography{citations}

\appendix
\section{Existence of Lindahl Equilibrium}
\label{app:lindahlProof}
For the purpose of completeness, we present a direct proof of the existence of Lindahl equilibrium, showing \cref{thm:lindahlEq} and \cref{coro:LindahlEqCoro}. We restate \cref{thm:lindahlEq} below.

\lindahl*

Following~\citep*{DBLP:conf/wine/FainGM16}, define 
$$ f_j(\mathbf{x}) = - \sum_{i \in V} \frac{\frac{\partial U_i(\mathbf{x})}{\partial x_j}}{\sum_{\ell \in C} x_{\ell} \frac{\partial U_i(\mathbf{x})}{\partial x_{\ell}}}. $$

Applying the constrained complementarity theorem of~\cite{Eaves} to this continuous function $\{f_j\}$ with the constraint $\sum_{j \in C} s_j x_j \le b$, there exists $\mathbf{x}, \mathbf{y} \ge 0$, and scalar $z \ge 0$ such that:
\begin{itemize}
    \item For all $j \in C$, $ f_j + z s_j = y_j$; and $\mathbf{x} \cdot \mathbf{y} = 0$.
    \item $\sum_{j \in C} x_j s_j \le b$, and $z ( b - \sum_{j \in C} x_j s_j) = 0$. 
\end{itemize}

In the sequel, we will focus on this solution. We will first show that $z > 0$ so that $\sum_{j \in C} s_j x_j = b$. Otherwise, for all $j \in C$, we have $f_j = y_j$. However, $f_j < 0$ since the utilities are strictly increasing, while $y_j \ge 0$ by assumption. This is a contradiction.

Let $C' = \{j \in C \mid x_j > 0\}$. For $j \in C'$, we have $y_j = 0$ so that $f_j = - z s_j$. Multiplying by $x_j$ and summing, we have
$$ \sum_{j \in C'} x_j f_j = - \sum_{i \in V} \frac{\sum_{j \in C'} x_j \frac{\partial U_i(\mathbf{x})}{\partial x_j}}{\sum_{\ell \in C} x_{\ell} \frac{\partial U_i(\mathbf{x})}{\partial x_{\ell}}} = - \sum_{i \in V} 1 =  - n = - z \sum_{j \in C'} s_j x_j = - z b.$$
Therefore, $z = n/b$.  

Set $p_{ij} = \frac{b}{n} \frac{\frac{\partial U_i(\mathbf{x})}{\partial x_j}}{\sum_{\ell \in C} x_{\ell} \frac{\partial U_i(\mathbf{x})}{\partial x_{\ell}}}$ for all $i \in V, j \in C$. Note that $\sum_i p_{ij} = - \frac{b}{n} f_j$. This implies the following:
\begin{itemize}
    \item For all $i \in V$, we have $\sum_j p_{ij} x_j = \frac{b}{n} \frac{\sum_{j \in C} x_j \frac{\partial U_i(\mathbf{x})}{\partial x_j}}{\sum_{\ell \in C} x_{\ell} \frac{\partial U_i(\mathbf{x})}{\partial x_{\ell}}} =  \frac{b}{n}$.
    \item For all $j \in C'$, since $f_j + z s_j = 0$ and $z = \frac{b}{n}$, we have $\sum_i p_{ij} = s_j$. 
    \item For all $j \in C$, since $f_j + z s_j = y_j \ge 0$, we have $\sum_i p_{ij} \le s_j$.
    \item For all $i \in V$ and for all $j,\ell \in C$, we have $p_{ij} \frac{\partial U_i}{\partial x_{\ell}} = p_{i \ell} \frac{\partial U_i}{\partial x_{j}}$.
\end{itemize}

The first three consequences, along with $\sum_{j \in C} s_j x_j = b$ prove \cref{coro:LindahlEqCoro}. By simple gradient optimality, these conditions also show \cref{item:lindhal_profit} of \cref{def:LindahlEq}. To see this, note that \cref{def:LindahlEq} does not constrain the allocation $\mathbf{y}$. Therefore, for the profit to be finite, we have $\sum_i p_{ij} - s_j \le 0$ for all $j$. Further, if any of these inequalities is strict, the profit is only larger if $x_j = 0$. Therefore, if $x_j > 0$, it must force this inequality to be tight. These are exactly the conditions we derived above, which means this solution satisfies \cref{item:lindhal_profit}. Similarly, since $U_i$ is strictly increasing and concave, the last condition derived above is the gradient optimality condition for \cref{item:lindhal_voter} in \cref{def:LindahlEq}.  To see this, note that the gradient optimality condition of \cref{item:lindhal_voter} can be written as $\frac{\partial U_i}{\partial x_j} = \lambda_i p_{ij}$ for all $i \in V, j \in C$. Therefore, $p_{ij} \frac{\partial U_i}{\partial x_{\ell}} = p_{i \ell} \frac{\partial U_i}{\partial x_{j}}$, so that any solution satisfying the latter and with $\sum_j p_{ij} x_j = \frac{b}{n}$ must satisfy \cref{item:lindhal_voter}.

\end{document}